\newcommand{\lv}[1]{#1}   
\newcommand{\sv}[1]{}
\numberwithin{subcase}{case}
\newtheorem{theorem}{Theorem}
\newtheorem{observation}{Observation}
\newtheorem{lemma}{Lemma}
\newtheorem{claim}{Claim}
\newtheorem{tlemma}{Lemma}[lemma]
\newtheorem{definition}{Definition}
\newtheorem{tdefinition}{Definition}[definition]
\newcommand{\Yes}{{\sc Yes}}
\newcommand{\No}{{\sc No}}
\newcommand{\FPT}{\text{\normalfont FPT}}
\newcommand{\W}[1][xxxx]{\text{\normalfont W[#1]}}
\newcommand{\bdcomp}{{\sc SBD Comp}}
\newcommand{\ebdcomp}{{{\sc Ext-}\bdcomp}}
\newcommand{\scite}{\cite}
\renewcommand{\uplus}{\oplus}
\newcommand{\splitclass}{$\CSP(\Gamma_1)\uplus\cdots\uplus\CSP(\Gamma_d)$}
\newcommand{\bigoh}{\mathcal{O}}
\newcommand{\NN}{\mathbb{N}}
\newcommand{\cB}{\mathcal{B}}
\newcommand{\C}{\mathbf{C}}
\newcommand{\cY}{\mathcal{Y}}
\newcommand{\cR}{\mathcal{R}}
\newcommand{\cD}{\mathcal{D}}
\newcommand{\cH}{\mathcal{H}}
\newcommand{\cP}{\mathcal{P}}
\newcommand{\cX}{\mathcal{X}}
\newcommand{\bC}{\mathbf{C}}
\newcommand{\sharpp}{$\text{\normalfont \#}$}
\newtheorem{reduction}{Preprocessing Rule}
  \title{Discovering Archipelagos of Tractability for\\ Constraint
  Satisfaction and Counting
  \thanks{Research supported by the Austrian
    Science Funds (FWF), project P26696 X-TRACT.}}}
\author{Robert Ganian\\
\small Vienna University of Technology\\[-2pt] 
\small Vienna, Austria\\
\small  rganian@gmail.com
\and 
M.S. Ramanujan\\
\small University of Bergen\\[-2pt]
\small Bergen, Norway\\
\small  ramanujan.sridharan@ii.uib.no
\and 
Stefan Szeider\\
\small Vienna University of Technology\\[-2pt] 
\small Vienna, Austria\\
\small  stefan@szeider.net
}
\date{}
\newcommand{\sharpCSP}{\text{\normalfont \#CSP}}
\begin{document}
\maketitle

\thispagestyle{empty} 
\begin{abstract}
  The Constraint Satisfaction Problem (CSP) is a central and generic
  computational problem which provides a common framework for many
  theoretical and practical applications. A central line of research
  is concerned with the identification of classes of instances for
  which CSP can be solved in polynomial time; such classes are
  often called ``islands of tractability.''  A prominent way of
  defining islands of tractability for CSP is to restrict the
  relations that may occur in the constraints to a fixed set, called a
  \emph{constraint language}, whereas a constraint language is conservative if
  it contains all unary relations.  Schaefer's famous Dichotomy
  Theorem (STOC 1978) identifies all islands of tractability in terms
  of tractable constraint languages over a Boolean domain of values.
  Since then many extensions and generalizations of this result have
  been obtained.
  Recently, Bulatov (TOCL 2011, JACM 2013) gave a full
  characterization of all islands of tractability for CSP and the
  counting version {\sharpCSP} that are defined in terms of conservative
  constraint languages.  

  This paper addresses the general limit of the mentioned tractability
  results for CSP and \sharpCSP, that they only apply to instances
  where all constraints belong to a single tractable language (in
  general, the union of two tractable languages isn't tractable).  We
  show that we can overcome this limitation as long as we keep some
  control of how constraints over the various considered tractable
  languages interact with each other. For this purpose we utilize the
  notion of a \emph{strong backdoor} of a CSP instance, as introduced
  by Williams et al.~(IJCAI 2003), which is a set of variables that
  when instantiated moves the instance to an island of tractability,
  i.e., to a tractable class of instances. We consider strong
  backdoors into \emph{scattered classes}, consisting of CSP instances
  where each connected component belongs entirely to some class from a
  list of tractable classes. Figuratively speaking, a scattered class
  constitutes an \emph{archipelago of tractability}.  The main difficulty
  lies in finding a strong backdoor of given size $k$; once it is
  found, we can try all possible instantiations of the backdoor
  variables and apply the polynomial time algorithms associated with
  the islands of tractability on the list component wise.  Our main
  result is an algorithm that, given a CSP instance with $n$
  variables, finds in time $f(k)n^{\bigoh(1)}$ a strong backdoor into
  a scattered class (associated with a list of finite conservative
  constraint languages) of size $k$ or correctly decides that there
  isn't such a backdoor.
  This also gives the running time for solving (\#)CSP, provided that
  (\#)CSP is polynomial-time tractable for the considered
  constraint languages. Our result makes significant progress towards the main goal of the backdoor-based approach to CSPs -- the identification of maximal base classes for which small backdoors can be detected efficiently. 
\end{abstract}

\pagebreak
\section{Introduction}


\newcommand{\SB}{\{\,}%
\newcommand{\SM}{\;{:}\;}%
\newcommand{\SE}{\,\}}%

\newcommand{\var}{\text{\normalfont var}}
\newcommand{\spbd}{\text{\normalfont split-bd-size}}

\newcommand{\rel}{\text{\normalfont rel}}
\newcommand{\fun}{\text{\normalfont fun}}

\newcommand{\relD}{\DDD^*}
\newcommand{\CCC}{\mathcal{C}}
\newcommand{\HHH}{\mathcal{H}}
\newcommand{\III}{\mathbf{I}}
\newcommand{\KKK}{\mathbf{K}}
\newcommand{\VVV}{\mathcal{V}}
\newcommand{\DDD}{\mathcal{D}}
\newcommand{\FFF}{\mathcal{F}}
\newcommand{\CSP}{\text{\normalfont CSP}}
\newcommand{\VCSP}{\text{\normalfont VCSP}}
\newcommand{\strongbds}{\textsc{SBD}($\CSP(\Gamma_1)\uplus\cdots\uplus\CSP(\Gamma_d)$)}

The Constraint Satisfaction Problem (CSP) is a central and generic
computational problem which provides a common framework for
many theoretical and practical applications \cite{HellNesetril08}. An
instance of CSP consists of a collection of variables that must be
assigned values subject to constraints, where each constraint is given
in terms of a relation whose tuples specify the allowed combinations
of values for specified variables. The problem was originally
formulated by Montanari \cite{Montanari74}, and has been found
equivalent to the homomorphism problem for relational
structures \cite{FederVardi98} and the problem of evaluating
conjunctive queries on databases~\cite{Kolaitis03}.  In general
CSP is NP-complete. A central line of research is concerned with
the identification of classes of instances for which CSP can be
solved in polynomial time. Such classes are often called ``islands of
tractability'' \cite{Kolaitis03,KolaitisVardi07}.  
  
A prominent way of defining islands of tractability for CSP is to
restrict the relations that may occur in the constraints to a fixed
set $\Gamma$, called a \emph{constraint language}.  A finite
constraint language is \emph{tractable} if CSP restricted to instances
using only relations from $\Gamma$, denoted $\CSP(\Gamma)$, can be
solved in polynomial time. 
Schaefer's famous Dichotomy Theorem~\cite{Schaefer78}
identifies all islands of tractability in terms of tractable
constraint languages over the two-element domain. Since then, many
extensions and generalizations of this result have been obtained
\cite{JeavonsCohenGyssens97,Creignou95,KolmogorovZivny13,ThapperZivny13}.
The Dichotomy Conjecture of Feder and Vardi \cite{FederVardi93} claims
that for every finite constraint language~$\Gamma$, $\CSP(\Gamma)$ is
either NP-complete or solvable in polynomial time.  Schaefer's
Dichotomy Theorem shows that the conjecture holds for two-element
domains; more recently, Bulatov~\cite{Bulatov06} showed the conjecture
to be true for three-element domains.
Several papers are devoted to identifying constraint languages
$\Gamma$ for which \emph{counting CSP}, denoted
$\sharpCSP(\Gamma)$, can be solved in polynomial time
\cite{Bulatov13,CreignouHermann96,BulatovDalmau07}, i.e., where the
number of satisfying assignments can be computed in polynomial time.
Such languages $\Gamma$ are called \sharpp-\emph{tractable}.

A constraint language over $\DDD$ is \emph{conservative} if it
contains all possible unary constraints over $\DDD$, and it is
\emph{semi-conservative} if it contains all possible unary constant
constraints (i.e., constraints that fix a variable to a specific
domain element).  
 These properties of  constraint languages
 are very natural, as one would expect in practical settings that the
 unary relations are present.  Indeed, some authors (e.g.,
 \cite{CooperCohenJeavons94}) even define CSP so that every variable
 can have its own set of domain values, making conservativeness a
 built-in property. 
Recently, Bulatov \cite{Bulatov11} gave a full characterization of all
tractable conservative constraint languages over finite
domains. 
Furthermore, Bulatov~\cite{Bulatov13} gave a full characterization of
all \sharpp-tractable constraint languages over finite domains.  Thus,
Bulatov's results identify all islands of (\sharpp-)tractability over
finite domains which can be defined in terms of a conservative
constraint language.
 
A general limit of tractability results for CSP and {\sharpCSP} based on
constraint languages, such as the mentioned results of Schaefer and
Bulatov, is that they only apply to instances where all constraints
belong to a single tractable language. One cannot arbitrarily combine
constraints from two or more tractable languages, as in general, the
union of two tractable languages isn't tractable (see
Section~\ref{sect:prel}).  In this paper we show that we can
overcome this limitation as long as constraints over the various considered tractable languages interact
with each other in a controlled manner. For this purpose we utilize the notion of a
\emph{strong backdoor} of a CSP instance, as introduced by Williams et
al.~\cite{WilliamsGomesSelman03}. A set $B$ of variables of a CSP
instance is a strong backdoor  into a tractable class $\HHH$ if for
all instantiations of the variables in $B$, the reduced instance
belongs to~$\HHH$. In this paper, we consider strong backdoors into a
\emph{scattered class}, denoted $\HHH_1 \uplus \cdots \uplus
\HHH_d$, consisting of all CSP instances $\III$ such that each
connected component of $\III$ belongs entirely to some class from a
list of tractable classes $\HHH_1,\dots,\HHH_d$. Figuratively
speaking, $\HHH_1 \uplus \dots \uplus \HHH_d$ constitutes an
archipelago of tractability, consisting of the islands
$\HHH_1,\dots,\HHH_d$.

\pagebreak
Our main result  is the following:
\begin{theorem} \label{thm:main} Let $\Gamma_1,\dots,\Gamma_d$ be
  semi-conservative finite constraint languages over 
  domain~$\DDD$, and let $\relD$ be the language containing all
  relations over $\DDD$.  
  If $\Gamma_1,\dots,\Gamma_d$ are tractable
  (or \sharpp-tractable), then $\CSP(\relD)$ (or $\sharpCSP(\relD)$,
  respectively) can be solved in time $2^{2^{\bigoh(k)}} \cdot
  n^{\bigoh(1)}$ for instances with $n$ variables that have a strong
  backdoor of size $k$ into $\CSP(\Gamma_1) \uplus \dots \uplus
  \CSP(\Gamma_d)$.
\end{theorem}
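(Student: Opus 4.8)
The plan is to separate the problem into two independent tasks: (i) \emph{detecting} a strong backdoor $B$ of size $k$ into the scattered class $\CSP(\Gamma_1)\uplus\cdots\uplus\CSP(\Gamma_d)$, and (ii) \emph{using} such a backdoor to solve $(\#)\CSP(\relD)$. Task (ii) is the easy direction and I would dispatch it first: given $B$ with $|B|=k$, iterate over all $|\DDD|^k$ instantiations $\tau$ of $B$; for each $\tau$, the reduced instance $\III[\tau]$ lies in the scattered class by definition, so every connected component of $\III[\tau]$ falls into some $\CSP(\Gamma_i)$. Since each $\Gamma_i$ is finite and tractable (resp.\ \sharpp-tractable), we solve each component in polynomial time (resp.\ compute its number of satisfying assignments), and combine across components by conjunction (resp.\ by multiplying the counts). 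For \sharpCSP one must be mildly careful to account for components that became trivial or for variables not touched by any surviving constraint, but this is bookkeeping. Summing over all $\tau$ (resp.\ reporting the total count) solves the instance; the cost is $|\DDD|^k \cdot n^{\bigoh(1)}$, which is absorbed into the claimed bound. So the entire weight of the theorem rests on task (i), the backdoor detection, and the $2^{2^{\bigoh(k)}}$ factor must come from there.

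\textbf{Detecting the backdoor.} Here the plan is to build the set $B$ greedily/recursively while maintaining a bounded-size set of ``obstructions'' that any backdoor must hit. The key structural fact to exploit is semi-conservativeness: because each $\Gamma_i$ contains all constant unary constraints, after instantiating a backdoor variable the induced constraints stay within the language, so a component of the reduced instance is in $\CSP(\Gamma_i)$ iff \emph{all} its constraints use relations from $\Gamma_i$ (up to the fixed-argument unary constraints) \emph{and} it stays connected in the right way. I would first preprocess: any constraint whose relation lies in \emph{no} $\Gamma_i$ is a ``bad'' constraint and all of its scope-variables-but-one must eventually be branched on, or more precisely every bad constraint forces at least one of its variables into $B$ (after instantiation a constraint of arity $r$ on fixed values becomes a $\le 1$-ary constraint, which is in every semi-conservative language). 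This gives a hitting-set style branching: while a bad constraint exists, branch on which of its $\le r = \bigoh(1)$ variables enters $B$; depth $k$, branching factor $\bigoh(1)$, so $2^{\bigoh(k)}$ leaves. After this phase every remaining constraint is in $\bigcup_i \Gamma_i$, but a component might still mix relations from $\Gamma_i$ and $\Gamma_j$ in a way that no single $\Gamma_i$ covers it — and, crucially, different instantiations of $B$ can \emph{split} a component differently, so we cannot just look at the current component structure.

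\textbf{The hard core: component-consistency over all instantiations.} The main obstacle is exactly this: a strong backdoor must work for \emph{every} instantiation $\tau \in \DDD^B$, and instantiating $B$ both removes variables and can disconnect components. I would handle this by a second, deeper branching that guesses, for the target backdoor $B$, the ``type'' of each surviving component across all instantiations — but the number of components and their interaction is controlled only after we know $B$. The resolution I expect is to bound the relevant interface: show that if no size-$k$ backdoor exists there is an obstruction of bounded size (a bounded set of variables/constraints witnessing that the instance, for some instantiation, has a component not covered by any $\Gamma_i$), and that this obstruction can be found in polynomial time; then branch on which of its $2^{\bigoh(k)}$-ish relevant variables to add. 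Iterating $k$ times and recording the accumulated guesses about component types (of which there are at most $2^{2^{\bigoh(k)}}$ many, since a ``type'' is roughly a subset of the $\le d$ languages times a bounded separator pattern) yields the double-exponential dependence. Concretely I would (a) formalize a notion of \emph{boundaried component} relative to $B$, (b) prove a finite-obstruction / well-quasi-ordering–free combinatorial lemma that a no-instance has a small witness, using semi-conservativeness to argue that instantiation cannot ``repair'' a mixed component in every branch simultaneously, and (c) feed this into a bounded-depth search tree. Step (b) — getting a genuinely bounded obstruction despite the quantification over all $|\DDD|^k$ instantiations — is where essentially all the difficulty lies, and I expect it to require the most delicate combinatorial argument in the paper; everything else is branching and the already-cited tractability of the $\CSP(\Gamma_i)$.
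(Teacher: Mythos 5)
Your task (ii) — using the backdoor — is fine and matches the paper: branch over all $|\DDD|^k$ instantiations, solve (or count on) each component independently, and combine; the paper does exactly this, with the one technical wrinkle that its detection lemma actually returns a backdoor into the larger class $\CSP(\Gamma_1^*)\uplus\cdots\uplus\CSP(\Gamma_d^*)$, where $\Gamma_i^*$ is the closure of $\Gamma_i$ under partial assignments (plus $\DDD^2$), and semi-conservativeness is what keeps $\Gamma_i^*$ tractable. Note that semi-conservativeness does \emph{not} mean the languages are closed under partial assignments, so your remark that ``after instantiating a backdoor variable the induced constraints stay within the language'' is not correct as stated; it is harmless for the use direction (the definition of a strong backdoor into the scattered class already guarantees $\III|_\alpha$ lies in it), but it matters for detection.

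The genuine gap is in task (i), which is the entire content of the theorem. Your first branching phase (a constraint whose relation survives untouched must have a backdoor variable in its scope) is the easy, folklore part. The real obstruction, which the paper isolates via \emph{forbidden sets} of up to $d$ constraints, is that a set $\bC$ can be collectively bad — no single $\Gamma_i$ accommodates all of $\bC$ under some assignment — while every individual constraint is fine; a solution then need not intersect $\var(\bC)$ at all, because it can instead \emph{disconnect} the constraints of $\bC$ into different components of $\cB_Z$. The separating variables can lie arbitrarily far from the obstruction, so there is no bounded set of candidate variables attached to the witness, and your step (b) — ``a no-instance has a small witness, then branch on its $2^{\bigoh(k)}$-ish relevant variables'' — is precisely the step that fails; you acknowledge it is where all the difficulty lies, but you do not supply the idea that makes it work. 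The paper's resolution is quite different and substantially heavier: iterative compression to obtain an old backdoor $W$ that structures the search; a pushing lemma showing that in the non-separating case one may branch over \emph{important} $v$-$W$ separators (of which there are only $4^k$); and, for separating instances, tight separator sequences combined with a pattern-replacement technique on boundaried CSP instances to prune the unbounded space of candidate separators down to a set of size $2^{2^{\bigoh(k)}}$. Without some substitute for this separator machinery, your sketch does not yield an algorithm, and your type-counting heuristic for the double-exponential bound has no supporting argument.
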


 Note that there are natural CSP
instances which have a small strong backdoor into the scattered
class  $\CSP(\Gamma_1) \uplus \dots \uplus
  \CSP(\Gamma_d)$ but require strong backdoors of arbitrarily large size
into each individual base class $\CSP(\Gamma_i)$.
The power of a strong backdoor into a scattered class over one into a
single class stems from the fact that
the instantiation of variables in the
backdoor can serve two purposes. The first is to separate constraints into
components, each belonging entirely to some $\CSP(\Gamma_i)$ (possibly even different $\CSP(\Gamma_i)$'s for different instantiations), and the second is to modify constraints so that once modified, the component containing these constraints belongs to some $\CSP(\Gamma_i)$.
 
 

When using the backdoor-based approach, the main computational difficulty is in detecting small backdoor sets into the chosen base class. This task becomes significantly harder when the base classes are made more general.
However, we show that while scattered classes are significantly more general than single base classes, we can still detect strong backdoors into such classes in FPT time. The formal statement of this result, which represents our main technical contribution, is the following.

\begin{lemma}\label{lem:main-1} There is an  algorithm that, given a CSP instance $\III$ and a parameter
$k$, runs in time $2^{2^{\bigoh(k)}} \cdot
  n^{\bigoh(1)}$ and either finds a strong backdoor of size at most $k$ in $\III$ into
$\CSP(\Gamma_1^*) \uplus \dots \uplus \CSP(\Gamma_d^*)$ or correctly
decides that none exists. 
\end{lemma}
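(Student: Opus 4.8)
The plan is to reduce membership of a candidate set $B$ in the problem to a clean combinatorial condition, and then to locate a witness by a bounded branching phase followed by a separation phase. For a set $B$ of variables and an instantiation $\tau\colon B\to\DDD$, the primal graph of $\III[\tau]$ (the graph on the variables with an edge joining any two variables that share the scope of a constraint) equals $\mathrm{primal}(\III)-B$, independently of $\tau$; call its connected components the \emph{pieces} of $\III-B$. One checks that $B$ with $|B|\le k$ is a strong backdoor into $\CSP(\Gamma_1^*)\uplus\cdots\uplus\CSP(\Gamma_d^*)$ if and only if, for every piece $P$ and every $\tau\colon B\to\DDD$, there is an $\ell\in\{1,\dots,d\}$ such that every relation occurring in a constraint of $\III[\tau]$ whose scope lies inside $P$ belongs to $\Gamma_\ell^*$. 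Since each $\Gamma_\ell^*$ contains all unary relations over $\DDD$, constraints that keep at most one non-backdoor variable are harmless here; and since the languages are finite, a set of relations fails to be contained in a single $\Gamma_\ell^*$ iff it already contains such a subset of size at most $d$. So $B$ fails to be a backdoor precisely when there is a \emph{bad configuration}: at most $d$ constraints lying in a common piece of $\III-B$, each keeping at least two non-backdoor variables, together with an instantiation of the backdoor variables on their scopes under which the resulting restricted relations are not jointly contained in any $\Gamma_\ell^*$. The goal is thus to find $B$ with $|B|\le k$ admitting no bad configuration.

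First I preprocess: if some constraint of $\III$ has arity larger than $r_0+k$, where $r_0$ is the maximum arity over $\Gamma_1^*\cup\cdots\cup\Gamma_d^*$, then after instantiating its at most $k$ backdoor variables it still has arity exceeding $r_0$ and cannot land in any $\Gamma_\ell^*$, so we reject; hence we may assume every constraint has arity $\bigoh(k)$. Call a bad configuration \emph{tight} if the scopes of its $\le d$ constraints are pairwise adjacent in the primal graph, so that they span only $\bigoh(k)$ variables. As long as the current candidate admits a surviving tight bad configuration, destroying it for all instantiations forces into $B$ at least one of a number (bounded in terms of $d$, $\DDD$ and the languages only) of minimal ``killer'' subsets drawn from the $\bigoh(k)$ variables spanned; I branch over these choices. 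Each branch adds at least one variable to $B$ and there are $2^{\bigoh(k)}$ branches, so this yields a search tree of depth at most $k$ and at most $2^{\bigoh(k^2)}$ leaves, at each of which we hold a partial backdoor $B_0$ with remaining budget $b=k-|B_0|$ admitting no tight bad configuration.

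At a leaf, the only bad configurations that can survive are ``spread'': their $\le d$ constraints share a piece of $\III-B$ only through long paths in the primal graph. To exclude these it suffices that the final backdoor separates every pairwise-incompatible pair of constraints (and, more generally, breaks every incompatible family of at most $d$ constraints) into distinct pieces, where incompatibility is taken relative to the relevant instantiation and the language assigned to a piece may legitimately depend on the instantiation. I handle this by fixing, as a guess, the ``colouring profile'' of the sought backdoor: for each of the at most $|\DDD|^{k}=2^{\bigoh(k)}$ instantiations of the size-$\le k$ backdoor I guess a bounded-size record specifying, for the constraints already committed to interact with $B_0$, which $\Gamma_\ell^*$ their pieces use — one bounded record per instantiation, hence $2^{2^{\bigoh(k)}}$ guesses, the dominant term in the running time. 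For a fixed guess the residual problem is purely one of vertex separation — extend $B_0$ by at most $b$ further variables so that prescribed tuples of constraints land in different pieces of $\mathrm{primal}(\III)$ minus the backdoor — and is solvable in $2^{\bigoh(k)}n^{\bigoh(1)}$ time by standard fixed-parameter tools for bounded-size multiway-cut-type problems (branching on important separators, or, using that the incompatibility pattern has bounded arity, a further bounded branching). Running the separation routine under every guess and returning any backdoor found completes the algorithm within the claimed bound.

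The main obstacle is exactly the non-locality of the separation phase: an incompatibility between two constraints' restricted relations forces no particular backdoor variable — only that the two constraints be pulled apart in the primal graph — and this separation requirement is entangled with the choice of which language each piece receives, a choice that may differ from one instantiation of the backdoor to another. The technical heart is to show that committing a constant amount of information per instantiation decouples the instantiations from one another and reduces the remainder to a clean, parameterized separation problem; the doubly-exponential running time is precisely the price of this per-instantiation commitment.
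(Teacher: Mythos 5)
Your reduction of backdoor membership to the absence of ``bad configurations'' is essentially the paper's Lemma~\ref{lem:equivalence} (forbidden sets of constraints), and the arity preprocessing matches Observation~\ref{obs:arity}; up to that point you are on the paper's track. The genuine gap is the final separation phase. You assert that, after guessing a bounded ``colouring profile'' per instantiation, the residual task is ``purely one of vertex separation'' solvable in $2^{\bigoh(k)}n^{\bigoh(1)}$ by ``standard fixed-parameter tools for bounded-size multiway-cut-type problems (branching on important separators)''. This is precisely the step the paper identifies as the technical heart and explicitly warns cannot be handled by important separators alone: the tuples of constraints that must be pulled apart are not fixed terminals but depend on the very backdoor being constructed (the restricted relation of a constraint depends on which of its scope variables end up in $B$, and your profile only records information for constraints ``already committed to interact with $B_0$'', saying nothing about the constraints met by the $b$ yet-to-be-chosen variables), and the requirement that every component's constraints be jointly contained in one of $d$ languages, with the language allowed to vary per instantiation, is not a fixed-terminal cut problem even after all guesses. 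The paper overcomes exactly this by first running iterative compression to obtain an old backdoor $W$ as an anchor, solving the ``non-separating'' case via Lemma~\ref{lem:pushing} and important separators, and then handling general instances with tight separator sequences, the monotonicity lemma, well-dominating separators and a pattern-replacement recursion over boundaried CSP instances (Lemma~\ref{lem:full algo}); the $2^{2^{\bigoh(k)}}$ factor comes from enumerating the replacement gadgets, not from a per-instantiation guess. Without an argument replacing this machinery, your claim that the residue is a standard multiway-cut-type problem is unsupported, and the algorithm as described is not established to be correct.

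Two smaller points. First, your branching on tight configurations forces in a full ``minimal killer subset''; but a tight configuration can also be destroyed by a solution that takes only the shared scope variables and then separates the constraints elsewhere, in which case no killer subset need be contained in $B$ -- the safe move (and the paper's move in the non-separating algorithm) is to branch on single variables of $\var(\C)\setminus S$, possibly combined with separator branching. Second, $\Gamma_\ell^*$ is only guaranteed to contain $\Gamma_\ell\cup\{\DDD^2\}$ and their restrictions, not all unary relations over $\DDD$, so unary or near-unary constraints are not automatically ``harmless''; this does not affect the main structure of your argument but the characterization should not rely on it.
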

 Here $\Gamma_i^*\supseteq
\Gamma_i$ is obtained from $\Gamma_i$ by taking the closure under
partial assignments and by adding a redundant relation.

We remark that the finitary restriction on the constraint languages is
unavoidable, since otherwise the arity of the relations or the domain
size would be unbounded. However, for unbounded arity, small
backdoors cannot be found efficiently as Lemma~\ref{lem:main-1} would
not hold already for the special case of $d=1$ unless
$\FPT=\W[2]$~\cite{GaspersMisraOrdyniakSzeiderZivny14}.  Similarly,
with unbounded domain, a small strong backdoor cannot be used
efficiently. For instance, the natural encoding of the $\W[1]$-hard
$k$-clique problem to CSP~\cite{PapadimitriouYannakakis99} only has
$k$ variables, and therefore has a size-$k$ strong backdoor to any
base class that contains the trivial constrains with empty scopes,
which is the case for any natural base class; an FPT algorithm solving
such instances would once again imply $\FPT=\W[1]$.

The following is a brief summary of the algorithm of Lemma~\ref{lem:main-1}. 
We will give a more detailed summary in Section \ref{sect:csp}.
\begin{enumerate}
\item We begin by using the technique of iterative
  compression~\cite{ReedSV04} to transform the problem into a structured subproblem which we call \textsc{Extended
    {\strongbds} Compression} ({\ebdcomp}). In this technique, the
  idea is to start with a sub-instance and a trivial solution for this
  sub-instance and iteratively expand the sub-instances while
  compressing the solutions till we solve the problem on the original
  instance. Specifically, in {\ebdcomp} we are given additional information about the desired solution in the input: we receive an ``old'' strong backdoor which is slightly bigger than our target size, along with information about how this old backdoor interacts with our target solution. This is formalized in Subsection~\ref{sub:ic}.
\item In Subsection~\ref{sub:nonsep}, we consider only solutions for {\ebdcomp} instances
  which have a certain `inseparability property' and give an FPT
  algorithm to test for the presence of such solutions. To be more precise, here we only look for solutions of {\ebdcomp} which leave the omitted part of the old strong backdoor in a single connected component.
  We handle this case separately at the beginning since it serves
  as a base case in our algorithm to solve general instances. Interestingly, even this base case requires the extension of state of the art separator techniques to a CSP setting.
\item Finally, in Subsection~\ref{sub:general} we show how to handle general instances of
  {\ebdcomp}. This part of the algorithm relies on a new \emph{pattern replacement}
  technique, which shares certain superficial similarities with
  protrusion replacement~\cite{BodlaenderFLPST09} but allows the
  preservation of a much larger set of structural properties (such as
  containment of disconnected forbidden structures and connectivity
  across the boundary). We interleave our pattern replacement procedure with the recently developed approach of `important separator
sequences'~\cite{LokshtanovR12} as well as the algorithm designed in the previous subsection for `inseparable' instances in order to solve the problem on general instances. Before we conclude the summary, we would like to point out an interesting feature of our algorithm. At its very core, it is a branching algorithm; in FPT time we identify a bounded set of variables which intersects some solution and then branch on this set. Note that this approach does not always result in an FPT-algorithm for computing strong backdoor sets. In fact, depending on the base class it might only imply an FPT-\emph{approximation} algorithm (see \cite{GaspersSzeider13}). This is because we need to explore all possible assignments for the chosen variable. However, we develop a notion of \emph{forbidden sets of constraints} which allows us to succinctly describe when a particular set is not already a solution. Therefore, when we branch on a supposed strong backdoor variable, we simply add it to a partial solution which we maintain and then we can at any point easily check whether the partial solution is already a solution or not. This is a crucial component of our FPT algorithm.

\end{enumerate}

\paragraph{Related Work}
Williams et al.~\cite{WilliamsGomesSelman03,WilliamsGomesSelman03a}
introduced the notion of \emph{backdoors} for the runtime analysis of
algorithms for CSP and SAT, see also \cite{HemaspaandraWilliams12} for
a more recent discussion of backdoors for SAT.  A backdoor is a small
set of variables whose instantiation puts the instance into a fixed
tractable class. One distinguishes between strong and weak backdoors,
where for the former all instantiations lead to an instance in the
base class, and for the latter at least one leads to a satisfiable
instance in the base class.  Backdoors have been studied under a
different name by Crama et al.~\cite{CramaEkinHammer97}. The study of
the parameterized complexity of finding small backdoors was initiated
by Nishimura et al.~\cite{NishimuraRagdeSzeider04-informal} for SAT,
who considered backdoors into the classes of Horn and Krom CNF
formulas. Further results cover the classes of renamable Horn
formulas~\cite{RazgonOSullivan09}, q-Horn
formulas~\cite{GaspersOrdyniakRamanujanSaurabhSzeider13} and classes
of formulas of bounded
treewidth~\cite{GaspersSzeider13,RamanujanLokshtanovFominSaurabhMisra15}. The
detection of backdoors for CSP has been studied for instance
in~\cite{BessiereCarbonnelHebrardKatsirelosWalsh13,CarbonnelCooperHebrard14}. Gaspers
et al.~\cite{GaspersMisraOrdyniakSzeiderZivny14} recently obtained
results on the detection of strong backdoors into \emph{heterogeneous}
base classes of the form $\CSP(\Gamma_1)\cup\dots \cup \CSP(\Gamma_d)$
where for each instantiation of the backdoor variables, the reduced
instance belongs entirely to some $\CSP(\Gamma_i)$ (possibly to
different $\CSP(\Gamma_i)$'s for different instantiations).  Our
setting is more general since $\CSP(\Gamma_1)\uplus\cdots \uplus
\CSP(\Gamma_d) \supseteq \CSP(\Gamma_1)\cup\dots \cup \CSP(\Gamma_d)$,
and the size of a smallest strong backdoor into
$\CSP(\Gamma_1)\cup\dots \cup \CSP(\Gamma_d)$ can be arbitrarily
larger than the size of a smallest strong backdoor into
$\CSP(\Gamma_1)\uplus\cdots \uplus \CSP(\Gamma_d)$.

\sv{
\smallskip
\noindent\emph{Because of space restrictions some proofs have been shortened or
omitted. The full paper is appended.}
}
\section{Preliminaries}\label{sect:prel}

\lv{
\subsection{Constraint Satisfaction}
\label{sub:CSP}
}
\sv{ \paragraph{Constraint Satisfaction}}
Let $\VVV$ be an infinite set of variables and $\DDD$ a finite set of
values.  A \emph{constraint of arity $\rho$ over $\DDD$} is a pair $(S,R)$
where $S=(x_1,\dots,x_\rho)$ is a sequence of variables from $\VVV$ and
$R \subseteq \DDD^\rho$ is a $\rho$-ary relation. The set
$\var(C)=\{x_1,\dots,x_\rho\}$ is called the \emph{scope} of $C$.  A
\emph{value assignment} (or \emph{assignment}, for short) $\alpha:X\rightarrow \DDD$ is a mapping defined
on a set $X\subseteq \VVV$ of variables. An assignment
$\alpha:X\rightarrow \DDD$ \emph{satisfies} a constraint
$C=((x_1,\dots,x_\rho),R)$ if $\var(C)\subseteq X$ and
$(\alpha(x_1),\dots,\alpha(x_\rho)) \in R$.
For a set $\III$ of constraints we write $\var(\III)=\bigcup_{C\in \III}
\var(C)$ and $\rel(\III)=\SB R \SM (S,R) \in C, C\in \III \SE$.
\lv{

}
A finite set $\III$ of constraints is \emph{satisfiable} if there
exists an assignment that simultaneously satisfies all the constraints
in $\III$.  The \emph{Constraint Satisfaction Problem} (CSP, for
short) asks, given a finite set $\III$ of constraints, whether $\III$
is satisfiable.  The \emph{Counting Constraint Satisfaction Problem}
(\sharpCSP, for short) asks, given a finite set $\III$ of constraints, to
determine the number of assignments to $\var(\III)$ that satisfy
$\III$. {\CSP} is NP-complete and {\sharpCSP} is {\sharpp}P-complete
(see, e.g., \cite{Bulatov13}).
 \lv{
 
}
Let $\alpha:X\rightarrow \DDD$ be an assignment. For a $\rho$-ary
constraint $C=(S,R)$ with $S=(x_1,\dots,x_\rho)$ we denote by $C|_\alpha$
the constraint $(S',R')$ obtained from $C$ as follows. $R'$ is
obtained from $R$ by (i) deleting all tuples $(d_1,\dots,d_\rho)$ from
$R$ for which there is some $1\leq i \leq \rho$ such that $x_i\in X$ and $\alpha(x_i)\neq
d_i$, and (ii) removing from all remaining tuples all coordinates $d_i$
with $x_i\in X$.  $S'$ is obtained from $S$ by deleting all variables
$x_i$ with $x_i\in X$.  For a set $\III$ of constraints we define
$\III|_\alpha$ as $\SB C|_\alpha \SM C \in \III \SE$.  
\lv{

}
A \emph{constraint language} (or \emph{language}, for short) $\Gamma$
over a finite domain $\DDD$ is a set $\Gamma$ of relations (of
possibly various arities) over~$\DDD$.  By $\CSP(\Gamma)$ we denote
CSP restricted to instances $\III$ with $\rel(\III)\subseteq
\Gamma$.  A constraint language $\Gamma$ is \emph{tractable} if for every
finite subset $\Gamma'\subseteq \Gamma$, the problem $\CSP(\Gamma')$
can be solved in polynomial time.  A constraint language $\Gamma$ is
\emph{\sharpp-tractable} if for every finite subset $\Gamma'\subseteq
\Gamma$, the problem $\sharpCSP(\Gamma')$ can be solved in polynomial
time.

In his seminal paper \cite{Schaefer78}, Schaefer showed that for all
constraint languages $\Gamma$ over the Boolean domain $\{0,1\}$,
$\CSP(\Gamma)$ is either NP-complete or solvable in polynomial time.
In fact, he showed that a Boolean constraint language $\Gamma$ is
tractable if and only at least one of the following properties holds
for each relation $R\in \Gamma$: (i)~$(0,\dots,0)\in R$,
(ii)~$(1,\dots,1)\in R$, (iii)~$R$ is equivalent to a conjunction of
binary clauses, (iv)~$R$ is equivalent to a conjunction of Horn
clauses, (v)~$R$ is equivalent to a conjunction of dual-Horn clauses,
and (vi)~$R$ is equivalent to a conjunction of affine
formulas; $\Gamma$ is then called 1-valid, 0-valid, bijunctive, Horn,
dual-Horn, or affine, respectively.  A Boolean language that satisfies
any of these six properties is called a \emph{Schaefer language}.  A
constraint language $\Gamma$ over domain $\DDD$ is \emph{conservative}
if $\Gamma$ contains all unary relations over $\DDD$. Except for the
somewhat trivial 0-valid and 1-valid languages, all Schaefer languages
are conservative. $\Gamma$ is \emph{semi-conservative} if it contains
all unary relations over $\DDD$ that are singletons (i.e., constraints
that fix the value of a variable to some element of $\DDD$).

A constraint language $\Gamma$ is \emph{closed under assignments} if
for every $C=(S,R)$ such that $R\in \Gamma$ and every assignment $\alpha$,
it holds that $R'\in \Gamma$ where $C|_\alpha=(S', R')$. For a constraint
language $\Gamma$ over a domain~$\DDD$ we denote by $\Gamma^*$ the
smallest constraint language over $\DDD$ that contains $\Gamma\cup
\{\DDD^2\}$ and is closed under assignments; notice that $\Gamma^*$ is
uniquely determined by $\Gamma$. Evidently, if a language $\Gamma$ is
tractable (or \sharpp-tractable, respectively) and semi-conservative,
then so is~$\Gamma^*$: first, all constraints of the form $(S,\DDD^2|_\alpha)$ can be detected in polynomial time and removed from the
instance without changing the solution, and then each constraint
$C'=(S',R')$ with $R'\in \Gamma^*\setminus \Gamma$ can be expressed in
terms of the conjunction of a constraint $C=(S,R)$ with $R\in \Gamma$
and unary constraints over variables in $\var(C)\setminus \var(C')$.

As mentioned in the introduction, the union of two tractable constraint
languages is in general not tractable. Take for instance the conservative
languages $\Gamma_1=\{\{0,1\}^3\setminus \{(1,1,1)\}\}\cup
2^{\{0,1\}}$ and $\Gamma_2=\{\{0,1\}^3\setminus \{(0,0,0)\}\}\cup
2^{\{0,1\}}$.  Using the characterization of Schaefer languages in
terms of closure properties (see, e.g.,
\cite{GopalanKolaitisManevaPapadimitriou09}), it is easy to check that
$\Gamma_1$ is Horn and has none of the five other Schaefer properties;
similarly, $\Gamma_2$ is dual-Horn and has none of the five other Schaefer
properties.  Hence, if follows by Schaefer's Theorem that
$\CSP(\Gamma_1)$ and $\CSP(\Gamma_2)$ are tractable, but
$\CSP(\Gamma_1\cup \Gamma_2)$ is NP-complete. One can find similar
examples for other pairs of Schaefer languages.
\lv{
\subsection{Parameterized Complexity}
}
\sv{\vspace{-10 pt}\paragraph{Parameterized Complexity}}
A parameterized problem $\cP$ is a problem whose instances are tuples
$(I,k)$, where $k\in \NN$ is called the parameter. We say that a
parameterized problem is \emph{fixed parameter tractable} (FPT in
short) if it can be solved by an algorithm which runs in time
$f(k)\cdot |I|^{\bigoh(1)}$ for some computable function~$f$;
algorithms with running time of this form are called FPT algorithms.
The notions of \emph{$\W[i]$-hardness} (for $i\in \NN$) are frequently
used to show that a parameterized problem is not likely to be FPT; an
FPT algorithm for a $\W[i]$-hard problem would imply that the
Exponential Time Hypothesis
fails~\cite{ChenHuangKanjXia06}. 
We refer the reader to other sources
\cite{DowneyFellows99,DowneyFellows13,FlumGrohe06} for an in-depth
introduction into parameterized complexity.

\lv{
\subsection{Backdoors, Incidence Graphs and Scattered Classes}
\label{sub:BDIGCC}}
\sv{\vspace{-10 pt}
\paragraph{Backdoors, Incidence Graphs and Scattered Classes}}
Let $\III$ be an instance of CSP over $\DDD$ and let $\HHH$ be a class of CSP instances. 
A set $B$ of variables of $\III$ is called a \emph{strong backdoor} into $\HHH$ if for every assignment $\alpha: B\rightarrow \DDD$ it holds that $\III|_\alpha\in \HHH$. Notice that if we are given a strong backdoor $B$ of size $k$ into a tractable (or \sharpp-tractable) class $\HHH$, then it is possible to solve CSP (or \sharpCSP) in time $|\DDD|^k\cdot n^{\bigoh(1)}$. It is thus natural to ask for which tractable classes we can find a small backdoor efficiently. 

\begin{center}
\vspace{-0.5cm}
  \begin{boxedminipage}[t]{0.99\textwidth}
  \begin{quote}
  \textsc{Strong Backdoor Detection into $\HHH$ (SBD($\HHH$))}\\ \nopagebreak
  \emph{Setting}: A class $\HHH$ of CSP instances over a finite domain $\DDD$.\\ \nopagebreak
  \emph{Instance}: A CSP instance $\III$ over $\DDD$ and a non-negative integer $k$.\\ \nopagebreak
  \emph{Task}: Find a strong backdoor in $\III$ into $\HHH$ of cardinality at most $k$, or
   determine that no such strong backdoor exists.\\ \nopagebreak
  \emph{Parameter}: $k$.
\end{quote}
\end{boxedminipage}
\end{center}

We remark that for any finite constraint language $\Gamma$, the
problem \textsc{SBD}(CSP($\Gamma$)) is fixed parameter tractable due
to a simple folklore branching algorithm. On the other hand,
\textsc{SBD}(CSP($\Gamma'$)) is known to be $\W[2]$-hard for a wide
range of infinite tractable constraint languages
$\Gamma'$~\cite{GaspersMisraOrdyniakSzeiderZivny14}.
Given a CSP instance $\III$, we use ${\cB}(\III)=(\var(\III)\cup
\III,E)$ to denote the \emph{incidence graph} of $\III$; specifically,
$\III$ contains an edge $\{x,Y\}$ for $x\in\var(\III)$, $Y\in \III$ if
and only if $x\in \var(Y)$. We denote this graph by $\cB$ when $\III$
is clear from the context. Furthermore, for a set $S$ of variables of
$\III$, we denote by ${\cB}_S(\III)$ the graph obtained by deleting
from $\cB(\III)$ the vertices corresponding to the variables in $S$;
we may also use $\cB_S$ in short if $\III$ is clear from the
context. We refer to Diestel's book~\cite{Diestel12} for standard
graph terminology.

Two CSP instances $\III,$ $\III'$ are \emph{variable disjoint} if $\var(\III)\cap \var(\III')=\emptyset$.
Let $\HHH_1,\dots \HHH_d$ be classes of CSP instances. Then the \emph{scattered class} $\HHH_1\uplus \cdots \uplus \HHH_d$ is the class of all CSP instances $\III$ which may be partitioned into pairwise variable disjoint sub-instances $\III_1,\dots \III_d$ such that $\III_i\in \HHH_i$ for each $i\in [d]$. Notice that this implies that $\cB(\III)$ can be partitioned into pairwise disconnected subgraphs $\cB(\III_1),\dots \cB(\III_d)$.
If $\HHH_1,\dots \HHH_d$ are tractable, then $\HHH_1\uplus \cdots \uplus \HHH_d$ is also tractable, since each $\III_i$ can be solved independently. Similarly, If $\HHH_1,\dots \HHH_d$ are \sharpp-tractable, then $\HHH_1\uplus \cdots \uplus \HHH_d$ is also \sharpp-tractable, since the number of satisfying assignments in each $\III_i$ can be computed independently and then multiplied to obtain the solution.

We conclude this section by showcasing that a strong backdoor to a scattered class can be arbitrarily smaller than a strong backdoor to any of its component classes. Consider once again the tractable languages $\Gamma_1=\{\{0,1\}^3\setminus
\{(1,1,1)\}\}\cup 2^{\{0,1\}}$ (Horn) and $\Gamma_2=\{\{0,1\}^3\setminus
\{(0,0,0)\}\}\cup 2^{\{0,1\}}$ (dual-Horn). Then for any $k\in \mathbb{N}$ one can find $\III\in\CSP(\Gamma_1)\uplus\CSP(\Gamma_2)$ such that $\III$ does not have a strong backdoor of size $k$ to either of $\CSP(\Gamma_1)$, $\CSP(\Gamma_2)$.


\section{Strong-Backdoors to Scattered Classes}
\label{sect:csp}



%

This section is dedicated to proving our main technical lemma, restated below. We would like to point out that the assumption regarding the existence of the tautological binary relation $\DDD^2$ in the languages is made purely for ease of description in the later stages of the algorithm.

\addtocounter{lemma}{-1}
\begin{lemma}
\label{lem:main}
Let $\Gamma_1,\dots \Gamma_d$ be finite languages over a finite domain $\DDD$ which are closed under partial assignments and contain $\DDD^2$. Then \textsc{SBD}$(\CSP(\Gamma_1)\uplus\cdots\uplus \CSP(\Gamma_d))$ can be solved in time $2^{2^{\bigoh(k)}}\vert\III\vert^{\bigoh(1)}$.
\end{lemma}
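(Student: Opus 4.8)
The plan is to follow the three-stage outline sketched in the introduction, reducing \textsc{SBD}$(\CSP(\Gamma_1)\uplus\cdots\uplus\CSP(\Gamma_d))$ first to a compression version and then to a separator/replacement argument. First I would apply iterative compression: process the variables of $\III$ one at a time, maintaining a strong backdoor of size at most $k+1$ for the sub-instance seen so far; whenever it has size $k+1$ we must compress it to size $k$. The compression subproblem receives an ``old'' backdoor $W$ with $|W|\le k+1$ and we look for a new backdoor $B$ with $|B|\le k$; guessing the intersection $B\cap W$ and an assignment pattern on $W\setminus B$ costs only a factor $2^{\bigoh(k)}\cdot|\DDD|^{\bigoh(k)}$, which is absorbed into $2^{2^{\bigoh(k)}}$. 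This yields the structured problem \ebdcomp, where we additionally know how the (small) leftover set $W' = W\setminus B$ should sit relative to $B$. The key definitional device here is a \emph{forbidden set of constraints}: a bounded-size witness certifying that a given partial assignment of a candidate backdoor fails to land in the scattered class (either because some component is not in any $\CSP(\Gamma_i)$, or because two parts that must be separated are still connected). Because the $\Gamma_i$ are finite and closed under partial assignments, such witnesses have size bounded by a function of $d$ and the maximum arity, which is what makes branching on them an FPT step rather than merely an FPT-approximation step.

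\textbf{The inseparable base case.}
Next I would isolate the case where the sought backdoor $B$ leaves $W'$ inside a single connected component of $\cB_B(\III)$. This is the base case and, as the authors note, it already requires lifting important-separator machinery to the CSP incidence-graph setting. Here I would argue: if $B$ is a solution keeping $W'$ together, then every ``forbidden set'' must be hit by $B$; since $W'$ stays connected, the relevant cuts in $\cB_B$ separating a forbidden structure from $W'$ are of bounded size, so one can enumerate an FPT-sized family of candidate separator vertices (via important separators / the pushing lemma), branch on including one of them in $B$ together with all its possible $\DDD$-assignments, add it to the maintained partial solution, recheck against the forbidden-set oracle, and recurse with $k$ decreased. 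The measure that drops is $k$ plus the ``distance'' of the current partial assignment from being conflict-free, giving a bounded search tree of size $2^{2^{\bigoh(k)}}$.

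\textbf{General instances via pattern replacement.}
Finally, for general \ebdcomp\ instances, $B$ may split $\III$ into many components distributed among the $\HHH_i = \CSP(\Gamma_i)$. The plan is to interleave an \emph{important separator sequence} argument with a \emph{pattern replacement} procedure: along a carefully chosen sequence of nested separators near $W'$ we identify ``pieces'' of the incidence graph whose interaction with the rest is summarized by a bounded-size \emph{pattern} (recording, for each assignment of the boundary variables and each $\Gamma_i$, whether the piece plus boundary lies in $\CSP(\Gamma_i^*)$, together with the induced connectivity across the boundary). Two pieces with the same pattern are interchangeable for the purpose of backdoor existence, so we may replace a large piece by a bounded-size representative chosen from a precomputed finite catalogue; crucially, unlike protrusion replacement this preserves disconnected forbidden structures and cross-boundary connectivity. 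After finitely many replacements the instance (restricted to the region being analysed) has bounded size, at which point either the inseparable base case applies or we branch on a bounded set of variables meeting some forbidden set and recurse with smaller parameter. Correctness of each replacement is the routine-but-delicate part; combining it with the separator-sequence bound gives the claimed running time.

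\textbf{Main obstacle.}
I expect the hardest step to be the general case: proving that the pattern of a piece is both (i) computable and of bounded size, and (ii) a \emph{complete} invariant for backdoor existence in the presence of the scattered (disconnected) structure — i.e. that replacing a piece by a same-pattern representative neither creates nor destroys a size-$k$ strong backdoor. This requires simultaneously tracking, for every boundary assignment, the membership status in each $\Gamma_i^*$ \emph{and} the connectivity pattern the piece induces among boundary variables, and showing these finitely many bits suffice even when the adversarial assignment of backdoor variables is allowed to route connections through the piece. Getting the interface between ``important separator sequences'' and ``pattern replacement'' to cohere — so that each replacement strictly decreases a well-founded measure while keeping all invariants — is where the real work lies; everything else (iterative compression, the guessing of $B\cap W$, the folklore branching once the instance is bounded) is standard.
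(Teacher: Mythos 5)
Your outline reproduces the paper's first two stages in spirit, but both contain issues and the third stage—the actual heart of the proof—is left as an acknowledged open step, so the proposal does not yet amount to a proof. On the first stage, iterating over \emph{variables} while maintaining an old backdoor of size $k+1$ does not work: when a new variable $v$ arrives, the new constraints containing $v$ are not neutralized by every assignment of $B_{i-1}\cup\{v\}$ (their reduced relations need not lie in any $\Gamma_i$, and they can merge components assigned to different languages), so $B_{i-1}\cup\{v\}$ need not be a strong backdoor of the extended sub-instance. The paper instead iterates over \emph{constraints}: after bounding all arities by $\rho+k$ (Observation~\ref{obs:arity}), the set $X_{i-1}\cup\var(C_i)$ is guaranteed to be an old backdoor, which is why the compression problem is stated with budget $k+\rho$ rather than $k+1$. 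Also, no guessing of an ``assignment pattern on $W\setminus B$'' is needed or used; one only guesses $S=B\cap X$ and requires disjointness from $W=X\setminus S$.

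The genuine gap is in your general case. You posit that a ``pattern'' consisting of, for each boundary assignment and each $i$, membership of the piece-plus-boundary in $\CSP(\Gamma_i^*)$, together with cross-boundary connectivity, is a complete invariant so that same-pattern pieces are interchangeable and can be swapped for catalogue representatives. As stated this is false for the problem at hand: the size-$k$ backdoor may need to place variables \emph{inside} the piece, and two pieces with identical membership/connectivity behaviour can require different numbers (and placements) of internal backdoor variables, so replacement can create or destroy size-$k$ solutions. You flag exactly this as the ``main obstacle,'' which means the crucial step is missing rather than proved. The paper does not take the canonical-representative route at all. It guesses the partition $(W_1,W_2)$ of $W$ induced by the unknown solution, computes a \emph{tight} $W_1$-$W_2$ separator sequence, classifies its members as $\ell$-good or $\ell$-bad by running the non-separating algorithm on the near side (this is where the connecting gadget and the Monotonicity Lemma enter), and focuses on the extremal good/bad separators $P_1,P_2$. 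The region beyond such a separator is then replaced not by a pattern-equivalent representative but by one member of an explicitly constructed family of $2^{2^{\bigoh(k)}}$ ``marked'' boundaried gadgets (built from assignments of $S\cup K^{nr}$ and $J$-forbidden sets), over which the algorithm \emph{branches}; correctness rests on the replacement preserving $\ell$-goodness, well-domination, and separation from $W$ (Lemma~\ref{lem:replacement2}), and progress is measured by a strict decrease of the separator-size budget $\lambda$, established via an incomparability argument. These ingredients—the $(W_1,W_2)$ guess, the good/bad dichotomy along a tight separator sequence, the enumeration-and-branching over replacement gadgets, and the decreasing measure—are precisely what your sketch would need to supply and currently does not.
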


Before proceeding further, we show how Lemma~\ref{lem:main} is used to prove Theorem~\ref{thm:main}.

\begin{proof}[Proof of Theorem~\ref{thm:main}]
Let $\III$ be an instance of $\CSP(\relD)$. Recalling the definition
of $\Gamma^*$, we use Lemma~\ref{lem:main} to find a strong backdoor
$X$ of size at most $k$ into $\CSP(\Gamma_1^*)\uplus\cdots\uplus
\CSP(\Gamma_d^*)$ in time
$2^{2^{\bigoh(k)}}\vert\III\vert^{\bigoh(1)}$. Since
$\CSP(\Gamma_1^*)\uplus\cdots\uplus \CSP(\Gamma_d^*)\supseteq
\CSP(\Gamma_1)\uplus\cdots\uplus \CSP(\Gamma_d)$, it follows that any
strong backdoor into $\CSP(\Gamma_1)\uplus\cdots\uplus \CSP(\Gamma_d)$
is also a strong backdoor into $\CSP(\Gamma_1^*)\uplus\cdots\uplus
\CSP(\Gamma_d^*)$. We branch over all the at most $|\DDD|^k$ assignments $\alpha:X\rightarrow \DDD$, and for each such $\alpha$ we can solve the instance $\III|_\alpha$ in polynomial time since $\CSP(\Gamma_1^*)\uplus\cdots\uplus \CSP(\Gamma_d^*)$ is tractable.

For the second case, let $\III$ be an instance of
$\sharpCSP(\relD)$. As above, we also use Lemma~\ref{lem:main} to
compute a strong backdoor $X$ into $\CSP(\Gamma_1^*)\uplus\cdots\uplus
\CSP(\Gamma_d^*)$ of size at most $k$. We then branch over all at most $|\DDD|^k$ assignments $\alpha:X\rightarrow \DDD$, and for each such $\alpha$ we can solve the {\sharpCSP} instance $\III|_\alpha$ in polynomial time since $\CSP(\Gamma_1^*)\uplus\cdots\uplus \CSP(\Gamma_d^*)$ is \sharpp-tractable; let $\text{cost}(\alpha)$ denote the number of satisfying assignments of $\III|_\alpha$ for each $\alpha$. We then output $\sum_{\alpha:X\rightarrow \DDD}\text{cost}(\alpha)$.
\end{proof}

We begin our path towards a proof of Lemma~\ref{lem:main} by stating the following assumption on the input instance, which can be guaranteed by simple preprocessing. Let $\rho$ be the maximum arity of any relation in $\Gamma_1,\dots, \Gamma_d$. 

\begin{observation}
\label{obs:arity}
Any instance $(\III',k)$ of \textsc{SBD}$(\CSP(\Gamma_1)\uplus\cdots\uplus \CSP(\Gamma_d))$ either contains only constraints of arity at most $\rho+k$, or can be correctly rejected.
\end{observation}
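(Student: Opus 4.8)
The statement to prove is Observation~\ref{obs:arity}: any instance $(\III',k)$ either has all constraints of arity at most $\rho + k$, or can be safely rejected. The key intuition is that a strong backdoor $B$ of size at most $k$ must, after \emph{every} instantiation $\alpha : B \to \DDD$, leave each constraint with a relation belonging to some $\Gamma_i$. Since each $\Gamma_i$ has maximum arity $\rho$, every constraint of $\III'|_\alpha$ must have arity at most $\rho$. A constraint $C$ of $\III'$ with $|\var(C)| > \rho + k$ retains at least $|\var(C)| - k > \rho$ variables in its scope after deleting the (at most $k$) variables of $B$, so $C|_\alpha$ has arity exceeding $\rho$. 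Thus such a $C$ cannot be ``repaired'' by any backdoor of size $k$, so if one exists, we may reject.

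\textbf{Key steps.} First I would fix an arbitrary instance $(\III', k)$ and suppose some constraint $C = (S, R) \in \III'$ has $|\var(C)| = \rho' > \rho + k$. Then I would argue: let $B$ be any candidate strong backdoor with $|B| \le k$. Pick any assignment $\alpha : B \to \DDD$ (such an $\alpha$ exists, and we can even pick one for which $C|_\alpha$ has a nonempty relation if $R$ is nonempty, but that refinement is not needed — we only need that $C|_\alpha \in \III'|_\alpha$ and has large arity). By the definition of $C|_\alpha$, the scope $S'$ of $C|_\alpha$ is obtained from $S$ by removing only those variables lying in $B$; hence $|\var(C|_\alpha)| \ge |\var(C)| - |B| \ge \rho' - k > \rho$. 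Since $B$ is a strong backdoor into $\CSP(\Gamma_1)\uplus\cdots\uplus\CSP(\Gamma_d)$, the instance $\III'|_\alpha$ lies in this scattered class, so the connected component of $\III'|_\alpha$ containing $C|_\alpha$ lies entirely in some $\CSP(\Gamma_i)$, forcing the relation of $C|_\alpha$ to be in $\Gamma_i$ and therefore to have arity at most $\rho$ — a contradiction. Hence no strong backdoor of size at most $k$ exists, and the algorithm correctly rejects. Conversely, if no constraint has arity exceeding $\rho + k$, we proceed with the main algorithm.

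\textbf{A subtlety to check.} One should be slightly careful about the corner case where the ``leftover'' relation $R'$ of $C|_\alpha$ is empty (if $\alpha$ kills all tuples). An empty relation of arity $> \rho$ is still a relation of arity $> \rho$, and it still need not belong to any $\Gamma_i$ (indeed $\Gamma_i$ only contains relations of arity at most $\rho$, whether empty or not), so the contradiction goes through regardless. Also note the role of the hypothesis that constraints may be assumed to have no repeated variables in their scope implicitly via $\var(C)$: the bound $|\var(C|_\alpha)| \ge |\var(C)| - |B|$ is what matters, and it holds because instantiation only removes scope variables that lie in $B$. I expect this argument to be entirely routine — the only ``obstacle'' is being careful that the definitions of $C|_\alpha$ and of the scattered class line up, which they do.

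\textbf{Preprocessing remark.} For completeness I would note that the check itself — scanning all constraints and comparing $|\var(C)|$ to $\rho + k$ — is polynomial in $|\III'|$, so this reduction fits within the claimed running time and can be applied once at the start; afterwards the maximum arity is bounded by $\rho + k$, a function of the parameter (since $\rho$ is a constant depending only on the fixed languages $\Gamma_1,\dots,\Gamma_d$), which is exactly what later stages of the algorithm need.
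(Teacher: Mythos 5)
Your proposal is correct and follows essentially the same route as the paper: assume a constraint $C$ of arity exceeding $\rho+k$, observe that after instantiating any candidate backdoor of size at most $k$ the reduced constraint still has arity greater than $\rho$, and conclude its relation cannot lie in any $\Gamma_i$, so no strong backdoor of size $k$ exists. If anything, your version is stated slightly more carefully than the paper's (which loosely asserts the reduced constraint keeps arity $\rho'$, whereas the right bound is $\ge \rho'-k > \rho$, exactly as you argue), and your remarks on the empty-relation corner case and the polynomial-time check are harmless additions.
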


\lv{
\begin{proof}
Assume that $\III'$ contains a constraint $C=(S,R)$ of arity $\rho'>\rho+k$. Then for every set $X$ of at most $k$ variables, there exists an assignment $\alpha:X\rightarrow \DDD$ such that $C|_\alpha$ has arity $\rho'>\rho$, and hence $C|_\alpha\not \in \CSP(\Gamma_1)\uplus\cdots\uplus \CSP(\Gamma_d)$. Hence any such $(\III',k)$ is clearly a {\No}-instance of \textsc{SBD}($\CSP(\Gamma_1)\uplus\cdots\uplus \CSP(\Gamma_d)$).
\end{proof}}

\lv{\paragraph{Organization of the rest of the section.}
The rest of this section is structured into three subsections. In Subsection~\ref{sub:ic}, we use  iterative compression to transform the \textsc{SBD} problem targeted by Lemma~\ref{lem:main} into its compressed version \ebdcomp. Subsection~\ref{sub:nonsep} develops an algorithm which correctly solves any instance of {\ebdcomp} which has a certain inseparability property. Finally, in Subsection~\ref{sub:general} we give a general algorithm for {\ebdcomp} which uses the algorithm developed in Subsection~\ref{sub:nonsep} as a subroutine.}

\subsection{Iterative compression}
\label{sub:ic}
We first describe a way to reduce the input instance of {\strongbds} to multiple (but a bounded number of) \emph{structured} instances, such that solving these instances will lead to a solution for the input instance. To do this, we use the technique of iterative compression \cite{ReedSV04}. 
\lv{Given an instance $(\III,k)$ of {\strongbds} where $\III=\{C_1,\dots,C_{m}\}$, for $i\in [m]$ we define $\bC_i=\{C_1,\dots,C_i\}$. We iterate through the instances $(\bC_i,k)$ starting from $i=1$, and for each $i$-th instance we use a \emph{known} solution $X_i$ of size at most $k+\rho$ to try to find a solution $\hat X_i$ of size at most $k$. This problem, usually referred to as the \emph{compression} problem, is the following.

  \begin{center}
  \begin{boxedminipage}[t]{0.99\textwidth}
  \begin{quote}
  \textsc{{\strongbds} Compression}\\ \nopagebreak
  \nopagebreak
  \emph{Setting}: Languages
  $\Gamma_1,\dots,\Gamma_d$ of maximum arity $\rho$ over a domain $\DDD$.\\ \nopagebreak
  \emph{Instance}: A CSP instance $\III$, a non-negative integer $k$ and a strong backdoor set $X\subseteq \var(\III)$ into $\CSP(\Gamma_1)\uplus\cdots\uplus\CSP(\Gamma_d)$ of size at most $k+\rho$.\\ \nopagebreak
  \emph{Task}: Find a strong backdoor in $\III$ into $\CSP(\Gamma_1)\uplus\cdots\uplus\CSP(\Gamma_d)$ of size at most $k$, or correctly
   determine that no such set exists.\\ \nopagebreak
  \emph{Parameter}: $k$.
\end{quote}
\end{boxedminipage}
\end{center}

\medskip

When $\Gamma_1,\dots,\Gamma_d$ are clear from the context, we abbreviate 
{\strongbds} as
\renewcommand{\strongbds}{\textsc{SBD}}
{\strongbds} and \textsc{\strongbds$(\CSP(\Gamma_1)\uplus\cdots\uplus\CSP(\Gamma_d))$ Compression} as \bdcomp. We reduce the {\strongbds} problem to $m$ instances of the {\bdcomp} problem as follows. 
Let $\III'$ be an instance of \strongbds.
The set $\var(C_1)$ is clearly a strong backdoor of size at most $\rho$ for the instance $\III_{1}=(\bC_1,k, \emptyset)$ of \bdcomp. We construct and solve a sequence of {\bdcomp} instances $\III_{2},\dots \III_{m}$ by letting $\III_i=(\bC_i,k,X_{i-1}\cup \var(C_i))$, where $ X_{i-1}$ is the solution to $\III_{i-1}$. If some such $\III_i$ is found to have no solution, then we can correctly reject for $\III'$, since $\bC_i\subseteq \III'$. On the other hand, if a solution $X_m$ is obtained for $\III_m$, then $X_m$ is also a solution for $\III'$. Since there are $m$ such iterations, the total time taken is bounded by $m$ times the time required to solve the {\bdcomp} problem. 


\paragraph{Moving from the compression problem to the extension version.}
We now show how to convert an instance of the {\bdcomp} problem into a bounded number of instances of the same problem where we may additionally assume the solution we are looking for extends part of the given strong backdoor. Formally, an instance of the {\sc Extended} {\bdcomp} problem is a tuple $(\III,k,S,W)$ where $\III$ is a CSP instance, $k$ is a non-negative integer and $W\cup S$ is a strong backdoor set of size at most $k+\rho$ into {\splitclass}. The objective here is to compute a strong backdoor into {\splitclass} of size at most $k$ which contains $S$ and is disjoint from $W$. 
\begin{center}
  \begin{boxedminipage}[t]{0.99\textwidth}
  
\begin{quote}
  \textsc{Extended {\strongbds} Compression} ({{\sc Ext-}\bdcomp})\\ \nopagebreak
  \nopagebreak
  \emph{Setting}: Languages
  $\Gamma_1,\dots,\Gamma_d$ of maximum arity $\rho$ over a domain $\DDD$.\\ \nopagebreak
  \emph{Instance}: A CSP instance $\III$, a non-negative integer $k$ and disjoint variable sets $S$ and $W$ such that $W\cup S$ is a strong backdoor set into {\splitclass} of size at most $k+\rho$.\\ \nopagebreak
  \emph{Task}: Find a strong backdoor in $\III$ into 
   {\splitclass} of size at most $k$ that extends $S$ and is disjoint from $W$, or
   determine that no such strong backdoor set exists.\\ \nopagebreak
  \emph{Parameter}: $k$.
\end{quote}
\end{boxedminipage}
\end{center}

We now reduce {\bdcomp} to $\vert X\vert  \choose {\leq k}$-many instances of {\ebdcomp}, as follows. Let $\III'=(\III,k,X)$ be an instance of {\bdcomp}. We construct $\vert X\vert  \choose {\leq k}$-many instances of {\ebdcomp} as follows. For every $S\in {X\choose {\leq k}}$, we construct the instance $\III'_S=(\III,k,S,X\setminus S)$. Clearly, the original instance $\III'$ is a {\Yes} instance of {\bdcomp} if and only if for some $S\in {X \choose {\leq k}}$, the instance $\III'_S$ is a {\Yes} instance of {\ebdcomp}. Therefore, the time to solve the instance $\III'$ is bounded by ${\vert X\vert  \choose {\leq k}}\leq 2^{k+\rho}$ times the time required to solve an instance of {\ebdcomp}. In the rest of the paper, we give an FPT algorithm to solve {\ebdcomp}, which following our discussion above implies Lemma~\ref{lem:main}.}
\sv{This allows us to reduce {\strongbds} to the {\ebdcomp} problem defined below in a way that an FPT algorithm for the latter implies one for the former.
\begin{center}
   \vspace{-0.5cm}
  \begin{boxedminipage}[t]{0.99\textwidth}
\begin{quote}
  \textsc{Extended {\strongbds} Compression} ({{\sc Ext-}\bdcomp})\\ \nopagebreak
  \nopagebreak
  \emph{Setting}: Languages
  $\Gamma_1,\dots,\Gamma_d$ of maximum arity $\rho$ over a domain $\DDD$.\\ \nopagebreak
  \emph{Instance}: A CSP instance $\III$, a non-negative integer $k$ and disjoint variable sets $S$ and $W$ such that $W\cup S$ is a strong backdoor set into {\splitclass} of size at most $k+\rho$.\\ \nopagebreak
  \emph{Task}: Find a strong backdoor in $\III$ into 
   {\splitclass} of size at most $k$ that extends $S$ and is disjoint from $W$, or
   determine that no such strong backdoor set exists.\\ \nopagebreak
  \emph{Parameter}: $k$.
\end{quote}
\end{boxedminipage}
\end{center}
In the rest of the paper, our focus will be on proving the following lemma.
}

\begin{tlemma}\label{lem:ebdcomp algo}
{\ebdcomp} can be solved in time $2^{2^{\bigoh(k)}}\vert\III\vert^{\bigoh(1)}$.
\end{tlemma}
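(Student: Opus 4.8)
The proof proceeds by the three-layered strategy announced in the introduction, with {\ebdcomp} as the central combinatorial object. First I would establish the \emph{bounded-instance} base case: given $(\III, k, S, W)$, after instantiating $S$ (which is forced into the solution) I am looking for a strong backdoor of size $\le k - |S|$ disjoint from $W$ such that, crucially, the reduced instance together with the solution-induced partition behaves well with respect to the components of $W$. The key structural notion to extract here is that of a \emph{forbidden set of constraints}: a minimal collection of constraints that, under \emph{some} assignment of the partial backdoor, cannot all be absorbed into a single $\CSP(\Gamma_i^*)$-component. Since each $\Gamma_i$ has bounded arity $\rho$ and the instance has arity $\le \rho + k$ (Observation~\ref{obs:arity}), such forbidden sets have bounded size, so they can be enumerated in FPT time; every solution must ``hit'' every forbidden set, either by instantiating a variable in its scope or by separating it in $\cB$. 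This gives the succinct obstruction certificate that lets us check in polynomial time whether a partial solution is already complete, which is exactly the feature needed to turn a branching algorithm into a genuine FPT algorithm rather than merely an FPT-approximation.

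\textbf{The inseparable case.}
Next I would handle instances in which any sought solution must leave the ``old'' backdoor remnant $W$ inside a single connected component of $\cB_X$ (where $X$ is the new backdoor). This is the content of Subsection~\ref{sub:nonsep}. The plan is to lift the theory of important separators and the Lokshtanov--Ramanujan ``important separator sequences'' to the incidence graph $\cB(\III)$, where both the constraint-vertices and the variable-vertices matter but only variable-vertices may be deleted. Concretely: fix a forbidden set $F$ not yet destroyed; for it to be destroyed without instantiating any of its (boundedly many) variables, the solution must separate $F$ into two parts lying in different components. One then argues that it suffices to consider separators that are ``important'' towards $W$'s component, bounding the number of relevant candidates by a function of $k$, and branches over them together with the bounded number of choices of which variables of $F$'s scope to instantiate and how. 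Iterating this over all currently-live forbidden sets, with the partial-solution bookkeeping above to detect termination, yields an FPT algorithm for the inseparable case.

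\textbf{The general case via pattern replacement.}
Finally, Subsection~\ref{sub:general} reduces general {\ebdcomp} instances to inseparable ones. The idea is: if some solution separates $W$ into $\ge 2$ components, pick a minimal such separation; one side is a ``chunk'' of bounded interaction with the rest. The novelty over protrusion replacement is that we cannot merely contract a low-treewidth piece — we must preserve the presence or absence of \emph{disconnected} forbidden structures and the way connectivity crosses the boundary. So I would define a \emph{pattern} of a bounded-boundary subinstance as the equivalence class under ``same behaviour with respect to all boundary assignments, all ways of intersecting the boundary with a backdoor, and all forbidden-set profiles,'' show there are only $2^{2^{\bigoh(k)}}$ patterns, and replace each piece by a bounded-size representative of its pattern. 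Interleaving this replacement with important-separator branching (to locate the pieces) and the inseparable-case algorithm (as the base of the recursion) solves general instances. The combined bound is $2^{2^{\bigoh(k)}} \cdot |\III|^{\bigoh(1)}$, matching Lemma~\ref{lem:ebdcomp algo}; via the iterative-compression reduction above this yields Lemma~\ref{lem:main} and hence Theorem~\ref{thm:main}.

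\textbf{Main obstacle.}
I expect the hard part to be the pattern-replacement step: making precise exactly which structural features of a bounded-boundary subinstance must be recorded so that (i) the number of patterns is bounded by a function of $k$ only, (ii) a bounded-size representative of each realizable pattern actually exists and is computable, and (iii) replacement is \emph{safe}, i.e., a backdoor of size $\le k$ exists before the replacement if and only if one exists after. The subtlety is that forbidden structures may straddle the replaced boundary and may themselves be disconnected, so the pattern must encode not just a ``transition function'' on boundary assignments (as in standard protrusion replacement) but also a full catalogue of how partial forbidden sets inside the piece can be completed outside it — and it must do so while keeping the backdoor budget accounted for correctly across the boundary.
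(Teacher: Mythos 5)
Your three-phase plan follows the roadmap the paper itself announces (forbidden sets of constraints as succinct certificates of non-solutions, an inseparable case solved by pushing to important $v$-$W$ separators, and a replacement-based recursion for the general case), and your first two phases do correspond to the paper's actual argument (Lemmas~\ref{lem:equivalence}, \ref{lem:pushing} and \ref{lem:type1 solution}). The genuine gap is in the general (separating) case, which is exactly where the paper's main technical work lies. You propose to ``locate the pieces'' by important-separator branching, but once one fixes the partition $(W_1,W_2)$ of $W$ (a step your sketch leaves implicit; the paper branches over all $2^{k+\rho}$ partitions and places a connecting gadget on $W_1$), the separating part of a solution is in general \emph{not} an important $W_1$-$W_2$ separator and cannot be pushed to one, since pushing towards $W_2$ destroys the property that the separator extends, within budget, to a backdoor of the $W_1$-side sub-instance; the paper explicitly warns that restricting attention to important $W_1$-$W_2$ separators is incorrect. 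What replaces this in the paper is the machinery of $\ell$-good separators, well-domination and $\ell$-important separators (Definition~\ref{def:well domination}, Lemma~\ref{lem:special separators replacement}), a \emph{tight separator sequence} (Lemma~\ref{lem:tight sep computation}) split by the Monotonicity Lemma (Lemma~\ref{lem:monotone}) into good and bad parts, and the argument in Lemma~\ref{lem:full algo} that the two extremal separators $P_1,P_2$ either intersect a solution, well-dominate its separating part, or are incomparable with it---and only in the incomparable case does the algorithm recurse, with strictly smaller order $\lambda$, which is the progress measure bounding the recursion. Neither this structure nor any substitute measure of progress appears in your plan.

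Second, the pattern-replacement step---which you yourself flag as the main obstacle---would not go through as you describe it: you would compute the ``pattern'' of the actual piece beyond the separator and swap in a representative, but that piece contains an unknown portion of the solution and unknown halves of (possibly disconnected) forbidden sets, so its pattern cannot be computed before the solution is known, and a naive equivalence over all boundary assignments, backdoor intersections and forbidden-set profiles gives no computable, budget-respecting replacement. The paper sidesteps this: Lemma~\ref{lem:replacement2} never determines the pattern of the true outside part, but instead uses a marking scheme (over assignments of the boundary plus an annotated copy of the outside solution part, and over subsets $J\subseteq[d]$ of languages) to build an explicit family $\cH$ of at most $2^{2^{\bigoh(k)}}$ boundaried gadget instances, branches over all members of $\cH$, all choices of $P_i^r\subseteq P_i$ and all gluing bijections, and proves (Claims~\ref{cl:one}--\ref{cl:four}) that for the true solution some glued instance preserves goodness, well-domination and the separation of variables from $W$. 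Supplying this replacement lemma together with the good/bad tight-sequence analysis, the outer branching over partitions of $W$, and the final assembly in Lemmas~\ref{lem:type2 solution} and \ref{lem:ebdcomp algo} is the substance of the proof, and it is missing from your proposal.
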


We first focus on solving a special case of {\ebdcomp}, and then show how this helps to solve the problem in its full generality.

\subsection{Solving non-separating instances}
\label{sub:nonsep}
In this subsection, we restrict our attention to input instances with a certain promise on the structure of a solution. We refer to these special instances as \emph{non-separating instances}. These instances are formally defined as follows.

\begin{figure}[t]
\begin{center}
\includegraphics[height=240 pt, width=350 pt]{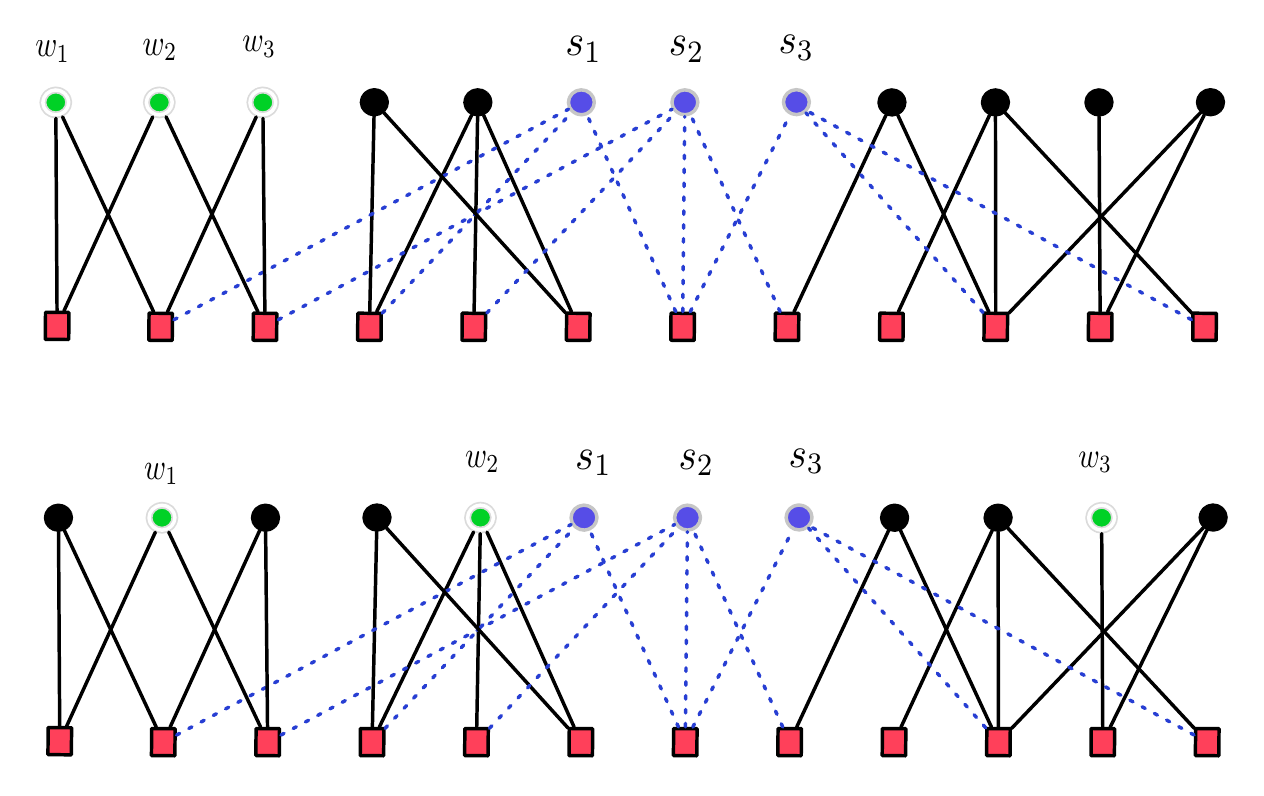}
\caption{An illustration of separating and non-separating solutions. In both cases, $S=\{s_1,s_2,s_3\}$ is the hypothetical solution under consideration while $\{w_1,w_2,w_3\}$ is the old solution. In the first figure, $S$ is a non-separating solution while in the second, it is a separating solution. }
\label{fig:non separating solution}
\end{center}
\end{figure}

\begin{definition}
Let $(\III,k,S,W)$ be an instance of {\ebdcomp} and let $Z\supseteq S$ be a solution for this instance. We call $Z$ a {\bf separating solution} (see Figure \ref{fig:non separating solution}) for this instance if $W$ is not contained in a single connected component of $\cB_Z$ and a {\bf non-separating solution} otherwise. 
An instance is called a {\bf separating instance} if it only has separating solutions and it is called a {\bf non-separating instance} otherwise.
%
%
\end{definition}

Having formally defined non-separating instances, we now give an overview of the algorithm we design to solve such instances. We begin by developing the notion of a \emph{forbidden set of constraints}. The main motivation behind the introduction of this object is that it provides us with a succinct certificate that a particular set is \emph{not} a strong backdoor of the required kind, immediately giving us a small structure which we must exclude. However, the exclusion in this context can occur not just by instantiating a variable in the scope of one of these constraints in the solution but also due to the backdoor disconnecting these constraints. This is significantly different from standard graph problems where once we have a small violating structure, a straightforward branching algorithm can be used to eliminate this structure. However, in our case, even if we have a small violating structure, it is not at all clear how such a structure can be utilized. For this, we first set up appropriate separator machinery for CSP instances. We then argue that for any forbidden set of constraints, if a variable in the scope of these constraints is not in the solution, then one of these constraints must in fact be separated from the  old strong backdoor set by the hypothetical solution. Following this, we argue that the notion of \emph{important separators} introduced by Marx \cite{Marx06} can be used to essentially narrow down the search space of separators where we must search for a solution variable. Finally, we can use a branching algorithm in this significantly pruned search space of separators in order to compute a solution (if there exists one). We reiterate that the notion of forbidden sets is critical in obtaining an FPT algorithm as opposed to an FPT-approximation algorithm.
Now that we have given a slightly more detailed overview of this subsection, we proceed to describe 
our algorithm for solving non-separating instances. We begin with the definition of forbidden constraints and then set up the separator machinery required in this as well as the next subsection.

\paragraph{Forbidden Constraints.} 
  Let $(\III,k,S,W)$ be an instance of {\ebdcomp}.

\begin{definition}
Let $S\subseteq \var(\III)$, let ${\bC}=\{C_1,\dots, C_\ell\}$ be a set of at most $d$ constraints and $J$ be a subset of $[d]$. We say that $\bC$ is $J$-{\bf forbidden with respect to $S$} if there is an assignment $\tau:S\rightarrow \DDD$ such that for every $i\in J$ there is a $t\in [\ell]$ such that $C_t[\tau]\notin \Gamma_i$. If $J=[d]$, then we simply say that $\C$ is forbidden with respect to $S$. Furthermore, we call $\tau$ an assignment \textbf{certifying} that $\C$ is $J$-forbidden (forbidden if $J=[d]$) with respect to $S$. 
\end{definition}

\lv{The following observation is a consequence of the languages being closed under partial assignments.

\begin{observation}\label{obs:local certificate}
If $\bC$ is a set of constraints forbidden with respect to a variable set $S$, then $\bC$ is also forbidden with respect to the set $S\cap \var(\bC)$. Conversely, if $\bC$ is forbidden with respect to $S$ and $S'$ is a set of variables disjoint from $\var(\bC)$, then $\bC$ is also forbidden with respect to $S\cup S'$ and with respect to $S'$.
\end{observation}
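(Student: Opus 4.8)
\textbf{Proof plan for Observation~\ref{obs:local certificate}.}
The plan is to unfold the definition of ``forbidden'' and use the fact that each $\Gamma_i$ is closed under partial assignments to move certifying assignments between variable sets. The key insight is that whether $\C$ is $J$-forbidden depends only on how a certifying assignment behaves on $\var(\C)$: a coordinate outside $\var(\C)$ cannot affect $C_t[\tau]$ for any $C_t\in\C$, and a certifying assignment can always be extended or restricted outside $\var(\C)$ (padding with arbitrary domain values, or forgetting irrelevant values) without changing any $C_t[\tau]$.

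First I would prove the forward direction. Suppose $\C$ is forbidden with respect to $S$, witnessed by $\tau:S\to\DDD$. Let $\tau' = \tau|_{S\cap\var(\C)}$. For each $C_t=(S_t,R_t)\in\C$, the restricted constraint $C_t[\tau]$ is obtained by deleting tuples disagreeing with $\tau$ on coordinates in $S$ and then projecting away the coordinates in $S\cap S_t \subseteq S\cap\var(\C)$; since $S_t\subseteq\var(\C)$, only the values $\tau$ assigns inside $\var(\C)$ play any role, so $C_t[\tau] = C_t[\tau']$. Hence for the same index witness $t\in[\ell]$ we get $C_t[\tau']\notin\Gamma_i$ for each $i\in[d]$, so $\tau'$ certifies that $\C$ is forbidden with respect to $S\cap\var(\C)$.

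Next I would prove the converse. Suppose $\C$ is forbidden with respect to $S$, witnessed by $\tau:S\to\DDD$, and let $S'$ be disjoint from $\var(\C)$. Fix an arbitrary value $d_0\in\DDD$ and define $\tau'':S\cup S'\to\DDD$ by $\tau''|_S=\tau$ and $\tau''(x)=d_0$ for $x\in S'\setminus S$. For each $C_t\in\C$, since $\var(C_t)\subseteq\var(\C)$ is disjoint from $S'$, the extra assignments on $S'$ delete no tuples and project away no coordinates of $C_t$, so $C_t[\tau'']=C_t[\tau]$; thus $\tau''$ certifies forbiddenness with respect to $S\cup S'$. For the claim with respect to $S'$ alone, note $\var(C_t)$ is disjoint from $S'$, so $C_t[\tau'\,]=C_t$ for the assignment $\tau'=\tau''|_{S'}$ and every $C_t\in\C$; on the other hand, applying the forward direction to $\C$ forbidden with respect to $S$ and then extending by $S'$ as above actually shows the cleanest route is to observe that $\C$ forbidden with respect to $S\cap\var(\C)$ combined with closure under partial assignments of each $\Gamma_i$ already forces the restriction $R_t$ of each relation itself to lie outside $\Gamma_i$ after instantiating the relevant variables, and since no variable of $\C$ lies in $S'$, the witness $\tau$ trivially lifts to a witness over $S'$ by ignoring it; I would write this out carefully by the same tuple-tracking argument.

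The only subtlety — and the step I expect to require the most care — is the bookkeeping in how $C_t[\tau]$ is defined when $S$ contains variables both inside and outside $\var(C_t)$: one must check that the two-step ``delete then project'' operation genuinely ignores coordinates the constraint does not mention, which is immediate from the definition of $C|_\alpha$ in the preliminaries but should be stated explicitly. I do not anticipate needing closure under partial assignments for the direction that drops variables, only for consistency of the statement across all the $\Gamma_i$'s simultaneously; it is essential for the converse only insofar as it guarantees that the restricted relations are themselves members of the relevant language universe, but since ``forbidden'' is about non-membership this is used merely to keep the statement meaningful. Everything reduces to the observation that $C_t[\alpha]$ depends only on $\alpha|_{\var(C_t)}$.
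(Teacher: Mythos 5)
Your forward direction and the ``$S\cup S'$'' half of the converse are correct and are exactly the intended justification: $C_t[\tau]$ depends only on the restriction of $\tau$ to $\var(C_t)$, so a certifying assignment can be restricted to $S\cap\var(\C)$ or padded arbitrarily on variables outside $\var(\C)$ without changing any reduced constraint. (The paper gives no written proof of this observation beyond noting it follows from closure under partial assignments, so the comparison is really about whether your argument is complete.)

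The gap is in the claim that $\C$ is forbidden with respect to $S'$ alone, and in your closing assessment of where closure is used. Since $S'$ is disjoint from $\var(\C)$, every assignment $\sigma:S'\rightarrow\DDD$ satisfies $C_t[\sigma]=C_t$, so forbiddenness with respect to $S'$ is equivalent to: for every $i\in[d]$ there is some $t$ with the \emph{unrestricted} relation of $C_t$ outside $\Gamma_i$. Your certificate $\tau$ lives on $S$, which may intersect $\var(\C)$, and it only gives $C_{t}[\tau]\notin\Gamma_i$; there is no ``lifting of $\tau$ to a witness over $S'$ by ignoring it'' --- a witness over $S'$ is a different assignment, and the inference you actually need is $C_{t}[\tau]\notin\Gamma_i\Rightarrow C_{t}\notin\Gamma_i$, which is precisely the contrapositive of closure under partial assignments. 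So closure is used substantively here, not ``merely to keep the statement meaningful,'' and the assertion that ``everything reduces to the observation that $C_t[\alpha]$ depends only on $\alpha|_{\var(C_t)}$'' is false for this part: if some $\Gamma_i$ were not closed (say $d=1$, $\Gamma_1=\{R\}$ with a restriction of $R$ outside $\Gamma_1$), then $\{((x,y),R)\}$ is forbidden with respect to $\{x\}$ but not with respect to $\emptyset$ or any $S'$ disjoint from $\{x,y\}$, so the step your plan relies on would fail. Your middle sentence does invoke closure, but as written it asserts that closure forces the \emph{restricted} relations outside $\Gamma_i$ --- that is the hypothesis, not the needed conclusion about the unrestricted relations (also note the quantification is ``for every $i$ some $C_t$,'' not ``each relation''). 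The repair is one line: for each $i$ pick $t_i$ with $C_{t_i}[\tau]\notin\Gamma_i$; closure of $\Gamma_i$ gives $C_{t_i}\notin\Gamma_i$; hence any assignment on $S'$ (in particular the empty one) certifies that $\C$ is forbidden with respect to $S'$, which is the reading the paper uses, e.g., in Lemma~\ref{lem:branching}.
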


}
The intuition behind the definition of forbidden sets is that it allows us to 
have succinct certificates for non-solutions. This intuition is formalized in the following lemma.

\begin{lemma}\label{lem:equivalence} Given a CSP instance $\III$, a set $X\subseteq \var(\III)$ is a strong backdoor set into {\splitclass} if and only if there is no connected component of $\cB_X$ containing a set of constraints forbidden with respect to~$X$.
\end{lemma}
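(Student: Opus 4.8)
The plan is to prove both directions of the equivalence directly from the definitions, using the fact that the languages $\Gamma_1,\dots,\Gamma_d$ are closed under partial assignments, together with the characterization of membership in a scattered class via connected components of the incidence graph. The key conceptual point is that membership of a CSP instance in $\CSP(\Gamma_1)\uplus\cdots\uplus\CSP(\Gamma_d)$ is equivalent to saying that each connected component of the incidence graph can be assigned a single index $i\in[d]$ such that all constraints in that component belong to $\Gamma_i$; a forbidden set of constraints in a single component is exactly an obstruction to the existence of such an index.

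First I would prove the contrapositive of the ``only if'' direction. Suppose some connected component $D$ of $\cB_X$ contains a set $\bC=\{C_1,\dots,C_\ell\}$ of at most $d$ constraints that is forbidden with respect to $X$, witnessed by an assignment $\tau:X\to\DDD$: for every $i\in[d]$ there is $t\in[\ell]$ with $C_t[\tau]\notin\Gamma_i$. Consider the reduced instance $\III|_\alpha$ for any assignment $\alpha:X\to\DDD$ that extends $\tau$ (on $X\cap\var(\bC)$ this is exactly $\tau$, but by Observation~\ref{obs:local certificate} forbiddenness depends only on $X\cap\var(\bC)$, so it suffices to take $\alpha = \tau$). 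The constraints $C_1|_\alpha,\dots,C_\ell|_\alpha$ all lie in one connected component of $\cB(\III|_\alpha)$ — namely the component descended from $D$ — because deleting the variables of $X$ from $\cB(\III)$ and then further reducing constraints by $\alpha$ does not merge components and does not disconnect a component that survives (reductions only remove incidences between deleted variables and constraints, which are already gone). Hence no single $\Gamma_i$ can contain all of these constraints, so this component is not in any $\CSP(\Gamma_i)$, and therefore $\III|_\alpha\notin\CSP(\Gamma_1)\uplus\cdots\uplus\CSP(\Gamma_d)$, so $X$ is not a strong backdoor.

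For the ``if'' direction, assume that for every assignment $\alpha:X\to\DDD$ there is a connected component of $\cB_X$ — equivalently of $\cB(\III|_\alpha)$ — that is not contained in any single $\Gamma_i$; we must extract a forbidden set of size at most $d$. Fix such a component $D$ and such an $\alpha$. For each $i\in[d]$ for which $D$ is not a subset of $\Gamma_i$ after reduction, pick a constraint $C^{(i)}\in D$ with $C^{(i)}|_\alpha\notin\Gamma_i$; for the remaining indices pick any constraint of $D$ (or note they contribute nothing). The collection $\bC=\{C^{(i)} : i\in[d]\}$ has at most $d$ constraints, all in the single component $D$ of $\cB_X$, and $\alpha$ restricted to $X$ certifies that $\bC$ is forbidden with respect to $X$, since for every $i$ the chosen $C^{(i)}$ reduces out of $\Gamma_i$. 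Conversely if $X$ is a strong backdoor, then for every $\alpha$ every component of $\cB(\III|_\alpha)$ lies in some $\Gamma_i$, so no component of $\cB_X$ can contain a forbidden set — giving the stated biconditional.

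The main obstacle, and the only place requiring care, is the claim that the connected components of $\cB_X$ are in natural bijection with the connected components of $\cB(\III|_\alpha)$ and that ``the constraint $C$ reduced by $\alpha$'' stays in the component of $C$ in $\cB_X$; I would want to spell out that $\var(C|_\alpha)=\var(C)\setminus X$ for an assignment defined exactly on $X$, so the incidence graph $\cB(\III|_\alpha)$ is literally $\cB_X$ with constraint-vertices relabelled (some constraints may become identical or empty-scoped, but that only collapses vertices within a component and never splits one). Once this identification is in place the rest is bookkeeping, and the finiteness of $d$ is what guarantees the forbidden set has bounded size, which is the whole point of the definition.
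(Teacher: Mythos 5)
Your proof is correct and follows exactly the route the paper leaves implicit: Lemma~\ref{lem:equivalence} is stated there without an explicit proof, as an immediate consequence of the facts you spell out, namely that $\cB(\III|_\alpha)$ is just $\cB_X$ with constraint vertices relabelled (so no component of $\cB_X$ is ever split by the reduction, since $\var(C|_\alpha)=\var(C)\setminus X$), and that membership of $\III|_\alpha$ in {\splitclass} is equivalent to every component of its incidence graph having all of its relations inside a single $\Gamma_i$, so that a forbidden set of at most $d$ constraints is precisely a choice of one witness per language inside a bad component. One minor wording correction: at the start of your ``if'' direction the hypothesis should be that \emph{some} assignment $\alpha$ yields a component not contained in any single $\Gamma_i$ (this is the negation of $X$ being a strong backdoor), not \emph{every} assignment; since your extraction argument only ever uses a single such $\alpha$, this changes nothing of substance.
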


\lv{
\begin{tlemma}\label{lem:forbidden check} Given a CSP instance $\III$ and a set $S$ of variables, we can check in time $\bigoh(\vert \cD\vert^{\vert S\vert}\cdot \vert I\vert^{\bigoh(1)})$ if there is a set of constraints forbidden with respect to $S$. 
 \end{tlemma}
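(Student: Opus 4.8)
The statement to prove is Lemma~\ref{lem:forbidden check}: given a CSP instance $\III$ and a variable set $S$, decide in time $\bigoh(\vert\DDD\vert^{\vert S\vert}\cdot\vert\III\vert^{\bigoh(1)})$ whether some set of constraints is forbidden with respect to $S$. The plan is to brute-force over all assignments $\tau:S\to\DDD$ (there are $\vert\DDD\vert^{\vert S\vert}$ of them) and, for each fixed $\tau$, reduce the question ``is there a set $\bC$ of at most $d$ constraints and an assignment extending the behaviour of $\tau$ witnessing that $\bC$ is forbidden w.r.t.\ $S$?'' to a simple selection problem over the $d$ languages. Concretely, first note that by Observation~\ref{obs:local certificate} we only care about $S\cap\var(\bC)$, so fixing $\tau$ on all of $S$ loses nothing. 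For a fixed $\tau$, compute for each constraint $C\in\III$ the residual relation $C[\tau]$ (this takes polynomial time per constraint since the arity is bounded by $\rho+k$ after the preprocessing of Observation~\ref{obs:arity}, hence $\vert\DDD\vert^{\rho+k}$ is a constant times a function of the parameter; more simply, $C[\tau]$ has size at most $\vert R\vert\le\vert\III\vert$). Then for each $C$ record the set $\mathrm{Bad}(C)=\{\,i\in[d] : C[\tau]\notin\Gamma_i\,\}\subseteq[d]$, computable by checking membership of $C[\tau]$ in each of the $d$ finite languages.

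With this data in hand, $\bC=\{C_1,\dots,C_\ell\}$ (with $\ell\le d$) is forbidden with respect to $S$ via $\tau$ precisely when $\bigcup_{t\in[\ell]}\mathrm{Bad}(C_t)=[d]$, i.e.\ the chosen constraints' $\mathrm{Bad}$-sets cover $[d]$. Since $d$ is a fixed constant, this is a set-cover instance with universe $[d]$ of constant size and at most $\vert\III\vert$ available sets, so it can be solved in polynomial time by a trivial enumeration: for each of the $2^d$ subsets $J\subseteq[d]$ we just need, for every $i\in J$, some constraint $C$ with $i\in\mathrm{Bad}(C)$, and a greedy/exhaustive search over the (constantly many) elements of $[d]$ suffices — we never need more than $d$ constraints, matching the bound $\ell\le d$ in the definition. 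If for some $\tau$ the sets $\mathrm{Bad}(C)$ over all $C\in\III$ jointly cover $[d]$, we output ``yes'' together with a witnessing $\bC$; if no $\tau$ yields a cover, we output ``no''. Correctness is immediate from unpacking the definition of $J$-forbidden with $J=[d]$, and the running time is $\vert\DDD\vert^{\vert S\vert}$ (outer loop) times $\vert\III\vert^{\bigoh(1)}$ (computing all residuals, all $\mathrm{Bad}$ sets, and the constant-size cover check), as claimed.

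The only mildly delicate points — and the closest thing to an obstacle — are bookkeeping ones rather than conceptual: one must make sure that ``at most $d$ constraints'' in the definition is exactly what a covering family needs (it is, since a minimal cover of a $d$-element universe uses at most $d$ sets), and one must confirm that residual relations $C[\tau]$ can be formed and tested for membership in each $\Gamma_i$ within the stated polynomial bound — this uses that each $\Gamma_i$ is finite and that after the arity reduction of Observation~\ref{obs:arity} relation sizes are polynomially bounded in $\vert\III\vert$. No separator machinery or branching is needed here; this lemma is purely the ``certificate-checking'' companion to Lemma~\ref{lem:equivalence}, and its proof is a direct enumeration argument.
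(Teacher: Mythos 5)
Your proof is correct and follows essentially the same route as the paper: brute-force over all $\vert\DDD\vert^{\vert S\vert}$ assignments to $S$ and then exploit finiteness of the languages and the constant bound $d$ for a polynomial-time check. The only difference is cosmetic — the paper simply enumerates all $\bigoh(\vert\III\vert^{d})$ constraint sets of size at most $d$ directly, whereas you reformulate this step as a trivial cover-of-$[d]$ check via the $\mathrm{Bad}(C)$ sets, a minor efficiency refinement rather than a different argument.
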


\begin{proof}
Clearly, it is sufficient to run over all at most $d$-sized sets of constraints and all assignments to the variables in $S$ and examine the reduced constraints if they belong to each of the languages $\Gamma_1,\dots, \Gamma_d$. Since these languages are finite, the final check can be done in time $\bigoh(1)$.
 This completes the proof of the lemma.
\end{proof}}

\lv{
\begin{tlemma}\label{lem:branching}Let $\III$ be a CSP instance and let $\C$ be a set of constraints contained in a component of $\III$ and forbidden (with respect to some set). Let $Z$ be a strong backdoor set to {\splitclass} for this instance. Then, either $Z$ disconnects $\C$ or $Z\cap \var(\C)\neq \emptyset$.
\end{tlemma}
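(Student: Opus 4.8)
The plan is to argue by contradiction. Suppose $Z$ is a strong backdoor into \splitclass for $\III$ but $Z$ neither meets $\var(\C)$ nor disconnects $\C$; that is, $Z \cap \var(\C) = \emptyset$ and all constraints of $\C$ lie in a single connected component of $\cB_Z$. Since $\C$ is forbidden with respect to \emph{some} set of variables and $Z$ is disjoint from $\var(\C)$, Observation~\ref{obs:local certificate} yields that $\C$ is forbidden with respect to $Z$. Fix an assignment $\tau \colon Z \to \DDD$ witnessing this, so that for every $i \in [d]$ there is a constraint $C \in \C$ whose reduced relation $C[\tau]$ is not in $\Gamma_i$. (Note that as $\var(C) \cap Z = \emptyset$, the reduction actually leaves each $C \in \C$ untouched, but this is not needed.)

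Now I would invoke the strong backdoor property on this particular $\tau$: $\III|_\tau \in \CSP(\Gamma_1) \uplus \cdots \uplus \CSP(\Gamma_d)$, so $\III|_\tau$ partitions into pairwise variable-disjoint sub-instances $\III_1, \dots, \III_d$ with every relation of $\III_j$ lying in $\Gamma_j$. The crux is a connectivity transfer: because $Z \cap \var(\C) = \emptyset$, every $C \in \C \subseteq \III$ occurs in $\III|_\tau$ as the constraint $C|_\tau$ with $\var(C|_\tau) = \var(C)$, and the restriction of $\cB(\III|_\tau)$ to the vertices outside $Z$ agrees with $\cB_Z$ up to relabelling constraint-vertices (and up to isolated vertices created by constraints whose scope became empty, none of which lie in $\C$). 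Since $Z$ does not disconnect $\C$, the constraints $\{\,C|_\tau : C \in \C\,\}$ all lie in one connected component $K$ of $\cB(\III|_\tau)$. As the sub-instances $\III_1, \dots, \III_d$ are pairwise variable-disjoint, their incidence graphs are pairwise disconnected, so $K$ lies entirely in a single $\III_j$. Hence $C[\tau] \in \Gamma_j$ for every $C \in \C$, contradicting the choice of $\tau$, which supplies some $C \in \C$ with $C[\tau] \notin \Gamma_j$. Therefore $Z$ disconnects $\C$ or $Z \cap \var(\C) \neq \emptyset$.

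I expect the only delicate point to be the connectivity transfer in the second paragraph: one must verify that reducing $\III$ by $\tau$ does not alter the adjacency relationships among the constraints of $\C$ and the variables outside $Z$, and handle degenerate effects of the reduction (empty scopes, constraints that coincide after reduction). Disjointness of $Z$ from $\var(\C)$ is exactly what makes this work, since it guarantees each constraint of $\C$ survives the reduction verbatim; and the possibly-degenerate constraints can only appear as isolated vertices of $\cB(\III|_\tau)$ and so never interfere with the component of $\C$.
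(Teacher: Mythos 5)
Your proof is correct and follows essentially the same route as the paper's: assume $Z$ neither meets $\var(\C)$ nor disconnects it, use Observation~\ref{obs:local certificate} to conclude that $\C$ is forbidden with respect to $Z$, and derive a contradiction with $Z$ being a strong backdoor. The only difference is that where the paper simply invokes Lemma~\ref{lem:equivalence} at this point, you re-derive the relevant direction of that lemma inline (unfolding the definition of the scattered class and checking that connectivity of $\C$ in $\cB_Z$ transfers to $\cB(\III|_\tau)$), which is a correct, self-contained, but slightly longer path to the same contradiction.
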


\begin{proof}
Suppose that $\C$ occurs in a single component of $\cB_Z$ and $Z$ is disjoint from $\var(\C)$. 
By Observation \ref{obs:local certificate},
 we have that $\C$ is also forbidden with respect to 
any set of variables disjoint from $\var(\C)$, and in particular with respect to $Z$. By Lemma \ref{lem:equivalence}, this
contradicts our assumption that $Z$ is a solution. This completes the proof of the lemma.
\end{proof}}

\paragraph{Separators in CSP instances.} For a variable set $X$, we denote by $\C(X)$ the set of all constraints whose scope has a non-empty intersection with $X$ (equivalently, $\C(X)$ contains the neighbors of $X$ in $\cB$).

\begin{definition}Let $\III$ be an instance of CSP. Let $X,S\subseteq \var(\III)$ be disjoint sets of variables, where $X\cup \C(X)$ is connected in $\cB=\cB(\III)$. We denote by $R_{\cB}(X,S)$ the set of variables and constraints which lie in the component containing $X$ in $\cB- S$ and by $R_{\cB}[X,S]$ the set $R_{\cB}(X,S)\cup S$. Similarly, we denote by $NR_{\cB}(X,S)$ the set $(\var(\III)\cup \III) \setminus R_{\cB}[X,S]$ and by $NR_{\cB}[X,S]$ the set $NR_{\cB}(X,S)\cup S$. We drop the subscript $\cB$ if it is clear from the context.\end{definition}

\begin{definition}\label{vertex separator} Let $\III$ be an instance of CSP and let $\cB=\cB(\III)$. Let $X$ and $Y$ be disjoint variable sets. 
\begin{itemize}
\setlength\itemsep{0em}
\item A variable set $S$ disjoint from $X$ and $Y$ is said to \textbf{disconnect} $X$ and $Y$ (in $\cB$) if $R_{\cB}(X,S)\cap Y=\emptyset$. 
\item If $X\cup \C(X)$ is connected, and $S$ disconnects $X$ and $Y$, then we say that $S$
is an $X$-$Y$ \textbf{separator} (in $\cB$). 
\item An $X$-$Y$ separator is said to be \textbf{minimal} if none of its proper subsets is an $X$-$Y$ separator. 
\item An $X$-$Y$ separator $S_1$ is said to \textbf{cover} an $X$-$Y$ separator $S$ with respect to $X$ if $R(X,S_1)\supset R(X,S)$. If the set $X$ is clear from the context, we just say that $S_1$ covers $S$. 

\item Two $X$-$Y$ separators $S$ and $S_1$ are said to be \textbf{incomparable} if neither covers the other.

\item In a set $\cH$ of $X$-$Y$ separators, a separator $S$ is said to be \textbf{component-maximal} if there is no separator $S'$ in $\cH$ which covers $S$.
Component-minimality is defined analogously.

\item An $X$-$Y$ separator $S_1$ is said to \textbf{dominate} an $X$-$Y$ separator $S$ with respect to $X$ if $\vert S_1\vert \leq \vert S\vert$ and $S_1$ covers $S$ with respect to $X$. If the set $X$ is clear from the context, we just say that $S_1$ dominates $S$. 

 \item We say that $S$ is an \textbf{important} $X$-$Y$ separator if it is minimal and there is no $X$-$Y$ separator dominating $S$ with respect to $X$. 
 \end{itemize}
\end{definition}

Note that we require separators to only occur in the variable set of $\cB$, as is reflected in the definitions above; this differs from the standard graph setting in \cite{Marx06,ChenLL09}. However, it is known that their results also carry over to this more general setting. Specifically, for any graph $G=(V,E)$ and any $A,X,Y\subseteq V$, it is possible to construct a supergraph $G'\supseteq G$ such that the set of all important $X$-$Y$ separators in $G$ of size $k$ which are disjoint from $A$ is exactly the set of all important $X$-$Y$ separators of size $k$ in $G'$ (it suffices to make $k+1$ copies of each vertex in $A$). This allows the direct translation of the following results into our setting. 
\sv{\enlargethispage*{5mm}}
Lemma~\ref{lem:number of imp seps}, which is implicit in~\cite{ChenLL09}, plays a crucial role in our algorithm to compute non-separating solutions.

\begin{lemma}\label{lem:number of imp seps}\label{bound undirected}{\rm \scite{ChenLL09}} For every $k\geq 0$ there are at most $4^k$ important $X$-$Y$ separators of size at most $k$. Furthermore, there is an algorithm that runs in time $\bigoh(4^kk\vert \III\vert)$ which enumerates all such important $X$-$Y$ separators, and there is an algorithm that runs in time $|\III|^{\bigoh(1)}$ which outputs one arbitrary component-maximal $X$-$Y$ separator.
\end{lemma}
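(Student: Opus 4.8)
The plan is to obtain all three assertions from the established theory of \emph{important separators} in ordinary undirected graphs — Marx~\cite{Marx06} for the $4^k$ bound, and Chen, Liu and Lu~\cite{ChenLL09} for the refined count, the enumeration algorithm, and the component-maximal separator — by transporting everything through the vertex-copying reduction already indicated above. Given $\cB=\cB(\III)$ and disjoint variable sets $X,Y$ with $X\cup\C(X)$ connected, recall that here an $X$-$Y$ separator is by definition a set of \emph{variables}. I would therefore build an auxiliary undirected graph $G\supseteq\cB$ by replacing every constraint vertex of $\cB$ (and, in the applications where certain variables are forbidden from a separator, every such variable) by $k+1$ pairwise non-adjacent twin copies with the same neighbourhood. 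Then $|V(G)|=\bigoh((k+1)\,|\III|)$, and a short exchange argument shows that for vertex sets of size at most $k$ the properties ``$X$-$Y$ separator'', ``minimal $X$-$Y$ separator'' and the covering relation ``$S_1$ covers $S$ with respect to $X$'' coincide in $\cB$ and in $G$; consequently the important $X$-$Y$ separators of size at most $k$ coincide as well. It therefore suffices to prove the three assertions for $G$, where they are exactly the cited results.

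For completeness I would recall the shape of these arguments. For the $4^k$ bound one uses the standard recursive branching: if the size $\lambda$ of a minimum $X$-$Y$ separator exceeds $k$ there is nothing to prove; otherwise, by submodularity of the cut function there is a unique minimum $X$-$Y$ separator $R$ whose reachable set $R_{\cB}(X,R)$ is inclusion-minimal, found by a single max-flow computation; fixing any $v\in R$, every important $X$-$Y$ separator $S$ of size at most $k$ either contains $v$, in which case $S\setminus\{v\}$ is an important $X$-$Y$ separator of size at most $k-1$ in $G-v$, or omits $v$, in which case $S$ is an important $(X\cup\{v\})$-$Y$ separator; the analysis of this recursion in~\cite{Marx06,ChenLL09} bounds the number of important $X$-$Y$ separators of size at most $k$ by $4^k$. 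Carrying out the same branching explicitly, running the $\bigoh(\lambda\,|V(G)|)$ max-flow computation at each of the $\bigoh(4^k)$ recursion nodes and spending $\bigoh(k)$ additional work per node, enumerates all of them in time $\bigoh(4^k k\,|\III|)$; pulling back along the reduction costs only the polynomial factor already absorbed into $|V(G)|$.

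For the last assertion I would use the polynomial iterated-max-flow routine of~\cite{ChenLL09}: start from any minimal $X$-$Y$ separator (e.g.\ a minimum one, obtained by max-flow) and repeatedly test, for a vertex $w$ of the current separator, whether there is an $X$-$Y$ separator placing $w$ on the $X$-side — again a max-flow query; whenever such a separator exists it strictly enlarges the reachable set, so after at most $|V(G)|$ iterations one reaches an $X$-$Y$ separator that no $X$-$Y$ separator covers, i.e.\ a component-maximal one, and this is output in time $|\III|^{\bigoh(1)}$.

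The one genuinely non-mechanical point is justifying that the vertex-copying reduction preserves \emph{minimality} and the \emph{covering} order — and hence importance — for all vertex sets of size at most $k$: the multiplicity $k+1$ guarantees that no separator of size at most $k$ can include a copied vertex, while inclusion-minimal $X$-reachable sets in $G$ restrict to such sets in $\cB$ and conversely, so the posets of ``small'' minimal $X$-$Y$ separators under the covering relation are isomorphic. Once this bridge is in place, the remaining work is a direct appeal to~\cite{Marx06,ChenLL09} together with the standard submodularity of cuts.
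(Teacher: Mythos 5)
Your proposal is correct and matches the paper's treatment: the paper does not prove this lemma itself but cites it from Chen, Liu and Lu, handling the restriction that separators consist only of variables via exactly the same $(k+1)$-fold vertex-copying reduction you describe in the paragraph preceding the lemma. Your additional recap of the submodularity/branching argument for the $4^k$ bound and the iterated max-flow routine for a component-maximal separator is accurate but goes beyond what the paper includes.
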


\lv{
\begin{observation}\label{obs:dominating separator}Let $S_1$ and $S_2$ be two minimal $X$-$Y$ separators where $S_2$ dominates $S_1$. Then, $S_2$ disconnects $(S_1\setminus S_2)$ and $Y$.
\end{observation}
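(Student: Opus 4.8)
The statement to prove is: if $S_1$ and $S_2$ are two minimal $X$-$Y$ separators where $S_2$ dominates $S_1$ (i.e. $|S_2| \le |S_1|$ and $R(X,S_2) \supset R(X,S_1)$), then $S_2$ disconnects $(S_1 \setminus S_2)$ and $Y$. Let me sketch how I would argue this.

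The plan is to work directly with the reachability sets. Since $S_2$ covers $S_1$ with respect to $X$, we have $R(X,S_1) \subsetneq R(X,S_2)$, hence $R[X,S_1] \subseteq R[X,S_2]$; in particular every vertex of $S_1$ either lies in $R(X,S_2)$ or lies in $S_2$ itself (it cannot lie in $NR(X,S_2)$, since $S_1 \subseteq R[X,S_1] \subseteq R[X,S_2]$ and $R[X,S_2] \cap NR(X,S_2) = S_2$). Therefore every variable $v \in S_1 \setminus S_2$ satisfies $v \in R(X,S_2)$, i.e. $v$ lies in the same component of $\cB - S_2$ as $X$.

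First I would establish the key containment $R[X,S_1] \subseteq R[X,S_2]$ from the covering hypothesis, and deduce as above that $S_1 \setminus S_2 \subseteq R(X,S_2)$. Then I would argue by contradiction: suppose $S_2$ does \emph{not} disconnect $(S_1 \setminus S_2)$ and $Y$. Since each $v \in S_1\setminus S_2$ is already in the component $R(X,S_2)$ of $\cB - S_2$, failing to disconnect means there is a path in $\cB - S_2$ from some such $v$ (equivalently from $X$, since $v$ is reachable from $X$ in $\cB - S_2$) to $Y$ avoiding $S_2$ — but that contradicts $S_2$ being an $X$-$Y$ separator. Wait: this only works if the path can be rerouted through $X$. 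Let me be careful: $S_2$ disconnects $(S_1\setminus S_2)$ and $Y$ by definition means $R_\cB(S_1\setminus S_2, S_2) \cap Y = \emptyset$. Since every vertex of $S_1\setminus S_2$ lies in $R(X,S_2)$, and $R(X,S_2)$ is a connected component of $\cB - S_2$, we have $R_\cB(S_1\setminus S_2, S_2) = R(X,S_2)$ (the component containing $X$), which is disjoint from $Y$ because $S_2$ is an $X$-$Y$ separator. Hence $S_2$ does disconnect them, as required.

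The only subtle point — and the step I expect to need the most care — is the claim that no vertex of $S_1$ lands in $NR(X,S_2)$, which is what forces $S_1 \setminus S_2 \subseteq R(X,S_2)$. This relies on $R[X,S_1] \subseteq R[X,S_2]$ together with the fact that $R[X,S_2]$ and $NR(X,S_2)$ partition $\var(\III)\cup\III$ with overlap exactly $S_2$; so a vertex of $S_1$ outside $S_2$ and inside $R[X,S_1]$ must be in $R(X,S_2)$. I would also note that minimality of $S_1$ and $S_2$ and the hypothesis $|S_2| \le |S_1|$ are not actually needed for this particular implication — only the covering relation is used — but I would keep the statement as phrased since that is how it is invoked later. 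One should double-check that $R(X,S_2)$ being connected and containing $S_1\setminus S_2$ indeed yields $R_\cB(S_1\setminus S_2,S_2) \subseteq R(X,S_2)$: any vertex reachable from $S_1\setminus S_2$ in $\cB - S_2$ is reachable from a vertex of the component $R(X,S_2)$, hence lies in that same component. This completes the argument.
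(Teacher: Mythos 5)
There is a genuine gap at exactly the step you flagged as the subtle one: the inference from $R(X,S_1)\subseteq R(X,S_2)$ to $R[X,S_1]\subseteq R[X,S_2]$. Since $R[X,S_1]=R(X,S_1)\cup S_1$, that containment \emph{is} the claim that $S_1\setminus S_2\subseteq R(X,S_2)\cup S_2$, and your justification for it (``a vertex of $S_1$ cannot lie in $NR(X,S_2)$ because $S_1\subseteq R[X,S_1]\subseteq R[X,S_2]$'') invokes the very containment being proved, so the argument is circular. Worse, the containment really does not follow from the covering relation alone, so your remark that minimality is not needed is false. Counterexample (in graph terms; it is easily realized as an incidence graph by placing binary constraints on the edges): take the path $x-a-c-y$ plus a vertex $b$ adjacent only to $y$, with $X=\{x\}$, $Y=\{y\}$. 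Then $S_1=\{a,b\}$ is an $X$-$Y$ separator (not minimal), $S_2=\{c\}$ is a minimal $X$-$Y$ separator with $|S_2|\le|S_1|$ and $R(X,S_2)\supsetneq R(X,S_1)$, yet $b\in S_1\setminus S_2$ reaches $Y$ in $\cB-S_2$. So covering alone cannot give $S_1\setminus S_2\subseteq R(X,S_2)$; the hypothesis that $S_1$ is a \emph{minimal} separator is what makes the observation true.

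The repair is short. Because $S_1$ is a minimal $X$-$Y$ separator, every $v\in S_1$ has a neighbour in $R(X,S_1)$: otherwise no path from $X$ in $\cB-(S_1\setminus\{v\})$ could ever reach $v$, so $R(X,S_1\setminus\{v\})=R(X,S_1)$ and $S_1\setminus\{v\}$ would still be an $X$-$Y$ separator, contradicting minimality. Hence each $v\in S_1\setminus S_2$ has a neighbour in $R(X,S_1)\subseteq R(X,S_2)$ and, not being in $S_2$, lies in the component $R(X,S_2)$ of $\cB-S_2$. From that point on your concluding argument is correct: everything reachable from $S_1\setminus S_2$ in $\cB-S_2$ stays inside $R(X,S_2)$, which is disjoint from $Y$ because $S_2$ disconnects $X$ and $Y$. (You are right that minimality of $S_2$ and the size condition $|S_2|\le|S_1|$ are not used here; minimality of $S_1$, however, is essential.)
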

}

We now proceed to the description of our algorithm to solve non-separating instances.
\paragraph{Computing non-separating solutions.} We begin with the following preprocessing rule which can be applied irrespectively of the existence of a non-separating solution.

\begin{reduction}\label{red:irrelevant}
Let $(\III,k,S,W)$ be an instance of {\ebdcomp}. If a connected component of $\cB_S$ does not contain a set of constraints forbidden with respect to $S$ then remove the constraints and variables contained in this connected component.
\end{reduction}

\lv{
\begin{tlemma}\label{lem:connectedtow}
The above preprocessing rule is correct and can be applied in time $\bigoh(\vert \cD\vert^{\vert S\vert} \vert \III\vert^{\bigoh(1)} )$. Furthermore, if the rule does not apply, every connected component of $\cB_S$ intersects $W$.
\end{tlemma}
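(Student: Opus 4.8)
The plan is to prove three things: (a) the rule preserves the set of solutions to the {\ebdcomp} instance; (b) it can be applied within the stated time bound; and (c) once the rule is exhaustively applied, every connected component of $\cB_S$ intersects $W$. I would treat these in order, since (c) is essentially a corollary of the exhaustive application together with the hypothesis that $W\cup S$ is a strong backdoor of the whole instance.

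First, correctness. Let $D$ be the set of variables and constraints in a connected component $K$ of $\cB_S$ that contains \emph{no} set of constraints forbidden with respect to $S$, and let $\III'$ be the instance obtained by deleting $D$. I would show that a set $Z$ with $S\subseteq Z$, $Z\cap W=\emptyset$, $|Z|\le k$ is a solution for $(\III,k,S,W)$ if and only if $Z\setminus D$ is a solution for $(\III',k,S,W)$. The key point is Lemma~\ref{lem:equivalence}: $Z$ is a strong backdoor into {\splitclass} iff no connected component of $\cB_Z$ contains a forbidden (with respect to $Z$) set of constraints. Since $D$ forms a union of connected components of $\cB_S$ and $S\subseteq Z$, the graph $\cB_Z$ restricted to $D$ is exactly $\cB_S$ restricted to $D$ minus the vertices of $Z\cap D$; in particular every connected component of $\cB_Z$ is either entirely inside $D$ or entirely inside $(\var(\III)\cup\III)\setminus D$. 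For components inside $D$: any set $\bC$ of constraints there is contained in $K$, hence (as $K$ has no set of constraints forbidden w.r.t. $S$) is not forbidden w.r.t. $S$, and by Observation~\ref{obs:local certificate} (closure under partial assignments, passing to $S\cap\var(\bC)$ and back up to $Z$) it is not forbidden w.r.t. $Z$ either. So no forbidden configuration can ever appear inside $D$, for any candidate solution containing $S$; deleting $D$ therefore changes neither the existence nor the validity of solutions, and moreover no minimum solution needs to place a variable in $D$. This gives the equivalence in both directions.

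Second, running time. Checking whether a given connected component of $\cB_S$ contains a set of constraints forbidden with respect to $S$ is done exactly as in Lemma~\ref{lem:forbidden check}: enumerate all $\bigoh(|\III|^d)$ sets of at most $d$ constraints lying in that component, and for each, run over all $|\DDD|^{|S|}$ assignments $\tau:S\to\DDD$, checking in $\bigoh(1)$ time (finiteness of the languages) whether for every $i\in[d]$ some reduced constraint lies outside $\Gamma_i$. Computing the connected components of $\cB_S$ and doing this test once per component costs $\bigoh(|\DDD|^{|S|}\cdot|\III|^{\bigoh(1)})$ in total, and since each application strictly decreases $|\var(\III)\cup\III|$, exhaustive application stays within the same bound up to a polynomial factor.

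Third, the structural consequence. Suppose the rule no longer applies and let $K$ be a connected component of $\cB_S$ with $\var(K)\cap W=\emptyset$. Then $K$ contains a set $\bC$ of constraints forbidden with respect to $S$. Now consider the assignment side: $W\cup S$ is a strong backdoor of $\III$, so for \emph{every} assignment $\alpha:W\cup S\to\DDD$, $\III|_\alpha\in\,${\splitclass}; restricting attention to $\bC$, which lies in one component of $\cB_S$ disjoint from $W$, the component of $\cB_{W\cup S}$ containing $\bC$ is the same as the component of $\cB_S$ containing it, so by Lemma~\ref{lem:equivalence} applied to the backdoor $W\cup S$, $\bC$ is not forbidden with respect to $W\cup S$ — but by Observation~\ref{obs:local certificate} (again using disjointness of $W$ from $\var(\bC)$ and $\var(\bC)\subseteq\var(K)$), being forbidden w.r.t. $S$ is equivalent to being forbidden w.r.t. $S\cap\var(\bC)$, hence w.r.t. $W\cup S$, a contradiction. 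Therefore every connected component of $\cB_S$ must intersect $W$.

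The main obstacle I anticipate is nailing down part (a) cleanly: one must be careful that deleting $D$ does not destroy a solution that happened to use variables of $D$, i.e., one needs the stronger statement that no \emph{minimal} solution ever wants a variable in $D$ (so that the $|Z|\le k$ budget is genuinely unaffected), and this is exactly where the "$D$ can never host a forbidden configuration" argument, via Observation~\ref{obs:local certificate}, does the work. Everything else is bookkeeping about connected components of incidence graphs under vertex deletion.
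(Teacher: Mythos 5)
Your proof is correct and follows essentially the same route as the paper: correctness of the rule via Lemma~\ref{lem:equivalence} and Observation~\ref{obs:local certificate}, the time bound via the forbidden-set check of Lemma~\ref{lem:forbidden check}, and the final statement by the same contradiction with $W\cup S$ being a strong backdoor for $\III$. The only delicate spot is your claim that no forbidden set can arise inside the deleted component with respect to a candidate solution $Z$ that may itself contain variables of that component: Observation~\ref{obs:local certificate} alone does not let you ``go back up to $Z$'' when $Z\cap\var(\bC)\not\subseteq S$, and one needs the certificate-restriction argument (forbiddenness w.r.t.\ a superset implies forbiddenness w.r.t.\ any subset, by closure under partial assignments) --- a fact you do implicitly invoke, and which the paper sidesteps by only considering solutions of the reduced instance, which are automatically disjoint from the deleted component.
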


\begin{proof}
It follows from Lemma \ref{lem:forbidden check} that the rule can be applied in time $\bigoh(\vert \cD\vert^{\vert S\vert} \vert\III\vert^{\bigoh(1)})$. We now argue the correctness of the rule. Let $\cX$ be the component of $\cB_S$ removed by an application of the reduction rule and let $(\III',k,S,W')$ be the resulting reduced instance of {\ebdcomp}. Since $\III'$ is an induced sub-instance of $\III$, any solution $Z$ for $(\III,k,S,W)$ also represents a solution $Z\setminus \cX$ for $(\III',k,S,W')$. For the converse direction, consider a solution $Z'\supseteq S$ of $(\III',k,S,W')$. By Lemma~\ref{lem:equivalence}, this implies that there is no component in $\cB(\III')_{Z'}$ which contains a set of constraints forbidden with respect to $Z'$. Since $\cX$ also contains no set of constraints forbidden with respect to $Z'$, it follows that $Z'$ is also a solution for $(\III,k,S,W)$.
This completes the proof of correctness of the preprocessing rule.
 We now prove the final statement of the lemma.

Suppose that the rule is no longer applicable and there is a component $\cX$ of $\cB_S$ disjoint from $W$. Since the rule is not applicable, there is a set $\C$ of constraints in $\cX$ forbidden with respect to $S$. Since $\cX$ is disjoint from $W$, we have that $\C$ is contained in a single component of $\cB_{W\cup S}$. Furthermore, since $W\cap \var(\C)=\emptyset$, it must be the case that $\C$ is also forbidden with respect to $W\cup S$ (by Observation \ref{obs:local certificate}), a contradiction to the assumption that $W\cup S$ is a strong backdoor set for the given CSP instance. This completes the proof of the lemma.
\end{proof}}

For the following, we slightly expand the defined notions of \emph{separators} and in particular \emph{important separators} by specifying a subgraph of $\cB$ which these will operate in.

\begin{tlemma}\label{lem:pushing}
Let $(\III,k,S,W)$ be an instance of {\ebdcomp} and let $Z$ be a non-separating solution for this instance. Furthermore, let $v$ be a variable such that $Z$ disconnects $v$ and $W$. Then there is a solution $Z'$ which  contains an important $v$-$W$ separator of size at most $k$ in $\cB_S$.
\end{tlemma}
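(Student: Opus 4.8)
The plan is to take an arbitrary non-separating solution $Z\supseteq S$ and a variable $v$ with $Z$ disconnecting $v$ from $W$ in $\cB_S$, and to "push" the part of $Z$ that lies between $v$ and $W$ outward to an important separator, without destroying the property of being a solution. Concretely, let $S_Z$ be the (unique) minimal $v$-$W$ separator contained in $Z$ inside $\cB_S$; such a set exists because $Z\setminus S$ disconnects $v$ from $W$ in $\cB_S$, so some minimal subset of it does, and it has size at most $k-|S|\le k$. If $S_Z$ is already important in $\cB_S$ we are done, so suppose there is a $v$-$W$ separator $S^*$ in $\cB_S$ dominating $S_Z$ with respect to $v$, i.e. $|S^*|\le|S_Z|$ and $R_{\cB_S}(v,S^*)\supsetneq R_{\cB_S}(v,S_Z)$; we may choose $S^*$ to be an important separator. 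Define the replacement solution $Z' := (Z\setminus S_Z)\cup S^*$. Then $|Z'|\le|Z|\le k$, $S\subseteq Z'$ (since $S^*$ is disjoint from $S$ by definition of separators operating in $\cB_S$, and $S\cap S_Z=\emptyset$), and $Z'$ contains the important $v$-$W$ separator $S^*$ in $\cB_S$ of size at most $k$. It remains to prove $Z'$ is a solution, i.e. by Lemma~\ref{lem:equivalence} that no connected component of $\cB_{Z'}$ contains a set of constraints forbidden with respect to $Z'$.

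The key structural observation to exploit is that, because $Z$ is \emph{non-separating}, $W$ lies in a single connected component of $\cB_Z$; call it $K$, and let $K^+ := K\cup Z$ be the "far side" together with the separator. Everything in $R_{\cB_S}(v,S_Z)$ — the "near side" reachable from $v$ — is separated from $K$ by $S_Z$. Now compare $\cB_Z$ and $\cB_{Z'}$. Outside the region affected by the swap the two graphs are identical; inside, we have only \emph{removed} vertices from the near side (the part $R(v,S^*)\setminus R(v,S_Z)$ now gets deleted since $S^*$ reaches further) and \emph{shifted} the separator from $S_Z$ to $S^*$. I would argue that every connected component of $\cB_{Z'}$ is "dominated" by a connected component of $\cB_Z$ in the following sense: for each component $D'$ of $\cB_{Z'}$ there is a component $D$ of $\cB_Z$ with $D'\cap(\var\cup\III) \subseteq D \cup (\text{newly isolated near-side stuff})$, and crucially the constraint sets in $D'$ are contained in constraint sets of components of $\cB_Z$. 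The components split into two kinds: those entirely on the far side $K^+$ (plus the untouched rest of $\cB_S$), which are unchanged or only shrunk by deleting $S^*\setminus S_Z$; and those on the near side, which are sub-parts of $R(v,S_Z)$. For a far-side component, if it contained a forbidden set $\bC$ with respect to $Z'$, note $\bC$ already sits in a single component of $\cB_Z$ (components only shrink when we add $S^*\setminus S_Z$ to the deleted set, never merge) and $\bC$ is still forbidden with respect to $Z$ by Observation~\ref{obs:local certificate} (since $\var(\bC)$ is disjoint from $Z'\supseteq S^*$ and from $Z$, the forbiddenness does not depend on which of $S_Z,S^*$ we delete — the certifying assignment $\tau$ restricts to $\var(\bC)\cap Z' = \var(\bC)\cap Z = \var(\bC)\cap S$), contradicting that $Z$ is a solution. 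For a near-side component $D'$ of $\cB_{Z'}$, observe $D'\subseteq R_{\cB_S}(v,S^*)$; but $R_{\cB_S}(v,S^*)\subseteq R_{\cB_S}(v,S_Z)\cup S^*\cup(\text{the strip between})$... here I need that $D'$, being a component avoiding $S^*$, actually lies inside a single component of $\cB_Z$ too — this holds because $\cB_{Z'}$ restricted to the near side deletes at least the vertices $S_Z$ that $\cB_Z$ deletes there (we deleted $S^*$, and the strip $R(v,S^*)\setminus R(v,S_Z)$ together with $S_Z$ separates near from the rest, so no near-side component of $\cB_{Z'}$ can cross what was $S_Z$). Hence again any forbidden $\bC$ in $D'$ is forbidden with respect to $Z$ in a component of $\cB_Z$, the contradiction.

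The main obstacle, and the step that needs the most care, is exactly this last containment argument: showing that passing from $Z$ to $Z'=(Z\setminus S_Z)\cup S^*$ only \emph{splits or shrinks} connected components and never \emph{merges} two components of $\cB_Z$ into one in $\cB_{Z'}$, so that a forbidden set (which by definition must live inside one component) in $\cB_{Z'}$ is already confined to one component in $\cB_Z$. The delicate point is that $S^* \setminus S_Z$ consists of vertices that in $\cB_Z$ were \emph{present} (on the near side, in $R(v,S_Z)$), so deleting them can only break the near-side component up further, while $S_Z\setminus S^*$ consists of vertices that are now \emph{re-added} — but these all lie inside $R_{\cB_S}(v,S^*)\cup\{$stuff already on the near side$\}$ because $S^*$ dominates $S_Z$, so re-adding them can only reconnect things \emph{within} the old near side, which was already a union of components separated from $K^+$. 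Making "old near side" precise (it is $R_{\cB_S}(v,S_Z)$, and $S_Z\setminus S^* \subseteq R_{\cB_S}(v,S^*)$ by Observation~\ref{obs:dominating separator}) and verifying that no path in $\cB_{Z'}$ can run from $K^+$ through a re-added vertex of $S_Z\setminus S^*$ into the near side — because such a path would have to cross $S^*$, which is still deleted — is the heart of the matter; once that is established, the invocation of Lemma~\ref{lem:equivalence} and Observation~\ref{obs:local certificate} finishes the proof routinely.
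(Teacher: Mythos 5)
Your overall strategy coincides with the paper's: pick a minimal $v$-$W$ separator $S_Z\subseteq Z\setminus S$ in $\cB_S$, replace it by a dominating important separator $S^*$, and show via Lemma~\ref{lem:equivalence} that $Z'=(Z\setminus S_Z)\cup S^*$ is still a solution. However, the way you certify that $Z'$ is a solution has a genuine gap. Your central claim --- that the move from $\cB_Z$ to $\cB_{Z'}$ ``never merges two components of $\cB_Z$'', and in particular that every component of $\cB_{Z'}$ contained in $R_{\cB_S}(v,S^*)$ lies inside a single component of $\cB_Z$ --- is false. The vertices of $S_Z\setminus S^*$ are deleted in $\cB_Z$ but present in $\cB_{Z'}$, and each such vertex (lying in a minimal separator) has neighbours both in $R_{\cB_S}(v,S_Z)$ and on the $W$-side of $S_Z$; re-adding it can glue a piece of $R_{\cB_S}(v,S_Z)$ to a piece of the strip $R_{\cB_S}(v,S^*)\setminus R_{\cB_S}[v,S_Z]$, i.e.\ merge two distinct components of $\cB_Z$ (the second piece may even have belonged to the component of $W$). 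A set $\C$ spread over such a merged component can be forbidden with respect to $Z'$ although no single component of $\cB_Z$ contains a forbidden set (for $d=2$: one constraint violating only $\Gamma_1$ on one side of $S_Z$, one violating only $\Gamma_2$ on the other side), so the contradiction with ``$Z$ is a solution'' that your near-side case relies on is simply not available there.

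What your proof never uses, and what is needed exactly at this point, is the other hypothesis of an {\ebdcomp} instance: $W\cup S$ is a strong backdoor into {\splitclass}. The paper first shows that any component of $\cB_{Z'}$ containing a forbidden set must intersect $W$: otherwise that component lies inside a single component of $\cB_{W\cup S}$, and since the languages are closed under partial assignments, forbiddenness with respect to $Z'$ implies forbiddenness with respect to $S$ (restrict the certifying assignment), hence with respect to $W\cup S$ by Observation~\ref{obs:local certificate} (as $W$ avoids the scopes), contradicting the guarantee on the old backdoor. This one step disposes of every component contained in $R_{\cB_S}(v,S^*)$ --- in particular of all components touching the re-added vertices $S_Z\setminus S^*\subseteq R_{\cB_S}(v,S^*)$, precisely the ones your no-merging claim cannot handle. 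Only for the remaining components (which meet $W$, hence avoid $R_{\cB_S}(v,S^*)$ and therefore avoid $Z\setminus Z'$) does one compare with $\cB_Z$ and transfer forbiddenness from $Z'$ to $Z$, as in your far-side case; even there, note that $\var(\C)$ need not be disjoint from $Z'$, so the correct transfer is to restrict the certificate to $Z\cap Z'$ (closure under partial assignments) and only then add $Z\setminus Z'$, which is disjoint from the scopes. (The claimed uniqueness of the minimal separator inside $Z\setminus S$ is also false, but harmless.) Without invoking the old backdoor $W\cup S$, the lemma cannot be established along the lines you propose.
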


\lv{
\begin{proof}
By Lemma \ref{lem:connectedtow}, we have that there is a component of $\cB_S$ containing $v$ and intersecting $W$. Therefore, it must be the case that $Z\setminus S$ contains a non-empty set, say $A$ which is a minimal $v$-$W$ separator of size at most $k$ in $\cB_S$. If $A$ is an important $v$-$W$ separator in $\cB_S$ then we are done. Suppose that this is not the case. Then there is a $v$-$W$ separator in $\cB_S$, say $B$, which dominates $A$. We claim that the set $Z'=(Z\setminus A)\cup B$ is also a solution for the given instance. 

Clearly, $Z'$ is no larger than $Z$, contains $S$ and is disjoint from $W$. It remains to show that $Z'$ is also a solution. By Lemma \ref{lem:equivalence}, it suffices to show that there is no connected component of $\cB_{Z'}$ which contains a set of constraints forbidden with respect to $Z'$. 

Suppose that there is a component of $\cB_{Z'}$ containing a set $\C$ of constraints forbidden with respect to $Z'$. We first consider the case when this component, say $\cal X$, is disjoint from $W$. Since $S\subseteq Z'$, it must be the case that there is a component $\cY$ of $\cB_{W\cup S}$ such that $\cX\subseteq \cY$. Furthermore, since $Z'$ is disjoint from $W$, we have that $\var(\C)$ is disjoint from $W$. Therefore, we conclude that $\C$ is forbidden with respect to $W\cup S$ (by Observation \ref{obs:local certificate}). Since we have already argued that $\C$ is contained in a single component of $\cB_{W\cup S}$, we infer that $W\cup S$ is not a strong backdoor for the given instance, a contradiction. Therefore, we conclude that $\cX$ must intersect $W$. Notice that this rules out the possibility of $\C$ being contained in the set $R_{\cB_S}(v,B)$, since this set is by definition disconnected from $W$ by $B\cup S$, which is a subset of $Z'$.

By the definition of $Z'$, it follows that any component of $B_{Z'}$ intersecting $Z\setminus Z'$ is contained in the set $R_{\cB_S}(v,B)$. This implies that the component $\cX$ which contains $\C$ is in fact disjoint from $Z\setminus Z'$. Hence, it must be the case that there is a connected component, say $\cal H$ in $\cB_Z$ such that $\cX\subseteq \cH$. Now, let $\tau:Z'\rightarrow \cD$ be an assignment to the variables in $Z'$ which certifies that $\C$ is forbidden w.r.t $Z'$. Let $\hat Z= Z\cap Z'$ and $\tau'=\tau|_{\hat Z}$. Observe that since the languages $\Gamma_1,\dots, \Gamma_d$ are closed under partial assignments, for every constraint $c\in \C$, it must be the case that $c[\tau]\subseteq c[\tau']$. Therefore, $\C$ is also forbidden with respect to $\hat Z$ and  $\tau'$ is an assignment that certifies this.
Furthermore, since $Z\setminus \hat Z$ is disjoint from $\var(\C)$, by Observation \ref{obs:local certificate}, $\C$ is forbidden with respect to $\hat Z\cup (Z\setminus \hat Z)=Z$.
Furthermore,  since $\C$ is contained in  $\cX$ and hence in $\cH$, we conclude that $\C$ is a set of constraints contained in a single component of $\cB_Z$ and forbidden with respect to $Z$, which by Lemma \ref{lem:equivalence}, results in a contradiction to our assumption that $Z$ is a solution for the given instance.
%
 This completes the proof of the lemma.
\end{proof}}

We use the above lemma along with Lemma \ref{lem:number of imp seps} to obtain our algorithm for non-separating instances.

\begin{lemma}\label{lem:type1 solution}
Let $(\III,k,W,S)$ be a non-separating instance of {\ebdcomp}. Then it can be solved in time $2^{\bigoh(k^2)}\vert \III\vert^{\bigoh(1)}$.
\end{lemma}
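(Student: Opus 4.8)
The plan is to solve a non-separating instance $(\III,k,S,W)$ of {\ebdcomp} by a branching algorithm that, at each step, either recognizes that the current partial solution $S$ is already a solution, or identifies a bounded-size set of variables that must intersect every non-separating solution extending $S$, and branches. First I would apply Reduction Rule~\ref{red:irrelevant}; after this, by Lemma~\ref{lem:connectedtow} every connected component of $\cB_S$ meets $W$, and by Lemma~\ref{lem:equivalence} the set $S$ is a solution exactly when no component of $\cB_S$ contains a forbidden set of constraints (checkable in the allotted time by Lemma~\ref{lem:forbidden check}). If $S$ is already a solution we return it (note $|S|\le k$ is maintained as an invariant); otherwise fix a witnessing forbidden set $\C$ lying in one component $\cX$ of $\cB_S$, which by the above meets $W$.

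The core of the argument is the branching step. Let $Z\supseteq S$ be any hypothetical non-separating solution. By Lemma~\ref{lem:branching}, $Z$ either intersects $\var(\C)$ or disconnects $\C$ from the rest of its component; since $Z$ is non-separating, $W$ stays in one component of $\cB_Z$, so if $Z$ disconnects some constraint $c\in\C$ from $W$ then $Z$ is a $v$-$W$ separator in $\cB_S$ for any variable $v\in\var(c)$. Now invoke Lemma~\ref{lem:pushing}: there is a solution $Z'$ of the same size, still extending $S$ and avoiding $W$, that contains an \emph{important} $v$-$W$ separator of size at most $k$ in $\cB_S$. Therefore it suffices to branch as follows. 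First branch on adding some variable of $\var(\C)$ to $S$ (at most $d\rho$ choices). Second, for each constraint $c\in\C$ and each variable $v\in\var(c)$, enumerate by Lemma~\ref{lem:number of imp seps} all important $v$-$W$ separators of size at most $k$ in $\cB_S$; there are at most $4^k$ of them, each of size at most $k$, and for each such separator $T$ we branch on adding one variable of $T$ to $S$. Altogether this gives $d\rho + d\rho\cdot 4^k\cdot k = 2^{\bigoh(k)}$ branches, and in each branch $|S|$ grows by one, so the recursion depth is at most $k$. Along any root-to-leaf path we also repeatedly re-apply Reduction Rule~\ref{red:irrelevant} so that the "$S$ is a solution" test and the forbidden-set extraction remain valid.

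The running time is $\big(2^{\bigoh(k)}\big)^{k}\cdot |\III|^{\bigoh(1)} = 2^{\bigoh(k^2)}\cdot|\III|^{\bigoh(1)}$, as claimed, since each node of the search tree does only polynomial work (the forbidden-set check costs $|\DDD|^{|S|}|\III|^{\bigoh(1)}=|\DDD|^{\bigoh(k)}|\III|^{\bigoh(1)}$, absorbed into the bound, and important-separator enumeration costs $\bigoh(4^k k|\III|)$). For correctness, I must argue both directions: if the algorithm returns a set, it is a genuine solution because it is obtained by extending $S$ with at most $k$ variables and passing the Lemma~\ref{lem:equivalence} test at a leaf; conversely, if a non-separating solution exists, then by the Lemma~\ref{lem:pushing}/Lemma~\ref{lem:branching} dichotomy above, at least one branch preserves the existence of a (possibly modified, but equal-size, $S$-extending, $W$-avoiding) non-separating solution, so by induction on $k-|S|$ the search finds one. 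The main obstacle I anticipate is the bookkeeping in this inductive invariant: one must be careful that after replacing $Z$ by the pushed solution $Z'$ and then committing to one variable $v'\in T\subseteq Z'$ by moving it into $S$, the residual target $Z'$ is still non-separating for the updated instance and still extends the updated $S$ — this is exactly where the "important separator" choice (rather than an arbitrary minimal separator) is essential, and where Lemma~\ref{lem:pushing} does the heavy lifting; verifying that the hypotheses of Lemma~\ref{lem:pushing} are met after each reduction/branching step (in particular that the relevant component of $\cB_S$ still meets $W$, guaranteed by Lemma~\ref{lem:connectedtow}) is the delicate part.
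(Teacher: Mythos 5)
Your proposal matches the paper's own proof essentially step for step: exhaustive application of Preprocessing Rule~\ref{red:irrelevant}, extraction of a forbidden set $\C$ in a component of $\cB_S$, the dichotomy from Lemma~\ref{lem:branching}, branching on $\var(\C)\setminus S$ and on the variables of all important $v$-$W$ separators of size at most $k$ in $\cB_S$ justified via Lemma~\ref{lem:pushing} and enumerated via Lemma~\ref{lem:number of imp seps}, with depth $k$ and branching factor $2^{\bigoh(k)}$ yielding $2^{\bigoh(k^2)}\vert\III\vert^{\bigoh(1)}$. The only deviations are cosmetic (e.g., bounding $\vert\var(\C)\vert$ by $d\rho$ rather than $d(\rho+k)$, which does not affect the asymptotics), and the delicate invariant you flag about the pushed solution is handled the same way in the paper.
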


\sv{\begin{proof}[Sketch of Proof]
Let $\C$ be a set of constraints forbidden w.r.t. $S$ in some component of $\cB_S$. The algorithm recursively branches by trying to either add one variable from $\var(\C)$ into $S$, or for each choice of $v\in \var(\C)$ branches over all variables in the union of all important $v$-$W$ separators of size at most $k$ in $\cB_S$.

The correctness follows from the following claims. Let $Z$ be a non-separating solution. First, either $\C$ is disconnected by $Z$ but not by $S$, or $Z\setminus S$ intersects $\var(\C)$. In the latter case, we will find some variable in $Z\setminus S$ and add it to $S$. In the former case, the second claim is that $\C$ must be disconnected from $W$ by $Z$. Then by Lemma~\ref{lem:pushing},  branching over all important $v$-$W$ separators of size at most $k$ is sufficient to hit some solution $Z'$.
\end{proof}}

\lv{
\begin{proof}  
We first apply Preprocessing Rule \ref{red:irrelevant} exhaustively.
If there is no set of constraints forbidden with respect to $S$ left in a single component of $\cB_S$ after the exhaustive preprocessing, then we are done and $S$ itself is the required solution. Similarly, if $k=\vert S\vert$, then we check if $S$ itself is the solution and if not we return {\No}. Otherwise, let $\C$ be a set of constraints forbidden with respect to $S$ and contained in a component of $\cB_S$. 
We now branch in $\vert\var(\C)\setminus S\vert$-ways by going over all variables $x\in \var(\C)\setminus S$ and in each branch recurse after adding the corresponding variable to $S$. Then, for every $v\in \var(\C)$, we similarly branch on all variables contained in the union of all important $v$-$W$ separators of size at most $k$ in $\cB_S$. This completes the description of the algorithm.
 
We now prove the correctness of the algorithm as follows.
Let $Z\supseteq S$ be a non-separating solution for the given instance. By Lemma \ref{lem:branching}, either $\C$ is disconnected by $Z$ or $Z$ intersects $\var(\C)$. Suppose that $\C$ is not disconnected by $Z$. Observe that it cannot be the case that $Z\cap \var(\C)=S$ since this would then imply that $\C$ is also forbidden with respect to $Z$, a contradiction. Therefore, if $\C$ is not disconnected by $Z$, then there is a variable, say $x$, in $Z\cap \var(\C)$ which is not in $S$. Since we have branched on all the variables in $\var(\C)\setminus S$, we will have a `correct' branch where we have added $x$ to the solution.

 We now describe how the algorithm accounts for the case when $Z$ does not intersect $\var(\C)$.
In this case, $\C$ is disconnected by $Z$. By Lemma \ref{lem:connectedtow}, the constraints in $\C$ are connected to $W$ in $\cB_S$. Since $Z$ is a non-separating solution, there is a constraint, say $C\in\C$, which is disconnected from $W$ by $Z$. Since we have already excluded $\var(C)\subseteq Z$, there must be a variable $v$ in $\var(C)$ which is disconnected from $W$ by $Z$. Then by Lemma~\ref{lem:pushing}, we know that there is also a solution for the given instance which contains an important $v$-$W$ separator of size at most $k$ in $\cB_S$. However, we have branched on the variables in the union of $v$-$W$ separators for every $v\in \var (C)$ for every choice of $C$. This completes the proof of correctness of the algorithm.

We now bound the running time as follows.
For the first round of branchings, since $\vert \var(\C)\vert$ is bounded by $(\rho+k)d$, we have $\bigoh(k)$ branches. For the second set of branchings, we have $\bigoh(k\cdot 4^k)$ branches (due to Lemma \ref{lem:number of imp seps}). 
Since $S$ is strictly increased whenever we branch, the depth of the search tree is bounded by $k$. We spend time  $2^{\bigoh(k)}\vert \III\vert^{\bigoh(1)}$ (due to Lemma \ref{lem:connectedtow} and Lemma \ref{lem:number of imp seps}) at each node of the search tree, and so the bound on the running time follows. This completes the proof of the lemma.
\end{proof}}

\subsection{Solving general instances}
\label{sub:general}
In this subsection, we describe our algorithm to solve general instances of {\ebdcomp} by using the algorithm to check for non-separating solutions as a subroutine. Essentially, this phase of our algorithm is a more powerful version of the algorithm described in the previous subsection. The main idea behind this part of the algorithm is the following. Since $W$ (the old solution) has size bounded by $k+\rho$, we can efficiently `guess' a partition of $W$ as $(W_1,W_2)$ where $W_1\subset W$ is exactly the subset of $W$ which occurs in a particular connected component after removing some hypothetical solution $S$. Once we guess $W_1$ and $W_2$, we know that the solution we are looking for separates $W_1$ and $W_2$. However, while it is tempting to narrow our search space down to important $W_1$-$W_2$ separators at this point, it is fairly easy to see that such an approach would be incorrect. However, while we are not able to narrow our search space of $W_1$-$W_2$ separators to only important $W_1$-$W_2$ separators, we show that it is indeed possible to prune the search space down to a set  of separators which is much larger than the set of important separators, but whose size is bounded by a function of $k$ nevertheless. Once we do that, the rest of the algorithm is a branching algorithm searching through this space. The main technical content in this part of our algorithm lies in showing that it is sufficient to restrict our search to an \emph{efficiently computable} bounded set of separators. We next give a brief description of the approach we follow to achieve this objective.
  
  At a high level, we use the approach introduced in \cite{LokshtanovR12}. However, there are significant obstacles that arise due to the fact that we are dealing with  scattered classes of CSPs.
  The crux of the idea is the following. We define a laminar family of $W_1$-$W_2$ separators which have a certain monotonicity property. Informally speaking, we partition the separators into `good' and 'bad' separators so that (under some ordering) all the good separators occur continuously followed by  all the bad separators.  Following this, we pick the middle separators in this family---the `last' good separator and the `first' bad separator---and show that deleting either of these separators must necessarily disconnect the hypothetical solution we are attempting to find. Roughly speaking, once we have computed the laminar family of separators,  we delete the middle separators and perform the same procedure recursively on the connected component intersecting  $W_1$. Since the solution has size bounded by $k$, it cannot be broken up more than $k$ times and hence the number of levels in the recursion is also bounded by $k$. We then show that essentially the union of the middle separators computed at the various levels of recursion  of this algorithm has size $f(k)$ and furthermore it is sufficient to restrict our search for a $W_1$-$W_2$ separator to this set.

We begin by defining a \emph{connecting gadget} which consists of redundant constraints and whose purpose is purely to encode connectivity at crucial points of the algorithm.

\begin{tdefinition}
Let $\III$ be a CSP instance and let $X=\{x_1,\dots,x_\ell\}$ be a set of variables. Let $\III'$ be the instance obtained from $\III$ as follows. Add $\ell -1$ new tautological binary constraints $T_1,\dots, T_{\ell-1}$ and define the scope of each $T_i$ as $\{x_i,x_{i+1}\}$. We refer to $\III'$ as \emph{the instance obtained from $\III$ by adding the connecting gadget on~$X$}.
\end{tdefinition}

\lv{
\begin{tlemma}\label{lem:connecting equiv}
Let $(\III,k,S,W)$ be an instance of {\ebdcomp} and let $Z$ be a separating solution for this instance. Let $\cX$ be a component of $\cB_Z$ and let $W_1=W\cap \cX$. Let $\III'$ be the CSP instance obtained from $\III$ be adding the connecting gadget on $W_1$. Then $Z$ is also a solution for $(\III',k,S,W)$.
\end{tlemma}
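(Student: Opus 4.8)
The plan is to show that adding the connecting gadget on $W_1$ neither destroys the fact that $Z$ is a strong backdoor (i.e., the ``only if'' part from Lemma~\ref{lem:equivalence} still goes through), nor creates any new obstruction. By Lemma~\ref{lem:equivalence}, $Z$ is a solution for $(\III',k,S,W)$ if and only if no connected component of $\cB(\III')_Z$ contains a set of constraints forbidden with respect to $Z$. Since $Z$ still contains $S$ and is still disjoint from $W$ (the gadget only adds constraints, not variables, and does not touch $Z$), it suffices to verify this last condition; we then obtain that $Z$ is a strong backdoor of the right size extending $S$ and avoiding $W$ for $\III'$.

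First I would analyze how the incidence graph changes. The instance $\III'$ differs from $\III$ only by the new tautological binary constraints $T_1,\dots,T_{|W_1|-1}$, whose scopes lie entirely inside $W_1$. Since $Z$ is a separating solution and $\cX$ is the component of $\cB_Z$ with $W_1 = W\cap\cX$, all variables of $W_1$ lie in the single component $\cX$ of $\cB_Z$. Hence each new constraint $T_i$, whose scope $\{x_i,x_{i+1}\}\subseteq W_1$ is disjoint from $Z$ (as $W\cap Z=\emptyset$), becomes a vertex of $\cB(\III')_Z$ all of whose neighbors lie in $\cX$. Therefore $\cB(\III')_Z$ is obtained from $\cB(\III)_Z$ by adding these gadget constraints as new vertices, all inside the component $\cX$; every other component of $\cB(\III)_Z$ is unchanged, and $\cX$ possibly merges with nothing new — it just gains the $T_i$'s.

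Next I would rule out a forbidden set in $\cB(\III')_Z$. Suppose $\cC'$ is a set of at most $d$ constraints contained in a single component of $\cB(\III')_Z$ and forbidden with respect to $Z$. The tautological constraints $T_i$ all have relation $\DDD^2$, so for any assignment $\tau$ and any $i$ we have $T_i[\tau]\in\Gamma_j$ for all $j$ (each $\Gamma_j$ contains $\DDD^2$ and is closed under partial assignments). Thus no $T_i$ can be one of the constraints whose reduction fails to lie in some $\Gamma_j$; consequently $\cC:=\cC'\setminus\{T_1,\dots,T_{|W_1|-1}\}$ is still forbidden with respect to $Z$, witnessed by the same certifying assignment. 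Moreover $\cC$ consists entirely of original constraints of $\III$. Since every component of $\cB(\III')_Z$ restricted to the original vertices of $\cB(\III)_Z$ is a single component of $\cB(\III)_Z$ (the gadget additions only live inside $\cX$ and cannot connect previously separated original vertices, because the $T_i$ only attach to $W_1\subseteq\cX$), the set $\cC$ lies in a single connected component of $\cB(\III)_Z$. This contradicts Lemma~\ref{lem:equivalence} applied to $Z$ as a solution of the original instance $(\III,k,S,W)$. Hence no such $\cC'$ exists, and by Lemma~\ref{lem:equivalence} $Z$ is a solution for $(\III',k,S,W)$.

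The main obstacle I anticipate is purely bookkeeping: carefully checking that adding the $T_i$'s does not merge two original components of $\cB(\III)_Z$ into one — this is exactly where we use that all of $W_1$ already sits inside the single component $\cX$, which in turn is where the hypothesis that $Z$ is a \emph{separating} solution with $W_1 = W\cap\cX$ is essential. Everything else reduces to the observation that tautological constraints are ``invisible'' to forbiddenness and to Observation~\ref{obs:local certificate}/Lemma~\ref{lem:equivalence}.
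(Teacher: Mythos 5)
Your proof is correct and follows essentially the same route as the paper's: assume a forbidden set exists in a component of $\cB(\III')_Z$, discard the tautological gadget constraints (harmless since $\DDD^2$ lies in every $\Gamma_i$ and the languages are closed under partial assignments), and use the fact that the gadget only attaches to $W_1$, which already lies in the single component $\cX$ of $\cB_Z$, so connectivity among original constraints is unchanged, contradicting Lemma~\ref{lem:equivalence} for $\III$. Your treatment is in fact slightly more careful than the paper's in explicitly removing the gadget constraints from the hypothetical forbidden set rather than asserting disjointness outright.
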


\begin{proof} 
	Let $\cB'$ be the incidence graph of the CSP instance $\III'$. If $Z$ is not a solution for the instance $(\III',k,S,W)$, then there must be a component of $\cB(\III')_Z$ containing a set $\C$ of constraints forbidden with respect to $Z$. Observe that since by assumption the languages $\Gamma_1,\dots, \Gamma_d$ all contain the tautological binary relation, $\C$ is disjoint from the constraints which were added to $\III$ to construct $\III'$. Finally, any set of constraints from $\III$ which occur together in the same component of $\cB_Z$ also occur together in the same component of $\cB(\III')_Z$ and vice versa. This implies that $\C$ is a set of constraints forbidden with respect to $Z$ and is contained in a single component of $\cB_Z$, a contradiction. This completes the proof of the lemma.
\end{proof}}

From this point on, we assume that if an instance $(\III,k,S,W)$ of {\ebdcomp} is a \emph{separating} instance, then it will be represented as a tuple $(\III,k,S,W_1,W_2)$ where $W_1\subset W$ and $W_2=W\setminus W_1$ with the connecting gadget added on $W_1$. Note that, since $|W|\leq k+\rho$, we will later on be allowed to branch over all partitions of $W$ into $W_1$ and $W_2$ in time $2^{k+\rho}$.
Our objective now is to check if there is a strong backdoor set for $\III$ extending $S$, disjoint from $W$ and separating $W_1$ from $W_2$. For this, we need to introduce the notions of tight separator sequences and pattern replacement procedures. 

\lv{
\subsubsection{Tight separator sequences }
 }

 \sv{ \paragraph{Tight separator sequences}}
\lv{
Let $\III$ be a set of constraints and let $Y$ be a subgraph of $\cB(\III)$. We use $\III[Y]$ to denote $\III\cap Y$. For expositional clarity, we will usually enforce $\var(\III\cap Y)$ to also lie in $Y$.

\begin{tdefinition}
Let $(\III,k,S,W_1,W_2)$ be an instance of {\ebdcomp}. 
We call a $W_1$-$W_2$ separator $X$ in $\cB_S$ $\ell$-\emph{good} if there is a variable set $K$ of size at most $\ell$ such that $K\cup X\cup S$ is a strong backdoor set in $\III[R_{\cB_S}[W_1,X]\cup S]$ into $\CSP(\Gamma_1) \uplus \cdots \uplus \CSP(\Gamma_d)$, 
and we call it $\ell$-\emph{bad} otherwise. 
\end{tdefinition}

\begin{tlemma}(Monotonocity Lemma)\label{lem:monotone}
Let $(\III,k,S,W_1,W_2)$ be an instance of {\ebdcomp} and 
let $X$ and $Y$ be disjoint $W_1$-$W_2$ separators in $\cB_S$ such that $X$ covers $Y$. If $X$ is $\ell$-\emph{good}, then so is $Y$. Consequently, if $Y$ is $\ell$-\emph{bad}, then so is $X$.
\end{tlemma}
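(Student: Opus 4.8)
\textbf{Proof plan for the Monotonicity Lemma (Lemma~\ref{lem:monotone}).}
The plan is to prove the first implication directly; the consequence is its contrapositive. So assume $X$ covers $Y$, both are $W_1$-$W_2$ separators in $\cB_S$, and $X$ is $\ell$-good. We must produce a set $K'$ of size at most $\ell$ such that $K'\cup Y\cup S$ is a strong backdoor into $\CSP(\Gamma_1)\uplus\cdots\uplus\CSP(\Gamma_d)$ inside the sub-instance $\III[R_{\cB_S}[W_1,Y]\cup S]$. Since $X$ is $\ell$-good, fix a witness $K$ of size at most $\ell$ so that $K\cup X\cup S$ is a strong backdoor in $\III[R_{\cB_S}[W_1,X]\cup S]$. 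The first key observation is that because $X$ covers $Y$ with respect to $W_1$, we have $R_{\cB_S}(W_1,Y)\subseteq R_{\cB_S}(W_1,X)$, so the sub-instance $\III_Y:=\III[R_{\cB_S}[W_1,Y]\cup S]$ is (essentially) contained in the sub-instance $\III_X:=\III[R_{\cB_S}[W_1,X]\cup S]$, with the one subtlety that $Y$ may contain variables not present in $R_{\cB_S}[W_1,X]$ — but since $Y$ is a minimal separator sitting ``between'' $W_1$ and $X$, every variable of $Y$ lies in $R_{\cB_S}[W_1,X]$. I would set $K' := K\cap (R_{\cB_S}[W_1,Y]\cup S)$, or even just $K'=K$ restricted to the relevant variables; it has size at most $\ell$.

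The core of the argument is then to show, via Lemma~\ref{lem:equivalence}, that $Y':=K'\cup Y\cup S$ is a strong backdoor in $\III_Y$, i.e.\ that no connected component of $\cB(\III_Y)_{Y'}$ contains a set $\bC$ of constraints forbidden with respect to $Y'$. Suppose for contradiction such a $\bC$ exists, lying in a component $\cal Q$ of $\cB(\III_Y)_{Y'}$. Since $Y\cup S$ separates $R_{\cB_S}(W_1,Y)$ from the rest, $\cal Q$ is entirely contained in $R_{\cB_S}(W_1,Y)\subseteq R_{\cB_S}(W_1,X)$, hence $\bC$ and all its scope variables lie in $\III_X$. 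Moreover, since $K'\cup Y\cup S \supseteq$ the restriction of $K\cup X\cup S$ to the vertices of $\cal Q$'s ambient region — more precisely, within $R_{\cB_S}[W_1,X]$ the deleted set $K\cup X\cup S$ restricted to $R_{\cB_S}(W_1,Y)$ equals $K'\cup S$ plus nothing of $X$, and $Y$ is an additional separator — I would argue that $\cal Q$ is contained in a single component of $\cB(\III_X)_{K\cup X\cup S}$. The point is that deleting the \emph{larger} separator structure $K\cup X\cup S$ cannot merge components that deleting $K'\cup Y\cup S$ kept apart within $R_{\cB_S}(W_1,Y)$; removing more vertices only refines the component structure. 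Then applying Observation~\ref{obs:local certificate}: since $\bC$ is forbidden w.r.t.\ $Y'$ and $Y'\cap\var(\bC)=\emptyset$ (as $\bC\subseteq\cal Q$ which avoids $Y'$), $\bC$ is also forbidden w.r.t.\ $(K\cup X\cup S)$ — using that $(K\cup X\cup S)\setminus Y'$ is disjoint from $\var(\bC)$ and that forbidden-ness is preserved under adding variables disjoint from the scope, together with the closure of the languages under partial assignments to handle the variables of $Y'\setminus(K\cup X\cup S)$, namely those in $Y$, which are also disjoint from $\var(\bC)$. This yields a set of constraints forbidden w.r.t.\ $K\cup X\cup S$ sitting inside a single component of $\cB(\III_X)_{K\cup X\cup S}$, contradicting (via Lemma~\ref{lem:equivalence}) that $K\cup X\cup S$ is a strong backdoor in $\III_X$.

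The main obstacle I anticipate is the bookkeeping around which variables and constraints actually belong to $\III_Y$ versus $\III_X$, and verifying that the component $\cal Q$ really does embed into a single component of $\cB(\III_X)_{K\cup X\cup S}$. The subtle point is that $X$ itself has been \emph{removed} when forming $\cB(\III_X)_{K\cup X\cup S}$ but is \emph{present} (though its variables are not in the scope of $\bC$) in $\cB(\III_Y)_{Y'}$ only if $X\subseteq R_{\cB_S}[W_1,Y]$, which in general is false — so one must be careful that $\cal Q$ does not "leak" toward $X$. This is exactly why $\bC\subseteq R_{\cB_S}(W_1,Y)$ is crucial: within that region $X$ plays no role at all, and both $\cB(\III_Y)_{Y'}$ and $\cB(\III_X)_{K\cup X\cup S}$ induce the same structure after deleting $K'\cup S$ and then additionally $Y$ on the $\III_Y$ side. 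Establishing this "restriction agrees on the overlap region" claim cleanly, and checking it survives the presence of the connecting gadget on $W_1$ (whose tautological constraints are never forbidden, by the argument in Lemma~\ref{lem:connecting equiv}), is where the real care is needed; the rest is a routine application of Lemma~\ref{lem:equivalence} and Observation~\ref{obs:local certificate}.
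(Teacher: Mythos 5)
Your plan is correct and follows essentially the same route as the paper's proof: the same witness $K'=K\cap R_{\cB_S}[W_1,Y]$, the same contradiction via Lemma~\ref{lem:equivalence} with a forbidden set $\bC$ in a component of $\cB(\III[R_{\cB_S}[W_1,Y]\cup S])$ after deleting $K'\cup Y\cup S$, the same two transfer steps (drop the $Y$-part of the certifying assignment by closure under partial assignments, then add $X\cup(K\setminus K')$ by Observation~\ref{obs:local certificate} since these variables avoid $\var(\bC)$), and the same observation that connectivity of $\bC$ carries over because everything deleted beyond $K'\cup S$ lies outside $R_{\cB_S}[W_1,Y]\cup S$. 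The only slip is the parenthetical claim that $Y'\cap\var(\bC)=\emptyset$ (a constraint in the surviving component may well have scope variables inside the deleted set $Y'$, and $Y$ need not be minimal), but this is immaterial because the closure-under-partial-assignments step you also invoke -- which is exactly the paper's mechanism -- suffices to remove those variables from the certifying assignment.
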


\begin{proof}
Suppose that $X$ is $\ell$-good and let $K$ be a variable set of size at most $\ell$ such that $K\cup X\cup S$ is a strong backdoor set into $\CSP(\Gamma_1) \uplus \cdots \uplus \CSP(\Gamma_d)$ for the sub-instance $\III'=\III[R_{\cB_S}[W_1,X]\cup S]$. Let $K'=K\cap R_{\cB_S}[W_1,Y]$. We claim that $Y$ is $\ell$-good and that $P=K'\cup Y\cup S$ is a strong backdoor set for the sub-instance $\hat \III=\III[R_{\cB_S}[W_1,Y]\cup S]$. 

If this were not the case, then there is a set $\C$ of constraints which are contained in a single component of $\cB(\hat \III)-P$ and are forbidden with respect to the set $P$. Let $\tau:P\rightarrow\cD$ be an assignment that certifies this. Since
 the languages $\Gamma_1,\dots, \Gamma_d$ are closed under partial assignments, we conclude that $\tau|_{K'\cup S}$ is an assignment certifying that $\C$ is also forbidden with respect to $K'\cup S$.

Now, since $\C$ lies in the set $R_{\cB_S}[W_1,Y]$, 
no constraint in $\C$ can have in its scope a variable in $X\cup (K\setminus K')$ ($Y$ disconnects these variables from $\C$). Therefore, by Observation \ref{obs:local certificate}, $\C$ being forbidden with respect to $K'\cup S$ implies that it is also forbidden with respect to $(K'\cup S) \bigcup (X\cup (K\setminus K'))=X\cup K\cup S$.

Finally, since $\C$ lies in a single component of $\cB(\hat \III)-P$ and $(K\setminus K')\cup X$ is disjoint from $R_{\cB_S}[W_1,Y]\cup S$, it must be the case that $\C$ also lies in a single component of $\cB(\III')-(K\cup X\cup S)$. But this results in  a  contradiction to our assumption that 
$K\cup X\cup S$ is a strong backdoor set for the instance $\III'$. This completes the proof of the lemma.
\end{proof}

\begin{tdefinition}\label{def:well domination}
Let $(\III,k,S,W_1,W_2)$ be an instance of {\ebdcomp} and let $X$ and $Y$ be $W_1$-$W_2$ separators in $\cB_S$ such that $Y$ dominates $X$. Let $\ell$ be the smallest $i$ for which $X$ is $i$-good. If $Y$ is $\ell$-good, then we say that $Y$ 
  \textbf{well-dominates} $X$. If $X$ is $\ell$-good and there is no $Y\neq X$ which well-dominates $X$, then we call $X$, \textbf{$\ell$-important}.
\end{tdefinition}

\begin{tlemma}\label{lem:special separators replacement}
Let $(\III,k,S,W_1,W_2)$ be an instance of {\ebdcomp} and let $Z$ be a solution for this instance. Let $P\subseteq Z\setminus S$ be a non-empty minimal $W_1$-$W_2$ separator in $\cB_S$ and let $P'$ be a $W_1$-$W_2$ separator in $\cB_S$ well-dominating $P$. Then there is also a solution for the given instance containing $P'$.
\end{tlemma}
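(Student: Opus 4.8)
The plan is to follow the template of Lemma~\ref{lem:pushing}, but with ``$\ell$-good separator'' in place of ``minimal separator of size $\le k$'' and with the exchange carried out on the whole $W_1$-side of $P'$ rather than on the separator alone. Starting from the given solution $Z$ and the given minimal $W_1$-$W_2$ separator $P\subseteq Z\setminus S$ in $\cB_S$, I will build an explicit replacement solution $Z'$ containing $P'$. First I extract a witness of $\ell$-goodness from $Z$ itself: put $K:=Z\cap R_{\cB_S}(W_1,P)$, so that $K\cup P\cup S=Z\cap(R_{\cB_S}[W_1,P]\cup S)$. Since $\III[R_{\cB_S}[W_1,P]\cup S]$ is an \emph{induced} sub-instance of $\III$ and the languages are closed under partial assignments, the same reasoning as in the proof of the Monotonicity Lemma (via Lemma~\ref{lem:equivalence} and Observation~\ref{obs:local certificate}) shows $K\cup P\cup S$ is a strong backdoor into $\CSP(\Gamma_1)\uplus\cdots\uplus\CSP(\Gamma_d)$ for that sub-instance --- a forbidden set witnessed there would already be forbidden with respect to $Z$ and lie in one component of $\cB_Z$. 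Hence $P$ is $|K|$-good, so the least $\ell$ with $P$ being $\ell$-good satisfies $\ell\le|Z\cap R_{\cB_S}(W_1,P)|$. Since $P'$ well-dominates $P$ it is $\ell$-good, and I fix a witness $K'$ with $|K'|\le\ell$, disjoint from $S\cup P'$ and (see the last paragraph) from $W_1$; thus $K'\subseteq R_{\cB_S}(W_1,P')$ and $K'\cup P'\cup S$ is a strong backdoor for $\III_2:=\III[R_{\cB_S}[W_1,P']\cup S]$.

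Next I define $Z':=(Z\setminus R_{\cB_S}(W_1,P'))\cup P'\cup K'$ and check the easy requirements. We have $S\subseteq Z'$ ($S$ is disjoint from $\cB_S$, hence from $R_{\cB_S}(W_1,P')$); $Z'$ is disjoint from $W$ (since $P'$ separates $W_1$ from $W_2$ in $\cB_S$, the sets $R_{\cB_S}[W_1,P']$, $P'$, $S$ all avoid $W_2$, while $Z$, $P'$ and the chosen $K'$ avoid $W_1$); and $Z'\supseteq P'$, so $Z'$ still separates $W_1$ from $W_2$ in $\cB_S$, hence in $\cB$. For $|Z'|\le k$ I use that $P$ is \emph{minimal}: every vertex of $P$ has a neighbour in $R_{\cB_S}(W_1,P)\subseteq R_{\cB_S}(W_1,P')$, so $P\subseteq R_{\cB_S}[W_1,P']$ and therefore $P\setminus P'\subseteq (Z\cap R_{\cB_S}(W_1,P'))\setminus R_{\cB_S}(W_1,P)$; combining $|P'\setminus Z|\le|P\setminus P'|$ (from $|P'|\le|P|$ and $P\cap P'\subseteq Z$) with $|K'|\le\ell\le|K|$ yields $|P'\setminus Z|+|K'|\le|Z\cap R_{\cB_S}(W_1,P')|$, i.e.\ $|Z'|\le|Z|\le k$.

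The main step is that $Z'$ is a strong backdoor, which by Lemma~\ref{lem:equivalence} means ruling out a set $\C$ of constraints forbidden with respect to $Z'$ and contained in one component of $\cB_{Z'}$. The key structural fact is that, because $P'\cup S\subseteq Z'$ and $P'$ is a $W_1$-$W_2$ separator in $\cB_S$, the graph $\cB_{Z'}$ splits, with no edges between an ``inner'' subgraph induced on $R_{\cB_S}(W_1,P')\setminus K'$ and an ``outer'' subgraph induced on $NR_{\cB_S}(W_1,P')\setminus Z$; moreover, as induced subgraphs of $\cB$, the inner subgraph \emph{equals} $\cB(\III_2)_{K'\cup P'\cup S}$ and the outer subgraph \emph{equals} the subgraph of $\cB_Z$ induced on $NR_{\cB_S}(W_1,P')$. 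Hence $\C$ lies entirely on one side. If $\C$ is on the inner side, then $\var(\C)\cup\C\subseteq R_{\cB_S}[W_1,P']\cup S$, the restriction of a certifying assignment to $K'\cup P'\cup S$ still certifies that $\C$ is forbidden, and $\C$ lies in a single component of $\cB(\III_2)_{K'\cup P'\cup S}$ --- contradicting that $K'\cup P'\cup S$ is a strong backdoor for $\III_2$. If $\C$ is on the outer side, then its constraints all lie in $NR_{\cB_S}(W_1,P')$ and in one component of $\cB_Z$; restricting a certifying assignment $\tau\colon Z'\to\DDD$ to $Z\cap Z'$ and using closure under partial assignments shows $\C$ is forbidden with respect to $Z\cap Z'$, and since $Z\setminus Z'\subseteq R_{\cB_S}(W_1,P')$ is disjoint from $\var(\C)$, Observation~\ref{obs:local certificate} promotes this to $\C$ being forbidden with respect to $Z$ --- contradicting that $Z$ is a solution. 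Either way we reach a contradiction, so $Z'$ is a strong backdoor, and by construction it extends $S$, is disjoint from $W$, separates $W_1$ from $W_2$, and contains $P'$, as required.

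The step I expect to be the main obstacle is the bookkeeping in the third paragraph --- pinning down precisely that the inner and outer parts of $\cB_{Z'}$ coincide with the stated induced subgraphs of $\cB(\III_2)$ and of $\cB_Z$, and that ``forbidden-ness'' transfers faithfully across these restrictions (this is where closure under partial assignments and Observation~\ref{obs:local certificate} are used in both directions, and where the tautological connecting gadget on $W_1$ guarantees that $R_{\cB_S}(W_1,\cdot)$ is well defined and that no gadget constraint enters a forbidden set). A secondary point of care is the claim that the witness $K'$ of $\ell$-goodness of $P'$ can be chosen disjoint from $W_1$: this should follow either by working throughout in a version of $\cB_S$ in which every vertex of $W$ has been duplicated $k+\rho+1$ times (exactly as in the remark preceding Lemma~\ref{lem:number of imp seps}), so that such witnesses and separators automatically avoid $W$, or by a short direct argument that a $W_1$-vertex in such a witness is redundant; without it, $Z'$ could meet $W_1$.
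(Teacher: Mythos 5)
Your proof is correct and takes essentially the same route as the paper's: an exchange argument that replaces the portion of $Z$ on the $W_1$-side of $P'$ by a small strong backdoor of $\III[R_{\cB_S}[W_1,P']\cup S]$ extending $P'\cup S$ (whose existence and size come from well-domination), followed by a two-case forbidden-set analysis via closure under partial assignments, Observation~\ref{obs:local certificate} and Lemma~\ref{lem:equivalence}. The differences are cosmetic --- you discard all of $Z\cap R_{\cB_S}(W_1,P')$ rather than only $Z\cap R_{\cB_S}[W_1,P]$, split cases by which side of $P'$ the forbidden set lies on rather than by whether its component meets $Z\setminus Z'$, and you spell out the size accounting and the $W_1$-disjointness of the witness $K'$ (a point the paper's own proof silently glosses over).
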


\begin{proof} 
Let $Q=(Z\cap [R[W_1,P])\cup S$. Notice that $Q$ is a strong backdoor set into {\splitclass} in the instance $\III'=\III[R[W_1,P]\cup S]]$. Let $Q'\supseteq P'\cup S$ be a smallest strong backdoor set into {\splitclass} extending $P'\cup S$ in the instance $\hat \III=\III[R[W_1,P']\cup S])$. We claim that $Z'=(Z\setminus  Q)\cup Q'$ is a solution for $(\III,k,S,W_1,W_2)$. By definition of well-domination, $Z'$ is no larger than $Z$. It now remains to prove that $Z'$ is a strong backdoor set for the given instance.
Suppose that $Z'$ is not a strong backdoor set and let $\cX$ be  a connected component of $\cB_{Z'}$ containing a set $\C$ of constraints forbidden with respect to $Z'$. 

We first consider the case when $\cX$ is disjoint from the set $Z\setminus Z'$. Then, there is a component $\cH$ in $\cB_Z$ which contains $\cX$ and hence $\C$. Now, let $\tau:Z'\rightarrow \cD$ be an assignment to the variables in $Z'$ which certifies that $\C$ is forbidden with respect to $Z'$. Let $\hat Z=Z\cap Z'$ and let $\tau'=\tau|_{Z'}$. Observe that since the languages $\Gamma_1,\dots, \Gamma_d$ are closed under partial assignments, for every $C\in \C$, it must be the case that $C[\tau]\subseteq C[\tau']$. Therefore, $\C$ is also forbidden with respect to $\hat Z$ and this is certified by $\tau'$. Now, since $Z\setminus \hat Z$ is disjoint from $\var(\C)$, Observation \ref{obs:local certificate} implies that $\C$ is also forbidden with respect to $\hat Z\cup (Z\setminus \hat Z)=Z$. Finally, since $\cX$ is disjoint from $Z$, we conclude that $\C$ occurs in a single connected component of $\cB-Z$ a contradiction to our assumption that $Z$ is a solution.

We now consider the case when $\cX$ intersects the set $Z\setminus Z'$. By the definition of $Z'$ it must be the case that $\cX$ is contained in the set $R_{\cB_S}(W_1,P')$. Furthermore, since $Z'\setminus Q'$ is disjoint from $R_{\cB_S}(W_1,P')$ and is in fact already disconnected from $R_{\cB_S}(W_1,P')$ by just $P'$, it must be the case that $\cX$ and hence $\C$ is also contained in a single component of $\cB(\hat \III)-Q'$. 
Now, since $\C$ is contained in $R_{\cB_S}(W_1,P')$, we know that $\var(\C)$ is disjoint from $Z'\setminus Q'$. Therefore, we conclude that $\C$ is also forbidden with respect to $Q'$. However, this contradicts our assumption that $Q'$ is a strong backdoor set into {\splitclass} in $\hat \III$.
This completes the proof of the lemma.
\end{proof}

Lemma \ref{lem:pushing} and Lemma \ref{lem:special separators replacement} imply that it is sufficient to compute either a variable separated from $W$ by some solution or a separator well-dominating (the separating part of) some solution. Furthermore, Lemma \ref{lem:special separators replacement} allows us to restrict our attention to $\ell$-important separators for appropriate values of $\ell$. In the rest of this section, we describe a subroutine that runs in FPT time and always achieves one of the above objectives.
We use the notion of tight separator sequences (defined below) to streamline our search for this variable/separator.}


\sv{It is possible to prove that to solve {\ebdcomp}, it is sufficient to compute either a variable separated from $W$ by some solution (allowing the use of Lemma~\ref{lem:pushing}), or a separator which, in a precisely defined sense, supersedes the ``separating part'' of some solution (we call this \emph{well-domination}). In the rest of this section, we describe an FPT subroutine which always achieves at least one of these objectives. To this end, we use the notion of \emph{tight separator sequences}, which are maximal sets of pairwise-disjoint important separators of size at most $k$ where for each pair of separators in the set, one covers the other. A tight separator sequence can in fact be computed in polynomial time.
}

\begin{figure}[t]
\begin{center}
\includegraphics[height=300 pt, width=300 pt]{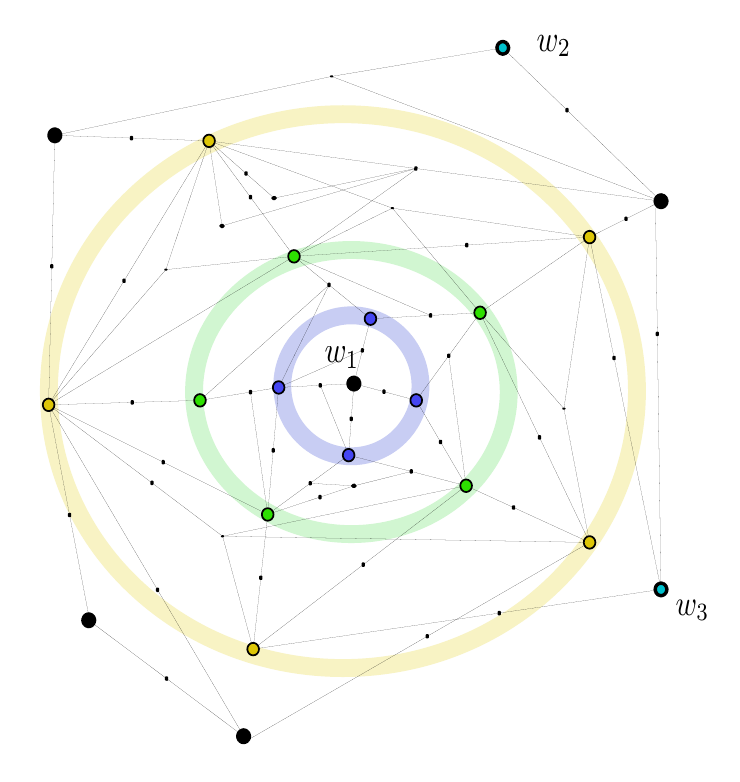}
\caption{An illustration of a tight $w_1$-$\{w_2,w_3\}$ separator sequence. Each shaded layer corresponds to a separator in the sequence}
\label{fig:tight sequence}
\end{center}
\end{figure}


\lv{
\begin{tdefinition}\label{def:smallest separator sequence} 
For every $k\geq 1$, a tight $X$-$Y$ separator sequence of order $k$ is a set $\cH$ of $X$-$Y$ separators with the following properties.

\begin{itemize}
\item Every separator has size at most $k$.
\item The separators are pairwise disjoint.
\item For any pair of separators in the set, one covers the other.
\item The set is maximal with respect to the above properties. 
\end{itemize}
\end{tdefinition}}

See Figure \ref{fig:tight sequence} for an illustration of a tight separator sequence.

\begin{lemma}\label{lem:tight sep computation}
Given a CSP instance $\III$, disjoint variable sets $X$, $Y$, and an integer $k$, a tight $X$-$Y$ separator sequence $\cH$ of order $k$ can be computed in time $\vert \III\vert ^{\bigoh(1)}$.\end{lemma}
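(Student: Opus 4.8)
The plan is to build $\cH$ greedily, inserting one new minimal separator at a time into the ``gaps'' cut out by the separators selected so far, where each insertion test reduces to a single minimum vertex-cut computation in a suitable subgraph of $\cB=\cB(\III)$. Throughout I rely on the standard reduction of vertex separators to edge cuts: to find a smallest $X$-$Y$ vertex separator of $\cB$ that consists only of variables and avoids a prescribed set, split each variable vertex $v$ into an arc $v^-\to v^+$ of unit capacity, give every constraint vertex and every vertex to be avoided infinite capacity, and compute a maximum $X$-$Y$ flow; the minimum cut size equals the flow value, among minimum cuts there is (classically) a unique one whose region $R_\cB(X,\cdot)$ is inclusion-minimal --- the cut ``closest to $X$'', read off from the residual graph --- and any minimum-size cut is automatically a minimal separator. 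I will also assume $X\cup\C(X)$ is connected in $\cB$ (if not, add the connecting gadget on $X$) and that some vertex of $Y$ is reachable from $X$ in $\cB$ (otherwise $\emptyset$ separates $X$ and $Y$ and the construction is analogous and simpler). If the minimum $X$-$Y$ vertex cut of $\cB$ exceeds $k$, then $\III$ has no $X$-$Y$ separator of size at most $k$, so $\cH=\emptyset$ is a (vacuously maximal) tight $X$-$Y$ separator sequence of order $k$; so assume otherwise.

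The algorithm maintains a set $\cH$, initially empty, that at all times consists of pairwise disjoint, pairwise comparable, minimal $X$-$Y$ separators of size at most $k$. Writing $\cH=\{S_1,\dots,S_m\}$ with $R_\cB(X,S_1)\subsetneq\cdots\subsetneq R_\cB(X,S_m)$, this ordering splits $\cB$ into $m+1$ \emph{gaps}: the subgraph $\cB[R_\cB[X,S_1]]$ lying before $S_1$; for $1\le j<m$, the band $\cB_j$ induced on $R_\cB[X,S_{j+1}]\setminus R_\cB(X,S_j)$ between $S_j$ and $S_{j+1}$; and, after $S_m$, the whole of $\cB$. In one round I compute, for each gap, a candidate: the minimum $X$-$S_1$ vertex cut closest to $X$ in $\cB[R_\cB[X,S_1]]$; the minimum $S_j$-$S_{j+1}$ vertex cut closest to $S_j$ in $\cB_j$; and the minimum $R_\cB[X,S_m]$-$Y$ vertex cut closest to $R_\cB[X,S_m]$ in $\cB$. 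If some candidate $T$ has size at most $k$ and is \emph{strictly interior} to its gap --- i.e.\ disjoint from the two bracketing separators (and from $X$ and $Y$), equivalently $R_\cB(X,T)$ strictly refines the region of the inner bracketing separator and is strictly refined by that of the outer one --- then I add $T$ to $\cH$ and begin a new round; otherwise I output $\cH$.

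Correctness rests on the following equivalence, which is the technical heart. Call $\cH$ \emph{extendable in gap $g$} if there is an $X$-$Y$ separator $T$ with $|T|\le k$, $T\cap S_i=\emptyset$, and $T$ comparable to $S_i$ for every $i$, and with $T$ lying strictly inside $g$. The key claims are: (i) for an $X$-$Y$ separator $T$ with $|T|\le k$, the set $\cH\cup\{T\}$ is again a set of pairwise disjoint, pairwise comparable separators if and only if $T$ lies strictly inside some gap $g$; and (ii) $\cH$ is extendable in gap $g$ if and only if the minimum cut computed for $g$ has size at most $k$ and is strictly interior --- in which case that min cut is itself a legal new member of $\cH$ (one checks it is a minimal $X$-$Y$ separator of $\cB$ comparable to all of $S_1,\dots,S_m$). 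Claim (i) uses that $\cB$ is bipartite, so two separators made of variables are never adjacent, together with minimality of the $S_i$ (so $R_\cB(X,T)=R_\cB(X,S_i)$ with $T\cap S_i=\emptyset$ would force $S_i=N_\cB(R_\cB(X,S_i))\subseteq T$, impossible for nonempty $S_i$), together with the identity that an $X$-$Y$ separator that is disjoint from $S_i$ and covers $S_i$ is exactly an $R_\cB[X,S_i]$-$Y$ separator. Claim (ii) uses the monotonicity $R_\cB(X,S_1)\subsetneq\cdots\subsetneq R_\cB(X,S_m)$ to see that a strictly interior separator of $g$ is automatically disjoint from, and comparable to, \emph{every} $S_i$, not just the bracketing pair. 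Hence, when the algorithm halts, $\cH$ is extendable in no gap, so by (i)--(ii) no size-$\le k$ separator can be added while preserving the defining properties; that is, $\cH$ is maximal, hence a tight $X$-$Y$ separator sequence of order $k$. (A strictly interior closest minimum cut of a gap has size at least $1$, since $X$ and $S_1$, and $S_j$ and $S_{j+1}$, and $R_\cB[X,S_m]$ and $Y$, are never adjacent in the bipartite $\cB$; a size-$0$ ``interior'' cut would coincide with a bracketing separator and be excluded.)

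For the running time, the members of $\cH$ are pairwise disjoint and nonempty, so $|\cH|\le|\var(\III)|$; the algorithm therefore runs at most $|\var(\III)|$ rounds, each performing $\bigoh(|\cH|)$ maximum-flow computations plus polynomial bookkeeping, for a total of $|\III|^{\bigoh(1)}$. I expect the main obstacle to be Claims (i)--(ii): one has to show that ``comparable with and disjoint from the entire current chain'' localizes to a single minimum-cut question inside one band. This is delicate precisely because, unlike minimum cuts, the source sides of size-$\le k$ separators are not closed under union and intersection, so there is in general no global ``closest size-$\le k$ separator'' (which is why the construction must proceed gap by gap using genuine minimum cuts), and because the bipartite incidence structure lets a single constraint vertex be adjacent to variable separators on both sides of a gap, a phenomenon that must be accounted for when verifying that the computed candidates are truly strictly interior and minimal.
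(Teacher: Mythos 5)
Your construction is correct in substance, but it takes a genuinely different route from the paper. The paper builds the sequence by peeling from the $Y$-side inward: it invokes the already-stated subroutine of Lemma~\ref{lem:number of imp seps} to compute, in polynomial time, an arbitrary component-maximal $X$-$Y$ separator of size at most $k$, adds it to $\cH$, resets $Y$ to this separator, and iterates until no size-$\le k$ separator remains; maximality then follows from a two-case global argument (any addable separator $P$ would either cover some member, contradicting that member's component-maximality when it was chosen, or be covered by all members, contradicting termination). You instead grow the chain by insertion into gaps, reducing each insertion test to an explicit minimum vertex-cut computation in a band subgraph, and you prove maximality via the local equivalence ``extendable in a gap iff the band min cut is small and interior.'' Your approach is more self-contained (it does not lean on the component-maximal separator subroutine) and makes the flow-based implementation explicit, but it carries a heavier correctness burden: your claims (i)--(ii) require exactly the facts you flag --- that strict nestedness of regions localizes disjointness and comparability with the \emph{whole} chain to a single band, that the restriction of an addable separator to the band is still a band cut (which uses that every vertex of a chain member has a neighbor in its $X$-side region, true inductively because members are minimum cuts), and that constraint vertices straddling a gap simply make the band cut infinite, which is the correct behaviour. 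These all check out, so your argument goes through; it is simply a longer road to a lemma the paper dispatches by reusing its important-separator machinery. Note also that the resulting sequences can differ: the paper's is generated outside-in from component-maximal separators, while yours is an arbitrary maximal chain produced by gap insertion --- both satisfy Definition~\ref{def:smallest separator sequence}, which is all later lemmas need.
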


\lv{
\begin{proof} If there is no $X$-$Y$ separator of size at most $k$, then we stop the procedure.  Otherwise, 
we compute an arbitrary component-maximal $X$-$Y$ separator of size at most $k$.
This can be done in polynomial time by the algorithm of Lemma \ref{lem:number of imp seps}. We add this separator to the family $\cH$, set $Y:=S$, and iterate this process. We claim that the resulting set is a tight $X$-$Y$ separator sequence of order $k$. It is clear that the first three properties of a tight separator sequence are always satisfied in any iteration. As for the maximality of the set $\cH$ which is finally computed, observe that if there is a separator, say $P$ which can be added to the $\cH$ without violating maximality, then it either contradicts the component-maximality of at least one separator in $\cH$ or it contradicts the termination of the process. The former case occurs if $P$ covers at least one separator in $\cH$ and the latter occurs if $P$ is covered by all separators in $\cH$. 
This completes the proof of the lemma.
\end{proof}}

\lv{
\subsubsection{Boundaried CSPs and Replacements}
}

\sv{\paragraph{Boundaried CSPs and Replacements}}

\begin{definition} A $t$-boundaried CSP instance is a CSP instance  $\III$ with $t$ distinguished labeled variables. The set $\partial(\III)$ of labeled variables is called
the \textbf{boundary} of $\III$ and the variables in $\partial(\III)$ are referred to as the \textbf{terminal variables}.
 Let $\III_1$ and $\III_2$ be two $t$-boundaried CSP instances and let $\mu:\partial(\III_1)\rightarrow \partial(\III_2)$ be a bijection. We denote by
$\III_1\otimes_{\mu} \III_2$ the $t$-boundaried CSP instance obtained by the following \textbf{gluing} operation. We take the disjoint union of the constraints in $\III_1$ and $\III_2$ and then identify each variable $x \in \partial(\III_1)$ with the corresponding variable $\mu(x)\in \partial(\III_2)$.
\end{definition}

\lv{We also define the notion of boundaried CSP instances with  an annotated set of variables. The key difference between the boundary and the annotation is that the annotated set of variables plays no part in gluing operations.
Formally, 

\begin{tdefinition}
A $t$-boundaried CSP instance with an annotated set is a $t$- boundaried CSP instance  $\III$ with a second set of distinguished but unlabeled vertices disjoint from the boundary. The set of annotated vertices is denoted by $\Delta(\III)$.
\end{tdefinition}}

\sv{We now outline how the $t$-boundaried instances defined above are used. Consider an instance of {\ebdcomp} with a solution $Z$ which is disjoint from and incomparable to some $\ell$-good separator $P$. Then some part of $Z\setminus S$, say $K$, lies in $R_{B_S}[W_1,P]\cup S$. We show that, by carefully replacing parts outside of $R_{B_S}[W_1,P]\cup S$ with a small gadget, we can obtain an instance $\III'$ which preserves the part of $K$ inside $R_{B_S}[W_1,P]\cup S$. We also show that some part of $K$ had to lie outside of $R_{B_S}[W_1,P]\cup S$, and hence the solution we seek in $\III'$ is strictly smaller than in $\III$; once we find a solution in~$\III'$, we can use it to find one in $\III$. Furthermore, the number of possible boundaried instances which need to be considered for this replacement is bounded by a function of $k$, which allows exhaustive branching.}

\lv{Before proceeding to the technical Lemma~\ref{lem:replacement2}, we give an informal outline of its claim and intended use. Consider an instance of {\ebdcomp} with a solution $Z$ which is disjoint from and incomparable to some $\ell$-good separator $P$. Then some part of $Z\setminus S$, say $K$, lies in $R_{B_S}[W_1,P]\cup S$. We show that, by carefully replacing parts outside of $R_{B_S}[W_1,P]\cup S$ with a small gadget, we can obtain an instance $\III'$ which preserves the part of $K$ inside $R_{B_S}[W_1,P]\cup S$. We also show that some part of $K$ had to lie outside of $R_{B_S}[W_1,P]\cup S$, and hence the solution we seek in $\III'$ is strictly smaller than in $\III$; once we find a solution in $\III'$, we can use it to find one in $\III$. Furthermore, the number of possible boundaried instances which need to be considered for this replacement is bounded by a function of $k$, which allows exhaustive branching.

\begin{observation}\label{obs:incomparability}
Let $S_1$ and $S_2$ be two disjoint and incomparable $X$-$Y$ separators. Then, $S_2^r=R(X,S_1)\cap S_2, S_1^r = R(X,S_2)\cap S_1$ are both nonempty. Furthermore, $S_2^{nr}=S_2\setminus S_2^r$ and $S_1^{r}=S_1\setminus S_1^r$ are also both nonempty.
\end{observation}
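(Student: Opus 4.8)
The plan is to verify the four non-emptiness assertions by a direct analysis of the reachability sets $R(X,S_1)$ and $R(X,S_2)$ in $\cB$. First I would record the consequence of incomparability that drives everything: since neither separator covers the other, neither of $R(X,S_1)$, $R(X,S_2)$ is a proper superset of the other, and moreover $R(X,S_1)\neq R(X,S_2)$ — for if the two reachability sets coincided, their common boundary in $\cB$ would lie in $S_1\cap S_2=\emptyset$, making that set a union of connected components of $\cB$ and hence placing $X$ and $Y$ in different components of $\cB$, a degenerate case (then $\emptyset$ is an $X$-$Y$ separator) that we set aside, as it does not arise where the observation is used. Consequently both $R(X,S_1)\setminus R(X,S_2)$ and $R(X,S_2)\setminus R(X,S_1)$ are non-empty; fix a \emph{private} element $a$ of the former and $b$ of the latter.

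For the claims $S_2^r\neq\emptyset$ and $S_1^r\neq\emptyset$ I would use a one-step ``a path must cross the other boundary'' argument. The set $R(X,S_1)$ induces a connected subgraph of $\cB$ containing $X$ and $a$, so there is a path $\pi$ from a vertex of $X$ to $a$ lying entirely inside $R(X,S_1)$. Since $X\subseteq R(X,S_2)$ while $a\notin R(X,S_2)$, the path $\pi$ leaves $R(X,S_2)$, and every vertex adjacent to $R(X,S_2)$ but outside it belongs to $S_2$; hence $\pi$ meets $S_2$. As $\pi\subseteq R(X,S_1)$, this gives $R(X,S_1)\cap S_2=S_2^r\neq\emptyset$. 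The claim $S_1^r=R(X,S_2)\cap S_1\neq\emptyset$ follows symmetrically, using $b$ and a path inside $R(X,S_2)$ from $X$ to $b$.

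The delicate part — and the step I expect to be the main obstacle — is $S_2^{nr}=S_2\setminus R(X,S_1)\neq\emptyset$ (and symmetrically $S_1\setminus R(X,S_2)\neq\emptyset$). Here I would argue by contradiction: if $S_2\setminus R(X,S_1)=\emptyset$ then, using $S_1\cap S_2=\emptyset$, we get $S_2\subseteq R(X,S_1)$, so $S_2$ lies entirely on the $X$-side of $S_1$. I would then show $R(X,S_2)\subseteq R(X,S_1)$, which (since $S_2$ is a non-empty separator disjoint from $R(X,S_2)$) is proper, so $S_1$ covers $S_2$, contradicting incomparability. To see $R(X,S_2)\subseteq R(X,S_1)$, take $v\in R(X,S_2)$ and a path $Q$ from $X$ to $v$ avoiding $S_2$; if $v\notin R(X,S_1)$ then $Q$ first meets $S_1$ at some vertex $q$, with $Q[X,q)$ inside $R(X,S_1)$ and $Q[X,q]$ avoiding $S_2$. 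Now I invoke minimality of $S_1$: since $S_1\setminus\{q\}$ does not separate $X$ from $Y$, there is an $X$-$Y$ path through $q$ avoiding $S_1\setminus\{q\}$, whose portion $P_2$ from $q$ to $Y$ lies (apart from $q$) entirely in $NR(X,S_1)$ and is therefore disjoint from $S_2\subseteq R(X,S_1)$. Splicing $Q[X,q]$ (contained in $R(X,S_1)\cup\{q\}$) with $P_2$ (contained in $NR(X,S_1)\cup\{q\}$), whose vertex sets meet only in $\{q\}$, yields an $X$-$Y$ walk avoiding $S_2$, contradicting that $S_2$ is an $X$-$Y$ separator. The symmetric claim uses minimality of $S_2$. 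I remark that this last argument genuinely relies on the separators being minimal (which is the case wherever the observation is applied); without that hypothesis one would need either to assume minimality or to push the private-vertex bookkeeping through a more careful splicing.
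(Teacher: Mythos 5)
The paper states this observation without any proof (it appears, unproved, just before Lemma~\ref{lem:replacement2}), so there is no argument of the paper's to compare against; judged on its own, your proof is correct. The crossing argument for $S_2^r,S_1^r\neq\emptyset$ is the natural one, and your handling of the second pair is sound: under $S_2\subseteq R(X,S_1)$, minimality of $S_1$ gives an $X$-$Y$ path meeting $S_1$ only in the crossing vertex $q$, its suffix from $q$ lies in $NR(X,S_1)$ and hence avoids $S_2$, and splicing it with the $S_2$-avoiding prefix contradicts $S_2$ being a separator; thus $R(X,S_2)\subseteq R(X,S_1)$, properly so since $S_2\neq\emptyset$ (which, like the non-emptiness of the separators, follows from the non-degeneracy you assume), whence $S_1$ would cover $S_2$. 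One strengthening of your closing remark: minimality is not an artifact of your splicing but is genuinely indispensable, so the hedge about ``more careful bookkeeping'' can be dropped. Take variables $x,y,p_1,p_2,q,v$ and binary constraints on the pairs $(x,q),(q,p_1),(p_1,y),(x,p_2),(p_2,v)$; then $S_1=\{p_1,p_2\}$ and $S_2=\{q\}$ are disjoint, incomparable $x$-$y$ separators (witnessed by $q\in R(x,S_1)\setminus R(x,S_2)$ and $p_2\in R(x,S_2)\setminus R(x,S_1)$), yet $S_2\subseteq R(x,S_1)$, so $S_2^{nr}=\emptyset$; here $S_1$ is not minimal. So the observation as literally stated needs both of your side hypotheses, and both indeed hold where it is used: Lemma~\ref{lem:replacement2} takes $P$ and $Q$ minimal, and the inapplicability of Preprocessing Rule~\ref{red:rule S separates} guarantees a $W_1$-$W_2$ path in $\cB_S$.
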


\begin{tlemma}\label{lem:replacement2}
Let $(\III,k,S,W_1,W_2)$ be an instance of {\ebdcomp}, and let $Z$ be a solution for this instance.
 Let $Q$ be a minimal part of $Z\setminus S$ separating $W_1$ from $W_2$ in $\cB_S$, let $K=(Z\cap R_{B_S}[W_1,Q])$, and let $\ell=\vert K\setminus Q\vert$.  Let $P$ be a minimal 
$W_1$-$W_2$ separator in $\cB_S$ which is disjoint from $K$ and incomparable with $Q$, let $Q^r$ be $Q\cap R_{\cB_S}(W_1,P)$ and $Q^{nr}=Q\setminus Q^r$. Similarly, let $P^r$ be $P\cap R_{\cB_S}(W_1,Q)$ and $P^{nr}=P\setminus P^r$, and suppose that $W_1\cup P^r$ has the connecting gadget on it. Let $K^r=(K\cap R_{\cB_S}[W_1,P])$. Let $\III_1=\III[R_{\cB_S}[W_1,P]\cup S]$ be a boundaried CSP instance with $P^r\cup S$ as the boundary.

 Then, there exists a $\vert P^r\cup S\vert$-boundaried CSP instance $\hat \III$ with an annotated set of variables, and a bijection $\mu:\partial(\III)\rightarrow P^r\cup S$ such that the glued CSP instance $\III'=\III_1 \otimes_\mu \hat \III$ has the following properties.

\begin{enumerate}
\item The set $W_1\cup P^{nr}\cup S$ is a strong backdoor set into {\splitclass} in the CSP instance $\III'$.

\item The set $Q^r$ is a $\vert K^r \setminus Q^r\vert $-good $W_1$-$P^{nr}$ separator in $\cB_{S\cup \Delta(\hat \III)}(\III')=\cB'_{S\cup \Delta (\hat \III)}$.
 
\item For any $Q'$ which is a $W_1$-$P^{nr}$ separator in $\cB'_{S\cup \Delta (\hat \III)}$ well dominating $Q^r$ in $\III'$, the set $Q'\cup Q^{nr}$ well dominates the $W_1$-$W_2$ separator $Q$ in $\cB_S$. 

\item 
If $v$ is a variable disconnected from $W_1\cup P^{nr}$ by $K^r$ in $\cB'_{S\cup \Delta (\hat \III)}$, then $v$ is in $R(W_1,P)$ and $v$ is disconnected from $W_1\cup W_2$ by $K$ in $\cB_S$.

\item There is a constant $\eta$ and a family $\cH$ of boundaried CSP instances with an annotated variable set  such that $\cH$ contains $\hat \III$,  
has size bounded by $2^{2^{\eta k}}$ and can be computed in time $2^{2^{\eta k}}k^{\bigoh(1)}$.
\end{enumerate}

\end{tlemma}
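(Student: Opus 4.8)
The plan is to build $\hat\III$ by replacing the "far side'' of $P$ — everything outside $R_{\cB_S}[W_1,P]\cup S$ — with a bounded-size gadget that preserves exactly the information that $\III_1 = \III[R_{\cB_S}[W_1,P]\cup S]$ needs in order to recognize strong backdoors on its side. The key observation is that whether a set of constraints $\C \subseteq \III_1$ forms a forbidden set (with respect to a variable set contained in $R_{\cB_S}[W_1,P]\cup S$) depends only on $\C$ and the relations involved; by Observation~\ref{obs:local certificate}, forbiddenness localizes to $\var(\C)$. Hence the only thing the far side contributes is \emph{connectivity} across the boundary $P^r \cup S$ (whether two boundary vertices are joined by a path avoiding the deletion set) and \emph{forbidden structures that straddle the boundary}. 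I would encode connectivity by, for each subset $U$ of the boundary $P^r\cup S$ that becomes a single connected component in $\cB_{Z\setminus S}$ on the far side, adding a fresh ``connecting gadget'' (in the sense of the connecting-gadget definition) on $U$ using tautological binary constraints — these lie in all $\Gamma_i$, so they never create forbidden sets themselves. For the straddling forbidden structures I would, for each way a forbidden set $\C$ can intersect the boundary, attach a small bounded-arity gadget of constraints from $\Delta(\hat\III)$ realizing the same obstruction; the annotated set $\Delta(\hat\III)$ is precisely the non-terminal variables of these gadgets, and they are annotated rather than boundaried because they must \emph{not} be identified during gluing but \emph{must} be excluded from the separator-search graph $\cB'_{S\cup\Delta(\hat\III)}$.

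Having fixed the construction, the five properties follow fairly mechanically. For (1): since $W \cup S$ is a strong backdoor in $\III$ and $W_1 \cup P^{nr} \cup S$ separates the $W_1$-side ($R_{\cB_S}[W_1,P]$) from $W_2$-side, and our gadget only encodes obstructions already present, Lemma~\ref{lem:equivalence} plus the connecting-gadget Lemma~\ref{lem:connecting equiv} give that $W_1\cup P^{nr}\cup S$ is a strong backdoor in $\III'$. For (2): $K^r = K \cap R_{\cB_S}[W_1,P]$ together with $Q^r \cup S$ handles the sub-instance $\III[R[W_1,Q^r]\cup S]$ on the $W_1$-side inside $\III'$; the gadget replacement was designed so that this remains true in $\cB'_{S\cup\Delta(\hat\III)}$, giving that $Q^r$ is $|K^r\setminus Q^r|$-good. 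For (3): this is the analogue of the Monotonicity Lemma~\ref{lem:monotone} combined with the observation that $Q = Q^r \cup Q^{nr}$ where $Q^{nr}$ lies on the far side (untouched by the replacement, since it is in $R(W_1,Q)\setminus R(W_1,P)$ region handled by $P^{nr}$); a well-dominating replacement $Q'$ for $Q^r$ glues back with $Q^{nr}$ to well-dominate $Q$, using that the far-side behavior of $\III$ beyond $P$ is faithfully simulated. For (4): any $v$ that $K^r$ disconnects from $W_1\cup P^{nr}$ in $\cB'$ must actually be a variable of $\III_1$ (gadget internal variables are in $\Delta$ hence deleted), so $v \in R(W_1,P)$, and since $K \supseteq K^r$ and $K$ separates as required, $v$ is disconnected from $W_1\cup W_2$ by $K$ in $\cB_S$.

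The real work — and the step I expect to be the main obstacle — is property (5): producing an \emph{a priori} bounded family $\cH$ of candidate gadgets $\hat\III$, computable in time $2^{2^{\bigoh(k)}}$, that is guaranteed to contain the "correct'' $\hat\III$ for the actual (unknown) solution $Z$. The point is that $\hat\III$ as described above depends on $Z$ through which boundary subsets get connected on the far side and which straddling obstructions survive; but the boundary $P^r\cup S$ has size at most $k+\rho$, so the number of partitions of the boundary into connectivity classes is bounded by a function of $k$, and the number of distinct straddling-obstruction patterns is bounded because the languages are \emph{finite} (fixed arity $\rho$, fixed domain $\DDD$) — so the number of non-isomorphic bounded-arity gadgets with boundary of size $\le k+\rho$ and a bounded annotated set is at most doubly-exponential in $k$. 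I would enumerate \emph{all} such gadgets, obtaining $\cH$; the doubly-exponential bound comes from the $2^{O(k)}$ boundary variables each appearing in up to $|\DDD|^\rho$-many possible relations, and from the $2^{O(k)}$-many connectivity patterns, giving $2^{2^{O(k)}}$ total. Care is needed to bound the \emph{arity} of gadget constraints (here Observation~\ref{obs:arity} and the $\rho+k$ arity bound are essential — otherwise the family would be infinite) and to ensure the annotated variables are correctly placed so that deleting $S\cup\Delta(\hat\III)$ recovers exactly the intended separator graph; these are the delicate bookkeeping points, but they are finitary once the arity is controlled.
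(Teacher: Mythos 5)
Your high-level plan --- replace everything beyond $P$ by a bounded-size boundaried gadget that records boundary connectivity and boundary-straddling obstructions, then enumerate all candidate gadgets to obtain $\cH$ --- is the same strategy the paper follows, but two concrete ingredients are missing, and without them properties (2) and (3) do not go through. First, the annotated set $\Delta(\hat \III)$ is not ``the non-terminal variables of the gadgets'': in the paper it is exactly $K^{nr}=(K\setminus K^r)\cup Q^{nr}$, i.e.\ the portion of the solution that lives beyond $P$ (the gadget-internal variables are instead tied to $P^r$ by a connecting gadget, which also contradicts your argument for property (4) that gadget internals are annotated and hence deleted). This choice of $\Delta(\hat\III)$ is what makes the budget accounting work: property (2) asserts $\vert K^r\setminus Q^r\vert$-goodness of $Q^r$ after deleting $S\cup\Delta(\hat\III)$, and the proof of property (3) takes a witness $Y'$ of size at most $\vert K^r\setminus Q^r\vert$ for $Q'$ in $\III'$ and returns the witness $Y=Y'\cup(K^{nr}\setminus Q^{nr})$ for $Q'\cup Q^{nr}$ in $\III$, whose size is at most $\ell$ precisely because the annotated variables are the far-side solution variables and were ``free'' in $\III'$. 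If $\Delta(\hat\III)$ were gadget infrastructure, the obstructions you attach would effectively be pre-deleted when goodness in $\cB'_{S\cup\Delta(\hat\III)}$ is evaluated, and your sketch of (3) --- which contains no size/goodness argument at all --- has no way to pay for $K^{nr}\setminus Q^{nr}$.

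Second, ``for each way a forbidden set can intersect the boundary, attach a gadget realizing the same obstruction'' is not enough, because property (3) must hold against \emph{every} candidate separator $Q'$ and every witness set, not just against the original solution $Z$. The gadget therefore has to contain, for every assignment $\tau$ of $S\cup K^{nr}$ and every subset $J\subseteq[d]$, a set of \emph{actual} constraints from beyond $P$ that is $J$-forbidden under $\tau$ and stays connected to $P^r$ after deleting $S\cup K^{nr}$; this is the paper's marking scheme, giving a gadget of size $2^{\bigoh(k)}$ rather than a synthetic obstruction catalogue. These $(\tau,J)$-indexed representatives are exactly what the proof of (3) needs in its ``leftover replacement'' step: a forbidden set for the new deletion set splits into a near part that is $([d]\setminus J)$-forbidden and a far part that is $J$-forbidden, and the far part is swapped for a marked representative, strictly reducing the number of leftovers. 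Without quantifying over all $\tau$ and all $J$ (and without preserving connectivity to $P^r$ after removing the far-side solution variables), this exchange is unavailable and the direction ``$Q'$ good in $\III'$ implies $Q'\cup Q^{nr}$ good in $\III$'' fails; the $J$-forbidden refinement of forbiddenness does not appear anywhere in your proposal. Your enumeration argument for (5) and the path-rerouting arguments for (1) and (4) are fine in outline, but they presuppose a gadget with the structure above.
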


\begin{proof}
We first describe the instance $\hat \III$ and then prove that it has the properties claimed by the lemma. Let $K^{nr}=(K\setminus K^r)\cup Q^{nr}$. Consider the sub-instance $\III_2=\SB C\in \III \SM \var(C)\subseteq (NR_{\cB_S}(W_1,P)\cap R_{\cB_S}(W_1,Q))\cup P^r\cup K^{nr}\SE$.
In other words, we take the set of constraints containing variables which are either disconnected from $W_1$ by $P$ but not disconnected from $W_1$ by $Q$, or occur in $P^r \cup K^{nr}$. Two vertices $v,w\in \cB_{S\cup K^{nr}}(\III_2)$ are $\III_2$-\emph{connected} if they are connected in $\cB_{S\cup K^{nr}}(\III_2)$ ($\III_2$-connectivity of sets is defined analogously). Notice that $P^r$ is $\III_2$-connected by assumption.

We now perform the following marking scheme on this hypothetical sub-instance $\III_2$. For every assignment $\tau$ of $S\cup K^{nr}$, and for each $J\subseteq [d]$, if there exists a set $\bC$ of constraints such that 
\begin{enumerate}
\item $\bC|_{\tau}$ is $J$-forbidden (w.r.t. $\emptyset$), and
\item $\C\cup P^r$ is $\III_2$-connected,
\end{enumerate}

then we \emph{mark} the constraints in one such set $\bC$. Since $Q$ is $\ell$-good and in particular since $Q\cup K\cup S$ is a strong backdoor set in $\III[R_{\cB_S}[W_1,Q]\cup S]$, every relation occurring in $\bC$ belongs to one of the finite languages $\Gamma_1,\dots,\Gamma_d$ and hence $|\bC|$ does not depend on $k$. Since the number of possible assignments $\tau$ is bounded by $2^{\bigoh(k)}$, we observe that the set $M$ of all marked constraints has cardinality $2^{\bigoh(k)}$. 

To complete our construction of $\hat \III$, we begin by setting $\hat \III=\III_2[M\cup \var(M)\cup P^r\cup S\cup K^{nr}\cup Q^{nr}]$. We then add the connecting gadget on the set $(\var(M)\cup P^r)\setminus (S\cup K^{nr})$. Finally, we define the boundary of $\hat \III$ to be $P^r\cup S$ and define the annotated set $\Delta(\III)$ to be the set $K^{nr}$.
From the bound on $|M|$, it is readily observed that $|\hat \III|\leq 2^{\bigoh(k)}$.
We now prove that $\III'=\III_1\otimes_{identity} \hat \III$ satisfies the properties claimed by the lemma.

\begin{claim}
\label{cl:one}
The set $W_1\cup P^{nr}\cup S$ is a strong backdoor set into {\splitclass} in the CSP instance~$\III'$.\end{claim}

We actually prove a stronger claim, specifically that already $W_1\cup S$ is a strong backdoor set  (into \splitclass) in $\III'$.
Assume the converse; then there exists a set $\bC$ of forbidden constraints in $\III'$ with respect to $W_1\cup S$ which occur in a connected component of $\cB_{W_1\cup S}(\III')$. We first show that $\bC$ must also be connected in $\cB_{W_1\cup S\cup W_2}(\III)$. Consider any path between $v,w\in \var(\bC)$ in $\cB(\III')$, and some constraint $T=(S,\DDD^2)$ on this path which is not present in $\cB(\III)$; observe that $T$ must be a connecting gadget. Let $v',w'$ be the neighbors of $T$. By construction, $v',w'$ are $\III_2$-connected and hence also connected in $\cB_S(\III)$. Since $(W_1\cup S\cup W_2)\cap \cB_S(\III')=\emptyset$, it follows that any such $v',w'$ and consequently also $v,w$ are connected in $\cB_{W_1\cup S\cup W_2}(\III)$.

So, $\bC$ is also connected in $\cB_{W_1\cup S\cup W_2}(\III)$. Since $\var(\bC)\cap W_2=\emptyset$, we conclude that $\bC$ is also forbidden with respect to $W_1\cup S\cup W_2$. This contradicts the fact that $W_1\cup W_2\cup S$ is a solution to our initial instance of \ebdcomp. 

\begin{claim}
\label{cl:two}
The set $Q^r$ is a $\vert K^r \setminus Q^r\vert $-good $W_1$-$P^{nr}$ separator in $\cB_{S\cup \Delta(\hat \III)}(\III')=\cB'_{S\cup \Delta (\hat \III)}$.
\end{claim}

We first prove that $Q^r$ is indeed a separator as claimed. Suppose for a contradiction that there exists a path $\alpha$ in $\cB'_S$ between $v\in W_1$ and $w\in P^{nr}$. By definition, $Q^r$ is a $W_1$-$P^{nr}$ separator in $\cB_S$, and so $\alpha$ must necessarily contain an edge $v'w'$ in $\cB'_{S\cup \Delta (\hat \III)} - Q^r$ which is not in $\cB_S - Q^r$; by construction, this implies that $v'w'$ are $\III_2$-connected. For any such $v'w'$, there hence exists some path $\alpha'$ in $\cB_S(\III_2)\subseteq \cB_S$, and from this it follows that $v,w$ are also connected in $\cB_S - Q_r$. This contradicts the fact that $Q^r$ is a $W_1$-$P^{nr}$ separator in $\cB_S$, and so $Q^r$ must also be a $W_1$-$P^{nr}$ separator in $\cB'_{S\cup \Delta (\hat \III)}$.

Next, we argue that $Q^r$ is $\vert K^r\setminus Q^r\vert$-good in $\cB'_{S\cup K^{nr}}$, and that this is in fact witnessed by $K^r\setminus Q^r$. Indeed, assume for a contradiction that there exists a set $\C$ of constraints in $\III'$ which are connected and forbidden w.r.t. $S\cup K^{nr}\cup \cup Q^r\cup (K^r\setminus Q^r)=S\cup K$. Clearly, none of the constraints in $\C$ contain the relation $\DDD^2$, and hence $\C\subseteq \III$. Furthermore, $\C$ is also connected in $\III$ by the same path argument as in Claim~\ref{cl:one}: any path between $c_1,c_2\in \C$ which contains edges either exists in $\cB_{S\cup K}$, or can be replaced by a new path in $\cB_{S\cup K}$ which uses $\III_2$-connectivity to circumvent connectivity constraints. But since $\var(\C)\cap (S\cup K)$ is the same in $\III$ and $\III'$, we conclude that $\var(\C)$ is forbidden w.r.t. $S\cup K$ in $\III$. This yields a contradiction with $Q$ being $\ell$-good in $\III$ as witnessed by $K\setminus Q$.

\begin{claim}
\label{cl:three}
For any $Q'$ which is a $W_1$-$P^{nr}$ separator in $\cB'_{S\cup \Delta (\hat \III)}$ well dominating $Q^r$ in $\III'$, the set $Q'\cup Q^{nr}$ well dominates the $W_1$-$W_2$ separator $Q$ in $\cB_S$. 
\end{claim}

The set $\hat Q=Q'\cup Q^{nr}$ dominates $Q$ in $\cB_S$ by definition, and therefore it suffices to prove that $\hat Q$ is $\ell$-good in $\cB_S$. Let $Y'$ be the variable set certifying that $Q'$ is $|K^r\setminus Q^r|$-good in $\cB'_{S\cup K^{nr}}$. We claim that $Y=Y'\cup (K^{nr}\setminus Q^{nr})$ certifies that $\hat Q$ is $\ell$-good in $\cB_S$, and argue that $|Y|\leq \ell$. Since $|Y'|\leq |K^r\setminus Q^r|$ by assumption and both $Y'$ and $|K^r\setminus Q^r|$ are disjoint from $K^{nr}\setminus Q^{nr}$, it follows that $|Y|\leq |(K^{r}\setminus Q^r)\cup (K^{nr}\setminus Q^{nr})|=|(K\setminus Q)|=\ell$.

Now, assume for a contradiction that there exists a set $\C$ of constraints in $\III[R_{\cB_S}[W_1,\hat Q]\cup S]$ which are connected and forbidden w.r.t. $\hat Q\cup S\cup Y$. Observe that $\C\not \subseteq \III_1$, since then $\C$ would also be forbidden w.r.t. $S\cup K^{nr}\cup Q'\cup Y'$ and connected in $\cB'_S$, which would contradict our assumption on $Q'$. So it must be the case that $\C_2=\C\cap \III_2$ is non-empty. First, we consider the simpler case where $\C_2\subseteq \hat \III$; by the construction of $\hat \III$ and in particular the addition of the connectivity gadgets, it follows that $\C$ would then also be connected in $\III'$ and hence forbidden in $\III'$ w.r.t. $S\cup K^{nr}\cup Q'\cup Y'$. This excludes the existence of any set $\C$ of forbidden constraints w.r.t. $\hat Q\cup S\cup Y$ such that $\C\subseteq \III'$.

Next, we proceed the general case where $\C_2$ contains a subset of constraints, say $\C^+_2$, which are not contained in $\III'$ and which are pairwise $\III_2$-connected; we refer to any such set $\C^+_2$ as a \emph{leftover} of $\C$.
  Let us now fix $\C$ to be some set of forbidden constraints in $\III$ which has a minimum number of leftovers; in the previous paragraph, we have argued that $\C$ must contain at least one leftover, and we will reach a contradiction by showing that there exists a set $\C'$ of forbidden constraints w.r.t. $\hat Q\cup S\cup Y$ with less leftovers than $\C$.

Let $\tau$ be the assignment of $S\cup \hat Q\cup Y$ certifying that $\C$ is forbidden in $\III$, and let $\tau'$ be the restriction of $\tau'$ to $K^{nr}\cup S$. Let $\C^+_2$ be some leftover of $\C$, and let $J\subseteq [d]$ be such that $\C^+_2|_{\tau}$ is $J$-forbidden w.r.t. $S \cup Y'_2$ in $\III$. Since $\C^+_2$ was not marked during our construction of $\hat \III$, there must exist another ``marked'' set of constraints $\C^M_2$ in $\hat \III$ with the following properties:
\begin{itemize}
\item $\C^M_2|_{\tau'}$ is also $J$-forbidden (w.r.t. $\emptyset$), and
\item $\C^M_2\cup P^r$ is $\III_2$-connected.
\end{itemize}

To finish the argument, let $\C^M=(\C\setminus \C^+_2)\cup \C^M_2$. By construction, $\C^M$ is connected in $\cB_{S\cup \hat Q\cup Y}(\III)$; indeed, $C^M_2$ is pairwise $\III_2$-connected, and is connected to at least one variable $p\in P^r$ which was $\III_2$-connected to $\C^+_2$, which in turn guarantees connectivity to the rest of $\C$. Furthermore, by the construction of $\C^M$ we observe that $\C^M\setminus \C^M_2$ is $([d]\setminus J)$-forbidden w.r.t. $S\cup \hat Q\cup Y$ and $\C^M_2$ is $J$-forbidden w.r.t. $S\cup \hat Q\cup Y$ (because $(S\cup \hat Q\cup Y)\cap \III_2= K^{nr}\cup S$); so, $\C^M$ must be forbidden w.r.t. $S\cup \hat Q\cup Y$ in $\III$. Since $\C^M$ has one less leftover than $\C$, we have reached a contradiction to the existence of $\C$.

\begin{claim}
\label{cl:four}
If $v$ is a variable disconnected from $W_1\cup P^{nr}$ by $K^r$ in $\cB'_{S\cup \Delta (\hat \III)}$, then $v$ is in $R(W_1,P)$ and $v$ is disconnected from $W_1\cup W_2$ by $K$ in $\cB_S$. 
\end{claim}

By construction of $\hat \III$, it is easy to see that $v$ is not separated from $P^r$ by $K^r$ in $\cB'_{S\cup \Delta (\hat \III)}$, and due to the connecting gadget on $W_1\cup P^r$, $v$ is not separated from $W_1$ as well. We now prove the second part of the claim.
Assume for a contradiction that there exists a path $\alpha$ from $v$ to $W_1\cup W_2$ in $\cB_{S\cup K}$. First, consider the case where $\alpha$ does not intersect $\III_2$. Then $\alpha$ must exist in $\III_1$ and in particular also in $\cB'_{S\cup K^{nr}}$, which contradicts our assumption on $v$. 

On the other hand, assume $\alpha$ does intersect $\III_2$. Since $P^r$ is by definition a $W_1$-$(\III_2\setminus P^r)$ separator in $\cB_S$, this means that $\alpha$ must intersect $P^r$; let $a$ be the first vertex in $P^r$ on the path $\alpha$ from $v$. Since $\alpha$ ends in $W_1\cup W_2$, neither of which intersect with $\cB_{S\cup K^{nr}}(\III_2)$, there must be a last vertex $b$ in $\cB(\III_2$) on $\alpha$ (in other words, the path leaves $\cB_{S\cup K^{nr}}(\III_2)$ from $b$ and does not return there). Since $b\not \in Q\subseteq K$ by assumption, it must follow that $b\in P^r$. But then $a,b$ are connected by a connectivity gadget in $\III'$, and hence there is a path of length $2$ between $a$ and $b$ in $\cB'_{S\cup K^{nr}}$ which guarantees the existence of a $v$-$(W_1\cup W_2)$ path $\alpha'$ in $\cB'_{S\cup K^{nr}}$, a contradiction.

\begin{claim}
\label{cl:five}
There is a constant $\eta$ and a family $\cH$ of boundaried CSP instances such that $\cH$ contains $\hat \III$,  
has size bounded by $2^{2^{\eta k}}$ and can be computed in time $2^{2^{\eta k}}k^{\bigoh(1)}$.
\end{claim}

For the proof of the final statement, observe that $\hat \III$ is an instance containing $2^{\bigoh(k)}$ constraints and variables, at most $k$ of which are marked. Since the number of such marked instances is bounded by $2^{2^{\bigoh(k)}}$ and these can be enumerated in time $2^{2^{\bigoh(k)}}$, the proof of the lemma is complete.
\end{proof}

}
\lv{
\subsubsection{The algorithm for general instances}
}
\sv{\paragraph{The algorithm for general instances} We now sketch our algorithm for general instances, beginning with the following preprocessing rule.}
\lv{
\begin{tlemma}\label{lem:S separates}
Let $(\III,k,S,W_1,W_2)$ be an instance of {\ebdcomp} and let $Z$ be a solution for this instance and let $Z'=Z\cap R_{\cB}[W_1,S]$. Let $\III'$ denote the instance $\III[R_{\cB}[W_1,S]]$. If $S$ disconnects $W_1$ and $W_2$, then $(\III',\vert Z'\vert,S,W_1)$ is a non-separating {\Yes} instance of {\ebdcomp} and conversely for any non-separating solution $Z''$ for the instance $(\III',\vert Z'\vert,S,W_1)$, the set $\hat Z=(Z\setminus Z')\cup Z''$ is a solution for the original instance.

\end{tlemma}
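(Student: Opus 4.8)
The plan is to reduce everything to one structural observation together with Observation~\ref{obs:local certificate} (whether a set of constraints is forbidden with respect to a variable set depends only on the restriction of that set to the scopes of the constraints). The observation: because $S$ disconnects $W_1$ from $W_2$, the component $R_{\cB}(W_1,S)$ of $W_1$ in $\cB-S$ is disjoint from $W_2$, and $S$ — hence every superset $T\supseteq S$ — ``seals off'' $R_{\cB}(W_1,S)$, i.e.\ separates it from the part of $\cB=\cB(\III)$ lying outside $R_{\cB}[W_1,S]$. As a preliminary I would record the immediate consequence: for every $T\supseteq S$ and every vertex $v\notin T$ of $\cB(\III')$, the component of $v$ in $\cB_T$ equals its component in $\cB(\III')_T$ (recall $\cB(\III')$ is the subgraph of $\cB$ induced on $R_{\cB}[W_1,S]$, and $\III,\III'$ carry the same constraints there, including the connecting gadget on $W_1$, whose scopes lie in $W_1\subseteq R_{\cB}(W_1,S)$). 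I will also use throughout that $S\subseteq Z'$ while $Z\setminus Z'$ lies outside $R_{\cB}[W_1,S]$.

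For the forward direction, the disjointness and size conditions that make $(\III',|Z'|,S,W_1)$ a legitimate {\ebdcomp} instance with $Z'$ as a candidate solution are routine ($S\subseteq Z'$, $Z'\cap W_1=\emptyset$, $Z'$ of the required size, and $|W_1\cup S|\le|W\cup S|\le k+\rho$). That $W_1\cup S$ and $Z'$ are strong backdoors for $\III'$ I would prove by the same contradiction in both cases: if, by Lemma~\ref{lem:equivalence}, a component of $\cB(\III')_U$ carries a set $\C$ of constraints forbidden with respect to $U$ (for $U=W_1\cup S$, resp.\ $U=Z'$), then by the preliminary $\C$ lies in a single component of $\cB_U$; since $\var(\C)\subseteq R_{\cB}[W_1,S]$ is disjoint from the ``extra'' variables $W_2$, resp.\ $Z\setminus Z'$, Observation~\ref{obs:local certificate} upgrades this to ``$\C$ forbidden with respect to $W\cup S$'', resp.\ ``$\C$ forbidden with respect to $Z$'', and $\C$ still lies in a single component of $\cB_{W\cup S}$, resp.\ $\cB_Z$; this contradicts, via Lemma~\ref{lem:equivalence}, that $W\cup S$, resp.\ $Z$, is a strong backdoor. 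Finally, the connecting gadget keeps $W_1$ inside one component of $\cB(\III')_{Z''}$ for every $Z''$ disjoint from $W_1$, so every solution of $(\III',|Z'|,S,W_1)$ — in particular $Z'$ — is non-separating; the instance is therefore a non-separating {\Yes}-instance.

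For the converse, let $Z''$ be a (necessarily non-separating) solution of $(\III',|Z'|,S,W_1)$ and set $\hat Z=(Z\setminus Z')\cup Z''$. The checks $|\hat Z|\le|Z|\le k$ (the two parts are disjoint), $\hat Z\supseteq S$, $\hat Z\cap W=\emptyset$, and ``$\hat Z$ disconnects $W_1$ from $W_2$'' (as $\hat Z\supseteq S$) are routine. The crux is that $\hat Z$ is a strong backdoor for $\III$. Assume not; by Lemma~\ref{lem:equivalence} some component $\cX$ of $\cB_{\hat Z}$ carries a forbidden set $\C$. Since $\hat Z\supseteq S$ seals off $R_{\cB}(W_1,S)$, $\cX$ lies entirely inside $R_{\cB}(W_1,S)$ or entirely outside $R_{\cB}[W_1,S]$. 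In the inside case $\C\subseteq\III'$; using $\hat Z\cap R_{\cB}[W_1,S]=Z''$, the preliminary, and Observation~\ref{obs:local certificate}, one finds $\C$ forbidden with respect to $Z''$ and contained in a single component of $\cB(\III')_{Z''}$, contradicting that $Z''$ is a strong backdoor for $\III'$. In the outside case I would first show that any scope variable of a constraint of $\C$ that happens to belong to $R_{\cB}[W_1,S]$ must in fact belong to $S$ (otherwise that constraint would itself lie in $R_{\cB}(W_1,S)$, contrary to the location of $\cX$), deduce $Z\cap\var(\C)=\hat Z\cap\var(\C)$, and then, since $Z\triangle\hat Z\subseteq R_{\cB}(W_1,S)$ is disjoint from $\cX$, apply Observation~\ref{obs:local certificate} once more to conclude that $\C$ is forbidden with respect to $Z$ and lies in a single component of $\cB_Z$, contradicting that $Z$ is a solution. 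The main obstacle I anticipate is precisely this outside case — pinning down where the scope variables of a forbidden set straddling the cut sit, so that forbidden-ness can be transported between $\hat Z$ and $Z$; the remainder is a mechanical use of the sealing-off property and Observation~\ref{obs:local certificate}.
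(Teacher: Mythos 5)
Your proposal is correct and follows essentially the same route as the paper's proof: both directions are argued by contradiction through Lemma~\ref{lem:equivalence} and Observation~\ref{obs:local certificate}, with the converse split into the $R_{\cB}(W_1,S)$ and $NR_{\cB}(W_1,S)$ cases and the connecting gadget on $W_1$ ensuring non-separation. The only difference is that you spell out the routine legitimacy checks (e.g.\ that $W_1\cup S$ is a strong backdoor for $\III'$ and the component-equality preliminary) that the paper leaves implicit.
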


\begin{proof} Suppose that $Z'$ is not a strong backdoor set for the instance $(\III',\vert Z'\vert,S,W_1)$. Let $\cB'$ be the incidence graph of the instance $\III'$. Then, some component of $\cB_{Z'}'$ contains  a set $\C$ of constraints forbidden with respect to $Z'$. However, since $S$ disconnects these constraints from $Z\setminus Z'$, it must be the case that these constraints also occur in a single connected component of $\cB_Z$.
Also, since $Z\setminus Z'$ is disjoint from $\var(\C)$, by Observation \ref{obs:local certificate} we know that $\C$ is also forbidden with respect to $Z'\cup (Z\setminus Z')=Z$, a contradiction. Therefore, $Z'$ is indeed a strong backdoor set for the instance $(\III',\vert Z'\vert,S,W_1)$. Since the connecting gadget has been added on $W_1$ (by assumption), it must be the case that $Z'$ is a non-separating solution for the instance $(\III',\vert Z'\vert,S,W_1)$.

For the converse direction, suppose that the set $\hat Z$ is not a solution for the original instance and let $\C$ be a set of constraints in a connected component of $\cB_{\hat Z}$ which is forbidden with respect to $\hat Z$. Clearly, $\C$ is contained entirely inside one of the sets $R_\cB(W_1,S)$ or $NR_\cB(W_1,S)$. 
We first consider the case when $\C$ is contained in $R_\cB(W_1,S)$. Then, it must be the case that $\C$ is also in a single connected component of $\cB'-Z''$. However, since $Z''$ is assumed to be a strong backdoor set for the CSP instance $\III'$, $\C$ is not forbidden with respect to $Z''$ and hence also not forbidden with respect to $\hat Z$.
On the other hand, we consider the case when $\C$ is contained in $NR_\cB(W_1,S)$. Then $\C$ must be contained in some component of $NR_\cB(W_1,S)-\hat Z$ and in particular in some component of $NR_\cB(W_1,S)-(Z\setminus Z'')=NR_\cB(W_1,S)-(Z\setminus Z')$. Also, since $Z'$ is disjoint from $\var(\C)$, by Observation \ref{obs:local certificate} we know that $\C$ is also forbidden with respect to $Z'\cup (Z\setminus Z')=Z$, a contradiction.
\end{proof}
}

\begin{reduction}\label{red:rule S separates}
Let $(\III,k,S,W_1,W_2)$ be an instance of {\ebdcomp}.
If $S$ disconnects $W_1$ from $W_2$, then compute a non-separating solution $Z'$ for the instance $(\III,k',S,W_1)$ 
where $k'$ is the least possible value of $i\leq k$ such that $(\III,i,S,W_1)$ is a {\Yes} instance. Delete $Z'$ and return the instance $(\III-Z',k-\vert Z'\vert,S,W_2)$.
\end{reduction}

\lv{It follows from Lemma \ref{lem:S separates} that the above rule is correct and we obtain a bound on the running time from that of the algorithm of Lemma \ref{lem:type1 solution}.}
\sv{
 The bound on the running time of the preprocessing rule follows from that of the algorithm of Lemma \ref{lem:type1 solution}.}
Henceforth, we assume that in any given instance of {\ebdcomp}, the above rule is not applicable.
We now move to the description of the subroutine which is at the heart of our main algorithm. \sv{We call a separator $X$ $\ell$-good if $\ell$ is the size of some strong backdoor extending $X\cup S$ in $\III[R_{\cB_S}[W_1,X]\cup S]$. We call an $\ell$-good separator $X$, $\ell$-important if there is no other $\ell$-good separator which dominates $X$.} 
Recall that every solution $Z$ to an instance of {\ebdcomp} by assumption contains an $\ell$-good $W_1$-$W_2$ separator $X$ (here, as well as further on, by separator we implicitly mean a separator in $\cB_S$ unless stated otherwise).

\sv{
 \begin{lemma}\label{lem:full algo} 
   Let $(\III,k,S,W_1,W_2)$ be an instance of {\ebdcomp}, let $0\leq
   \ell,\lambda\leq k$.  There is an algorithm that, given a \emph{valid} tuple
   $<(\III,k,S,W_1,W_2),\lambda,\ell>$ runs in time
   $2^{2^{\bigoh(k)}}\vert \III\vert^{\bigoh(1)}$ and returns a set
   ${\cal R}$ which is disjoint from $S$ and contains at most~$2^{2^{\bigoh(k)}}$ variables such that for
   every $\ell$-important $W_1$-$W_2$ separator $X$ of size at most
   $\lambda$ in $\cB_S$ and for every solution $Z\supseteq X$ for the
   given instance of {\ebdcomp},
 \begin{itemize}
 \item  $\cal R$ intersects $X$  or
 \item  there is a variable in $\cal R$ which is separated from $W$ by $Z$ or 
 \item $\cal R$ intersects $Z\setminus S$.
 \end{itemize}

%
%
 
 \end{lemma}
 }
 
 \lv{
  \begin{lemma}\label{lem:full algo} 
   Let $(\III,k,S,W_1,W_2)$ be an instance of {\ebdcomp}, let $0\leq
   \ell,\lambda\leq k$.  There is an algorithm that, given a \emph{valid} tuple
   $<(\III,k,S,W_1,W_2),\lambda,\ell>$ runs in time
   $2^{2^{\bigoh(k)}}\vert \III\vert^{\bigoh(1)}$ and returns a set
   ${\cal R}$ which is disjoint from $S$ and contains at most~$2^{2^{\bigoh(k)}}$ variables such that for
   every $\ell$-important $W_1$-$W_2$ separator $X$ of size at most
   $\lambda$ in $\cB_S$ and for every solution $Z\supseteq X$ for the
   given instance of {\ebdcomp},
 \begin{itemize}
 \item  $\cal R$ intersects $X$  or
 \item  there is a variable in $\cal R$ which is separated from $W$ by $Z$ or 
 \item $\cal R$ intersects $Z\setminus S$.
 \end{itemize}

%
%
 
 \end{lemma}
 }
 \sv{\begin{proof}[Sketch of Proof]
  Recall that Preprocessing Rule~\ref{red:rule S separates} is assumed to have been exhaustively applied and hence, there must be some $W_1$-$W_2$ path in $\cB_S$. Similarly, if there is no $W_1$-$W_2$ separator of size at most $\lambda$ in $\cB_S$, then we return {\No}, that is, the tuple is invalid.
   Otherwise, we execute the algorithm of Lemma \ref{lem:tight sep computation} to compute a tight $W_1$-$W_2$ separator sequence $\cal I$ of order $\lambda$. We then partition $\cal I$ into $\ell$-good and $\ell$-bad separators. We do this by testing each separator in the sequence for the presence of non-separating solutions (this is sufficient due to the presence of the connecting gadget which is assumed to have been placed on $W_1$), and this can be done in time $2^{\bigoh(\ell^2)}\vert \III\vert^{\bigoh(1)}$ by invoking Lemma \ref{lem:type1 solution}.  Now, let $P_1$ be component-maximal among the $\ell$-good separators in $\cal I$ (if any exist) and let $P_2$ be component-minimal among the $\ell$-bad separators in $\cal I$ (if any exist). We set ${\cal R}:=P_1\cup P_2$. For each $i\in \{1,2\}$, every choice of $P_i^r\subseteq P_i$ and for every boundaried instance $\hat \III$ which needs to be considered for the replacement technique, we construct the glued CSP instance $\III_{P_i^r,\delta}=\III[R[W_1,P_i]\cup S]\otimes_{\delta} \hat \III$. We then recursively invoke this algorithm on the tuple $<(\III_{P_i^r,\delta},k-j,S\cup \tilde S,W_1,P_i\setminus P_i^r),\lambda',\ell'>$ for every $0\leq \lambda'<\lambda$, $1\leq j\leq k-1$ and $0\leq \ell'\leq \ell$, where $\tilde S$ is the annotated set of variables in $\hat \III$. We add the union of  the variable sets returned by these recursive invocations to ${\cal R}$ and return the resulting set. Both the correctness and the running time bound are proved by induction on $\lambda$. Details can be found in the appended full version. 
    \end{proof}
   
   Using Lemma~\ref{lem:full algo}, we prove the following lemma which settles the case for separating instances. Lemma~\ref{lem:ebdcomp algo} then follows.
}

\lv{ 
 \begin{proof}
 Recall that Preprocessing Rule~\ref{red:rule S separates} is assumed to have been exhaustively applied and hence, there must be some $W_1$-$W_2$ path in $\cB_S$. Similarly, if there is no $W_1$-$W_2$ separator of size at most $\lambda$ in $\cB_S$, then we return {\No}, that is, the tuple is invalid.
   Otherwise, we execute the algorithm of Lemma \ref{lem:tight sep computation} to compute a tight $W_1$-$W_2$ separator sequence $\cal I$ of order $\lambda$. We then partition $\cal I$ into $\ell$-good and $\ell$-bad separators. We do this by testing each separator in the sequence for the presence of non-separating solutions (this is sufficient due to the presence of the connecting gadget which is assumed to have been placed on $W_1$), and this can be done in time $2^{\bigoh(\ell^2)}\vert \III\vert^{\bigoh(1)}$ by invoking Lemma \ref{lem:type1 solution}.  Now, let $P_1$ be component-maximal among the $\ell$-good separators in $\cal I$ (if any exist) and let $P_2$ be component-minimal among the $\ell$-bad separators in $\cal I$ (if any exist). We set ${\cal R}:=P_1\cup P_2$. For each $i\in \{1,2\}$, we now do the following.


\begin{figure}[t]
\begin{center}
\includegraphics[height=240 pt, width=320 pt]{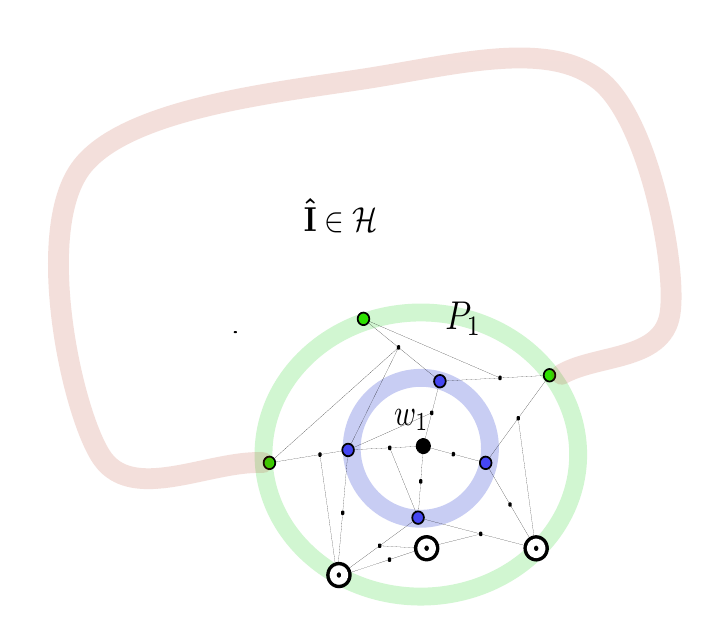}
\caption{An illustration of the instance obtained by gluing some $\hat \III\in \cH$ to the instance $\III[R[W_1,P_1]\cup S]$. }
\label{fig:gluing}
\end{center}
\end{figure}

We execute the algorithm of Lemma \ref{lem:replacement2}, Claim~\ref{cl:five} to compute a family ${\cal H}$ of boundaried CSP instances with an annotated set of variables.  Then, for every choice of $P_i^r\subseteq P_i$, for every instance $\hat \III\in \cH$ with $\vert P_i^r\cup S\vert$ terminals and for every possible bijection $\delta:\partial(\hat \III)\rightarrow P_i^r\cup S$, we construct the glued CSP instance $\III_{P_i^r,\delta}=\III[R[W_1,P_i]\cup S]\otimes_{\delta} \hat \III$ (see Figure \ref{fig:gluing}, where the boundary of the instance $\III[R[W_1,P_i]\cup S]$ is defined as $P_i^r\cup S$ and the connecting gadget is added on $W_1\cup P_i^r$. We then recursively invoke this algorithm on the tuple $<(\III_{P_i^r,\delta},k-j,S\cup \tilde S,W_1,P_i\setminus P_i^r),\lambda',\ell'>$ for every $0\leq \lambda'<\lambda$, $1\leq j\leq k-1$ and $0\leq \ell'\leq \ell$, where $\tilde S$ is the annotated set of variables in $\hat \III$. We add the union of  the variable sets returned by these recursive invocations to ${\cal R}$ and return the resulting set.
This completes the description of the algorithm. We now proceed to the proof of correctness of this algorithm.

\paragraph{Correctness.} We prove the correctness by induction on $\lambda$. Consider the base case, when $\lambda=1$ and there is a path from $W_1$ to $W_2$ in $\cB_S$. We argue the correctness of the base case as follows. 
Since $X$ has size 1, it cannot be incomparable with any distinct $W_1$-$W_2$ separator of size 1. Therefore, $X$ has to be equal to $P_1$ or covered by $P_1$ ($P_1$ exists since $X$ itself is $\ell$-good by assumption).
 In either case we are correct. The former case is trivially accounted for since $P_1$ is contained in $\cR$ and the latter case is accounted for because $P_1$ clearly well-dominates $X$. We now move to the induction step with the induction hypothesis being that the algorithm runs correctly (the output satisfies the properties claimed in the statement of the lemma) on all tuples where $\lambda<\hat \lambda$ for some $\hat \lambda\geq 2$. Now, consider an invocation of the algorithm on a tuple with $\lambda=\hat \lambda$.

Let $Z$ be a solution for this instance containing the $\ell$-important separator $X$, i.e., $X\subseteq (Z\setminus S)$, of size at most $\lambda$. If $X$ intersects $P_1\cup P_2$, then the algorithm is correct since $\cal R$ intersects $Z\setminus S$. Therefore, we may assume that $X$ is disjoint from $P_1\cup P_2$. 

Now, suppose that $X$ is covered by $P_1$. In this case, since $P_1$ is also $\ell$-good and has size at most $\lambda$, by the definition of well-domination, $P_1$ well-dominates $X$, contradicting our assumption that $X$ is $\ell$-important. 

Similarly, by the Monotonocity Lemma (Lemma \ref{lem:monotone}), since $X$ is $\ell$-good and $P_2$ is not, it cannot be the case that $X$ covers $P_2$. Now, suppose that $X$ covers $P_1$ and is itself covered by $P_2$. However, due to the maximality of the tight separator sequence, $X$ must be contained in $\cal I$. But this contradicts our assumption that $P_1$ is a component-maximal $\ell$-good separator in the sequence $\cal I$ and $P_1\neq X$. 

\begin{figure}[t]
\begin{center}
\includegraphics[height=240 pt, width=270 pt]{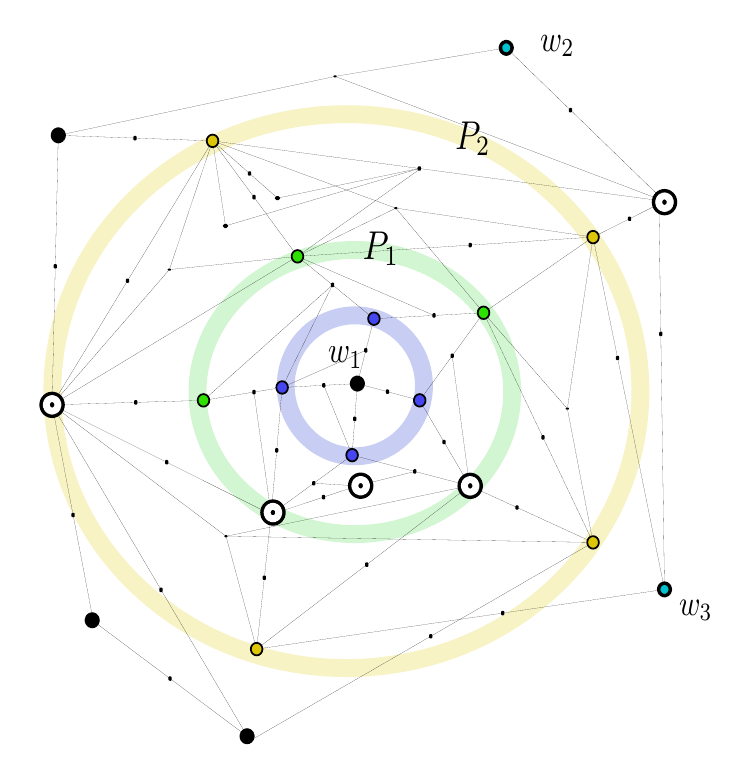}\caption{An illustration of the case where $X$ (the dotted circles) is incomparable with $P_1$ and $P_2$. }
\label{fig:incomparable}
\end{center}
\end{figure}

Finally, we are left with the case when $X$ is incomparable with $P_1$ (if $P_1$ is defined) or $P_2$ (otherwise). Without loss of generality, suppose that $X$ is incomparable with $P_1$ (see Figure \ref{fig:incomparable}). The argument for the case when $P_1$ does not exist is analogous and follows by simply replacing $P_1$ with $P_2$ in the proof.

Let $K\subseteq Z$ be a strong backdoor set for $\III[R[W_1,X]\cup S]$ extending $X\cup S$. If $P_1\cap K$ is non-empty, then $P_1\cap Z$ is non-empty as well, and since ${\cal R}$ contains the vertices in $P_1$, the algorithm is correct. Therefore, we assume that $P_1$ and $K$ are disjoint. Now, let $X^r=R_{\cB_S}(W_1,P_1)\cap X$ and $X^{nr}=X\setminus X^r$. Similarly, let $P_1^r=R_{\cB_S}(W_1,X)\cap P_1$ and $P_1^{nr}=P_1\setminus P_1^r$. Since $X$ and $P_1$ are incomparable, the sets $X^{r}$, $X^{nr}$, $P_1^r$ and $P_1^{nr}$ must all be non-empty. 
  Let $K^r=(K\cap R(W_1,P))\cup S$. Furthermore, if any variable in $P_1^r$ is not in the same component as $W_1$ in $\cB_Z$, then ${\cal R}$ contains a variable separated from $W$ by $Z$, also implying that the algorithm is correct. Hence, we may assume that $P_1^r$ is contained in the same component as $W_1$ in $\cB_Z$. Observe that the sets defined above satisfy the premises of Lemma \ref{lem:replacement2} with $P=P_1$ and $Q=X$ in the statement of the lemma. Therefore, there is a $\vert P_1^r\cup S\vert$-boundaried instance $\hat \III$ with an annotated set $\tilde S$ and an appropriate bijection $\mu:\partial(\hat \III)\rightarrow P_1^r\cup S$ with the properties claimed in the statement of Lemma \ref{lem:replacement2}. Now, consider the recursion of the algorithm on the tuple $<(\III_{P_1^r,\mu},k_1,S\cup \tilde S,W_1,P_1^{nr}),\lambda',\ell'>$, where $\III_{P_1^r,\mu}$ is the instance obtained by gluing together $\III[R[W_1,P_1]\cup S]$ (with $P_1^r\cup S$ as the boundary) and $\hat \III$ via the bijection $\mu$ with the connecting gadget added on $W_1\cup P_1^r$, $\lambda'=\vert X^r \vert$, $k_1=\vert K^r\vert$, and $\ell'=\vert K^r\vert$.

In order to apply the induction hypothesis on the execution of the algorithm on this tuple, we need to prove that the tuple is `valid', that is, it satisfies the conditions in the premise of the lemma. In order to prove this, it is sufficient for us to prove that $(\III_{P_1^r,\mu},k_1,S\cup \tilde S,W_1,P_1^{nr})$ is indeed a valid instance of {\ebdcomp}. For this to hold, it must be the case that $W_1\cup P_1^{nr}\cup S$ is a strong-backdoor set for the CSP instance $\III_{P_1^r,\mu}$. But this property is indeed guaranteed by Lemma \ref{lem:replacement2}, Claim~\ref{cl:one}. Therefore, the tuple satisfies the conditions in the premise of the lemma and since $\lambda'<\lambda=\hat \lambda$, we may apply the induction hypothesis.

Since $X$ is $\ell$-important, from Lemma \ref{lem:replacement2}, Claims~\ref{cl:two} and~\ref{cl:three} it follows that $X^r$ must also be $k_1$-important in $\III_{P_1^r,\mu}$. By the induction hypothesis, the algorithm is correct on this tuple and the returned set, call it $\cR'$, either intersects $X^r$ (due to the aforementioned argument using Claims~\ref{cl:two} and~\ref{cl:three}), or contains a variable separated from $W_1\cup P^{nr}$ by $K^r$ or contains a variable in $K^r$. 
 By Lemma \ref{lem:replacement2}, Claim~\ref{cl:four} it holds that in the second case $\cR'$ contains a variable separated from $W$ by $Z$. In the third case, $\cR'$  contains a variable in $Z\setminus S$ because $K^r\subseteq K\subseteq Z$ and $\cR'$ is disjoint from $S\cup \tilde S\supseteq S$. Since $\cR'\subseteq \cR$, we conclude the correctness of the algorithm and
we now move on to the analysis of the running time and the size of the returned set $\cR$.

\paragraph{Bounding the set $\cR$.} 
Recall that since $\lambda$, which is bounded above by $k$, is required to be non-negative in a valid tuple, the depth of the search tree is bounded by $k$. Furthermore, the number of branches initiated at each node of the search tree is at most $k^3\cdot k!\cdot g(k)$, where $g(k)$ is the number of boundaried CSP instances in the set $\cH$ ($k^3$ for the choice of $\lambda'$, $j$ and $\ell'$ and $k!$ for the choice of the bijection $\delta$). Since Lemma \ref{lem:replacement2} guarantees a bound of $2^{2^{\eta k}}$ on $\vert \cH\vert$ for some constant $\eta$, we conclude that the number of internal nodes in the search tree is bounded by $2^{2^{\eta' k}}$ for some constant $\eta'$. Finally, since at each internal node we add at most $2k$ vertices (corresponding to $P_1\cup P_2$), we conclude that the set returned has size $2^{2^{\bigoh(k)}}$.

%
%
%

\paragraph{Running Time.} The analysis for the running time is similar to the proof of the bound on $\cR$. We already have a bound on the number of nodes of the search tree. The claimed bound on the running time of the whole algorithm follows from the observation that the time spent at each node of the search tree is dominated by the time required to execute the algorithms of Lemma \ref{lem:type1 solution} and Lemma \ref{lem:replacement2}, which in turn is bounded by $2^{2^{\bigoh(k)}}\vert \III\vert^{\bigoh(1)}$.
This completes the proof of the lemma.
\end{proof}
}


%
%
%
%

\begin{lemma}\label{lem:type2 solution}
 There is an algorithm that, given an instance $(\III,k,S,W_1,W_2)$ of {\ebdcomp}, runs in time $2^{2^{\bigoh(k)}}\vert \III\vert^{\bigoh(1)}$ and either computes a solution for this instance which is a $W_1$-$W_2$ separator or correctly concludes that no such solution exists.
\end{lemma}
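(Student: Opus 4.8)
The plan is to turn Lemma~\ref{lem:full algo} into a bounded-depth branching algorithm. We maintain a steadily growing partial solution, which we keep calling $S$; at each node we produce a small candidate set of variables guaranteed to contain an as-yet-uncommitted variable of some hypothetical $W_1$-$W_2$-separating solution, branch on adding one such variable to $S$, and recurse. Before branching we preprocess: we apply Preprocessing Rule~\ref{red:irrelevant} exhaustively (so that, by Lemma~\ref{lem:connectedtow}, every connected component of $\cB_S$ meets $W=W_1\cup W_2$), and we test whether Preprocessing Rule~\ref{red:rule S separates} applies. If it does, then $S$ itself disconnects $W_1$ from $W_2$, so \emph{every} solution extending $S$ is automatically a $W_1$-$W_2$ separator; by Lemma~\ref{lem:S separates} we split the instance into its $W_1$-side --- which is a non-separating instance, solvable by Lemma~\ref{lem:type1 solution} --- and a strictly smaller $W_2$-side instance, and recombine. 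If $|S|=k$, or $S$ is already a $W_1$-$W_2$-separating solution (verified via Lemma~\ref{lem:forbidden check}), we answer directly; otherwise we enter the branching step.

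For the branching step, we run the algorithm of Lemma~\ref{lem:full algo} on $\langle(\III,k,S,W_1,W_2),\lambda,\ell\rangle$ for every pair $(\lambda,\ell)\in\{1,\dots,k\}\times\{0,\dots,k\}$ and let $\cR$ be the union of the returned sets; we then set $\cR^{+}:=\cR\cup\bigcup_{v\in\cR}U_v$, where $U_v$ is the union of all important $v$-$W$ separators of size at most $k$ in $\cB_S$ (which by Lemma~\ref{lem:number of imp seps} can be enumerated in time $2^{\bigoh(k)}|\III|^{\bigoh(1)}$). Then $|\cR^{+}|\le 2^{2^{\bigoh(k)}}$ and $\cR^{+}$ can be computed in time $2^{2^{\bigoh(k)}}|\III|^{\bigoh(1)}$. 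For each $v\in\cR^{+}\setminus W$ we recurse on $(\III,k,S\cup\{v\},W_1,W_2)$; if some recursive call returns a $W_1$-$W_2$-separating solution we output it, and otherwise we report that none exists. Since $S$ is always contained in a target solution of size at most $k$, the recursion depth is at most $k$, the branching factor is $2^{2^{\bigoh(k)}}$, and the work per node (Lemma~\ref{lem:full algo}, Lemma~\ref{lem:number of imp seps}, the two preprocessing rules, and Lemma~\ref{lem:type1 solution}) is $2^{2^{\bigoh(k)}}|\III|^{\bigoh(1)}$, so the overall running time is $2^{2^{\bigoh(k)}}|\III|^{\bigoh(1)}$.

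For correctness it suffices to show that whenever a $W_1$-$W_2$-separating solution exists, some branch adds a variable that keeps such a solution alive. So suppose $Z$ is one and let $X\subseteq Z\setminus S$ be a minimal $W_1$-$W_2$ separator in $\cB_S$ (nonempty, since Rule~\ref{red:rule S separates} is inapplicable); let $\ell^{*}$ be the least $i$ for which $X$ is $i$-good. Using Lemma~\ref{lem:special separators replacement} together with the Monotonicity Lemma~\ref{lem:monotone}, we may repeatedly replace $X$ (and correspondingly $Z$) by a well-dominating separator: this keeps the solution $W_1$-$W_2$-separating, does not increase $|X|$ or the goodness of $X$, and strictly enlarges $R_{\cB_S}(W_1,X)$, so the process terminates with $X$ being $\ell^{*}$-important and $|X|,\ell^{*}\le k$. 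Now apply Lemma~\ref{lem:full algo} with $(\lambda,\ell)=(|X|,\ell^{*})$. In its first or third case $\cR$ meets $X\subseteq Z\setminus S$ or meets $Z\setminus S$, and adding any such variable to $S$ preserves $Z$. In the remaining case some $v\in\cR$ is separated from $W$ by $Z$; since every component of $\cB_S$ meets $W$, the set $Z\setminus S$ contains a minimal $v$-$W$ separator in $\cB_S$, and the argument underlying Lemma~\ref{lem:pushing} produces a solution $Z'$ containing an \emph{important} $v$-$W$ separator $B$ of size at most $k$. Crucially $Z'$ is still a $W_1$-$W_2$ separator: the $v$-side $R_{\cB_S}(v,B)$ of $B$ contains no vertex of $W=W_1\cup W_2$, so no $W_1$-$W_2$ path can avoid $Z'\supseteq S\cup B$. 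Since $B\subseteq U_v\subseteq\cR^{+}$, adding a variable of $B$ to $S$ preserves $Z'$, and the recursion on the enlarged $S$ finds a $W_1$-$W_2$-separating solution by induction on $k-|S|$.

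The step I expect to be the main obstacle is this last case: reusing the ``pushing'' mechanism of Lemma~\ref{lem:pushing} --- whose statement is phrased for non-separating solutions --- in the separating setting, and verifying that replacing the relevant part of $Z$ by an important $v$-$W$ separator does not destroy the $W_1$-$W_2$-separation property. A secondary, more bookkeeping-heavy point is ensuring that the whole recursion is well-founded: each added variable must either strictly advance the branching (larger $S$, same $k$) or trigger Rule~\ref{red:rule S separates} and hand control to a strictly smaller subproblem, and the $W_1$-side subproblems produced there must genuinely be \emph{non-separating} instances so that Lemma~\ref{lem:type1 solution} applies. Finally one must double-check that the normalisation of $X$ to an $\ell^{*}$-important separator leaves the accompanying solution $W_1$-$W_2$-separating at every step, which is exactly where the monotonicity of goodness along covering chains is used.
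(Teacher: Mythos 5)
Your proposal is correct and follows essentially the same route as the paper's proof: compute the candidate set by invoking Lemma~\ref{lem:full algo} over all pairs $(\lambda,\ell)$, branch on adding a candidate variable to $S$, invoke Preprocessing Rule~\ref{red:rule S separates} and Lemma~\ref{lem:type1 solution} when $S$ separates $W_1$ from $W_2$, and bound the work by a depth-$k$ recursion with doubly-exponential branching width. The one place you go beyond the paper is welcome rather than problematic: the paper branches only on $\cR$ and disposes of correctness in one sentence, whereas you explicitly handle the second outcome of Lemma~\ref{lem:full algo} (a variable of $\cR$ separated from $W$ by the solution) by enlarging the branching set with important $v$-$W$ separators and by checking that the ``pushing'' argument of Lemma~\ref{lem:pushing} survives in the separating setting and keeps the replacement solution a $W_1$-$W_2$ separator -- this is sound, costs nothing asymptotically, and makes explicit a step the paper leaves implicit.
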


\lv{
\begin{proof}
For every $0\leq \ell,\lambda\leq k$, we invoke Lemma \ref{lem:full algo} on the tuple $<(\III,k,S,W_1,W_2),\lambda,\ell>$ to compute a set $\cR_{\ell,\lambda}$. We then set $\cal R$ to be the set obtained by taking the union of the sets $\cR_{\ell,\lambda}$ for all possible values of $\ell$ and $\lambda$. Following that, we simply branch on which vertex $v$ in $\cR$ is added to $S$, creating a new instance $(\III,k,S\cup \{v\},W_1,W_2)$ of {\ebdcomp}. If $|S\cup \{v\}|>k$ we return \No. If Preprocessing Rule~\ref{red:rule S separates} applies, we use it to reduce the instance; if this results in a non-separating instance, we use Lemma~\ref{lem:type1 solution} to solve the instance. Otherwise, we iterate all of the above on this new instance. 

The bound on the running time of this algorithm follows from the total depth of the branching tree being bounded by $k$ and the width of each branch being bounded by $|\cR|\leq 2^{2^{\bigoh(k)}}$ (the cost of branching over $\ell, \lambda$ and of applying Lemma~\ref{lem:type1 solution} and Preprocessing Rule~\ref{red:rule S separates} is dominated by the above). The correctness follows from the correctness of Lemma~\ref{lem:full algo}, of Preprocessing Rule~\ref{red:rule S separates} and of Lemma~\ref{lem:type1 solution}. This completes the proof of the lemma and we now have our algorithm to handle separating instances of {\ebdcomp}.
\end{proof}

We conclude this section by combining the algorithms for separating and non-separating instances to present our complete algorithm for {\ebdcomp} (Lemma \ref{lem:ebdcomp algo}). 

\begin{proof}[Proof of Lemma~\ref{lem:ebdcomp algo}]
Let $(\III,k,S,W)$ be the input instance of {\ebdcomp}. We first apply Lemma \ref{lem:type1 solution} to check if there is a non-separating solution for this instance. If not, then we branch over all $W_1\subset W$ and for each such choice of $W_1$ we add the connecting gadget on $W_1$ and apply Lemma \ref{lem:type2 solution} to check if $(\III,k,S,W_1,W_2=W\setminus W_1 )$ has solution which is a $W_1$-$W_2$ separator.
The correctness and claimed running time bound both follow from those of Lemma \ref{lem:type1 solution} and Lemma \ref{lem:type2 solution}.
%
\end{proof} 
}

\section{Concluding Remarks}
We have presented an FPT algorithm that can find strong backdoors to
scattered base classes of CSP and \sharpCSP. This algorithm allows us
to lift known tractability results based on constraint languages from
instances over a single tractable language to instances containing a
mix of constraints from distinct tractable languages. The instances
may also contain constraints that only belong to a tractable language
after the backdoor variables have been instantiated, where different
instantiations may lead to different tractable languages.
Formally we have applied the algorithm to CSP and \#CSP, but it
clearly applies also to other versions of CSP, such as MAX-CSP
(where the task is to simultaneously satisfy a maximum number of
constraints) or various forms of weighted or valued CSPs. 

Our work opens up several avenues for future research. First of all, the runtime bounds for finding backdoors to scattered base classes provided in this work are very likely sub-optimal due to us having to obtain a unified algorithm for \emph{every} scattered set of finite constraint languages. Therefore, it is quite likely that a refined study for scattered classes of specific constraint languages using their inherent properties will yield significantly better runtimes. Secondly, graph modification problems and in particular the study of efficiently computable modulators to various graph classes has been an integral part of parameterized complexity and has led to the development of several powerful tools and techniques. We believe that the study of modulators to `scattered graph classes' could prove equally fruitful and, as our techniques are mostly graph based, our results as well as techniques could provide a useful starting point towards future research in this direction.

\sv{\newpage}
\bibliographystyle{abbrv}
\bibliography{references}

\end{document}